\tikzstyle{decision} = [diamond, draw, fill=black!15, 
\tikzstyle{block} = [rectangle, draw, fill=black!15, 
\tikzstyle{line} = [draw, -latex']
\tikzstyle{cloud} = [draw, ellipse,fill=red!20, node distance=3cm,
\theoremstyle{plain}
\theoremstyle{plain}
\newtheorem{thm}{Theorem}
  \theoremstyle{plain}
  \newtheorem{lem}[thm]{Lemma}
  \theoremstyle{plain}
  \newtheorem{defc}[thm]{Definition}
  \theoremstyle{plain}
  \newtheorem{prop}[thm]{Proposition}
  \theoremstyle{remark}
  \newtheorem*{note*}{Note}
  \theoremstyle{remark}
  \newtheorem*{conclusion*}{Conclusion}
  \theoremstyle{remark}
  \newtheorem{note}[thm]{Note}
 \theoremstyle{definition}
  \newtheorem{example}[thm]{Example}
  \theoremstyle{plain}
\newtheorem{theorem}{Theorem}[section]
\newtheorem{proposition}[theorem]{Proposition}
\newtheorem{lemma}[theorem]{Lemma}
\newcommand{\abs}[1]{\lvert#1\rvert}
\newcommand{\dif}{\textrm{\textbf{d}}}
\newcommand{\Img}[1]{\,\,\text{Im}\,#1}
\newcommand{\cD}{{\mathcal D}}
\newcommand{\cI}{{\mathcal I}}
\newcommand{\cJ}{{\mathcal J}}
\newcommand{\cL}{{\mathcal L}}
\newcommand{\mR}{\mathbb{R}}
\newcommand{\Implies}[2]{$\text{\ref{#1}}\implies\text{\ref{#2}}$}
\newcommand{\pushright}[1]{\ifmeasuring@#1\else\omit\hfill$\displaystyle#1$\fi\ignorespaces}
\renewenvironment{reduce}{\verbatim}{\endverbatim}
\newcounter{VerbatimLineNo}
\newif\ifVerbatimLineNos
\let\orig@verbatim@processline\verbatim@processline
\def\verbatim@processline{\ifVerbatimLineNos
  \stepcounter{VerbatimLineNo}\leavevmode 
  \makebox[1.3em][r]{\rm\footnotesize\theVerbatimLineNo}\ %
  \the\verbatim@line\par%
  \else\orig@verbatim@processline%
  \fi}
\def\StopVerbatimLineNos{\VerbatimLineNosfalse}
\def\ResumeVerbatimLineNos{\VerbatimLineNostrue}
\begin{document}

\title[Cartan algorithm and Dirac constraints]{Cartan algorithm and Dirac constraints for Griffiths variational problems}

\author{H. Cendra}\email{hcendra@uns.edu.ar}
\author{S. Capriotti}\email{santiago.capriotti@uns.edu.ar}


\address{Departamento de Matem\'atica, \mbox{Universidad Nacional del Sur}, 8000 Bah\'{\i}a Blanca, Argentina.}

\maketitle

\begin{abstract}
  Dirac algorithm allows to construct Hamiltonian systems for singular systems, and so contributing to its successful quantization. A drawback of this method is that the resulting quantized theory does not have manifest Lorentz invariance. This motivated the quest of alternatives to the usual Hamiltonian theory on the space of sections; a particular instance of this search is the so called \emph{restricted Hamiltonian theory}, where the equations of motion of a field theory are formulated by means of a \emph{multisymplectic structure}, a kind of generalization of the symplectic form to the multidimensional context, and even a constraint algorithm working in this context has been proposed. In the present article we will try to provide partial aswers to two questions intimately related with these issues: First, to assign multisymplectic spaces to variational problems in the Griffiths formalism in such a way that the equations of motion can be written as restricted Hamiltonian systems, and second, to propose a covariant Dirac-like algorithm suitable to work with them; it must be recalled that given the Griffiths formalism contains the classical variational problems as particular instances, it yields to a novel covariant algorithm for deal with constraints in field theory. Moreover, in this formulation the constraint algorithm becomes simply the Cartan algorithm designed for deal with Pfaffian systems.
\end{abstract}

\tableofcontents

\section{Introduction}

\subsection{On the search of Hamiltonian structures for variational problems}

The Dirac algorithm is a useful tool in classical mechanics, allowing to find Hamiltonian descriptions for singular Lagrangians. It is a procedure aiming to locate a subset of the phase space, which will be called \emph{final constraint set}, characterized by the property that every of its points is contained in some solution of the system. This set can be obtained from the so called \emph{Gotay algorithm} \cite{GotayNester,cendra08:_geomet_const_algor_dirac_manif_applic}.

In the search of an analogous procedure for field theory, we first need to confront a fundamental issue: While there are essentially just only one possible Hamiltonian theory in classical mechanics, it is not the case in field theory, where multiple Hamiltonian schemes have been designed \cite{helein01hamiltonian,EcheverriaEnriquez:2005ht,Gunther198723,Echeverria-EnrIquez20007402,Munteanu20041730,Gotay1991203,KrupkaVariational}. Once a framework is fixed, there are a number of proposals on the way to proceed when the equations show some singular behaviour; we will explore in some depth just two approaches to the theme, those described in the references \cite{zbMATH02233555} and \cite{Seiler95involutionand}. These works proceed in different directions: The former uses restricted (pre)multisymplectic formalism in order to set a problem resembling the initial setting of Gotay algorithm, and then divides the algorithm in two steps, the first attacking the problem of \emph{tangency} of the solutions, and the second solving \emph{integrability} issues. The latter apply formal theory of PDE to the PDE system underlying Euler-Lagrange equations; although it requires to introduce local coordinates and so seems to be a less geometrical approach than the former, it has the advantage of dealing directly with integrability matters.

In order to remedy in some extent the difficulties found in the formal PDE approach, we will try to represent the underlying PDE system in terms of an object of geometrical nature, namely, by working in the realm of \emph{exterior differential systems} (EDS from now on) \cite{BCG}, which are ideals in the exterior algebra of a manifold closed under exterior differentiation. These geometrical objects can be used to encode in a geometrical fashion any PDE system, and have at the same time two crucial properties, namely they can be easily ``restricted'' to a submanifold, and there exists a way to ensures the integrability of the underlying PDE system, through the concept of \emph{involutivity}. It results that not every EDS has this property; nevertheless, there exists a procedure, called \emph{Cartan algorithm} \cite{CartanBeginners}, allowing us to construct an involutive EDS from any EDS of a particular kind, the so called \emph{linear Pfaffian EDS}.

It is our purpose here to make some contributions along these lines, by showing how to use Cartan algorithm, mainly designed for dealing with linear Pfaffian EDS, as a kind of Gotay algorithm, namely, as a way to bring into light the hidden constraints of such theories. It will draw upon some aspects of each of the approaches mentioned above, because it will employ the restricted Hamiltonian formalism, but making use at the same time of the tools available in theory of exterior differential systems in order to tackle the tangency and integrability issues in a quite natural and unified framework. It is interesting to note that a similar approach, in case of classical mechanics, was studied in \cite{Shadwicketal}; we will repeat the analysis carried out in this article within our framework (see Section \ref{Sec:ClassicalExample}) in order to be able to make comparisons between the procedures.

Additionally, it is important to make clear the range of problems where this algorithm could be applied; from a general viewpoint, the basic structure we need in order to be able to work with this procedure is a bundle $W$ on a manifold $M$ of dimension $m$ together with a closed $m+1$-form, namely assuming $W$ is a (pre)multisymplectic manifold of order $m+1$ \cite{JAZ:4974756}. This scope allows us to use the algorithm on variational problems of more general nature than those encountered in field theory; we are referring here to the variational problems in the so called \emph{Griffiths formalism} \cite{book:852048,hsu92:_calcul_variat_griff}. By using results of \cite{GotayCartan} and \cite{W-1982}, we will see that for those variational problem of this kind for which the \emph{canonical Lepage-equivalent variational problem} is contravariant, it is possible to define restricted Hamiltonian-like systems\footnote{In fact, an scheme for \emph{extended} Hamiltonian systems can be developed either, see Appendix \ref{App:EXtendedHamSys}.} for its underlying variational principle. After that, the scheme will be completed, and the algorithm could be applied. A consequence of the wider scope of our constructions is that we will be able to construct restricted Hamiltonian systems for PDE systems not directly related to classical variational problems, see for example Section \ref{Sec:EDSStrong}. In these cases the outcome of Cartan algorithm are the integrability conditions for the PDE system, as expected; thus we are dealing with an scheme that puts on equal footing Dirac-like constraints and integrability conditions of PDE systems.

So let us describe briefly the structure of this article: In Section \ref{Sec:MultipleVersions} we will discuss some equivalent formulations of the equations of motion associated to a Lepage-equivalent problem; the main result is the restricted Hamilton system form for these equations, valid in the general framework of Griffiths variational problems (see item \ref{statement4} in Proposition \ref{Prop:OursEquivalentOnVars}.) Section \ref{Sec:HamiltonianVersion} deals with the construction of a particular kind of Lepage-equivalent problem, carried out in first place for classical variational problems and generalized later for variational problems of broader nature (viz.\ Definition \ref{Def:RestHamSsytForVarsProb}.) The connection with Cartan algorithm is made into Section \ref{Sec:GotayAlgorithm}; the central object is the so called \emph{Hamilton submanifold}, which is a submanifold in a Grassmann bundle representing the restricted Hamiltonian system. A crucial feature of this submanifold is that it gives rise to a linear Pfaffian system through the pullback of the canonical contact structure, suitable for the application of the Cartan algorithm and thus producing a subset where involutivity can be achieved. Finally, Section \ref{Sec:Examples} deals with the applications of these ideas to concrete examples. They cover diverse topics, showing the versatility of the scheme; concretely, we carried out with enough detail the procedure in case of variational problems associated to classical mechanics, first order field theory and PDE systems with integrability conditions. It is important to point out that the whole calculations involved in these examples were performed by using the package \emph{EDS} \cite{IOPORT.01307045} of the computer algebra system \emph{Reduce} \cite{Hearn:1967:RUM}; some indications on the actual implementation are discussed in the example dealing with Maxwell equations.

\section{Multiple versions for the variational equations}\label{Sec:MultipleVersions}


\subsection{Geometrical preliminaries}

\subsubsection{Variational problems}

It will be necessary to introduce the basic language we will use in the rest of the article. The first concept we will introduce is a generalization of regular distributions on a manifold.
\begin{defc}[Exterior differential system]
  An \emph{exterior differential system} on a manifold $M$ is an ideal in its exterior algebra\footnote{Recall that the \emph{exterior algebra of a manifold $M$} is the algebra of sections of the bundle $\wedge^\bullet M\rightarrow M$.}, closed by exterior differentiation.
\end{defc}
We will assume the reader knows the basic facts related to these geometrical objects; the standard references are \cite{BCG,CartanBeginners}, and \cite{nkamran2000} can be found helpful. Our next task is to set what a variational problem is in this context; the following definition is extracted from \cite{GotayCartan}.
\begin{defc}[Variational problem]
  Given a triple $\left(F\stackrel{\pi_1}{\longrightarrow}M,\lambda,\cI\right)$ where $F\rightarrow M$ is a bundle on a manifold of dimension $m$, $\lambda\in\Omega^m\left(F\right)$ and $\cI$ is an EDS on $F$, the associated \emph{variational problem} consists in finding the extremals of the map
  \[
  \sigma\mapsto\int_M\sigma^*\lambda
  \]
  for $\sigma$ living in the set of integral sections of the EDS $\cI$ for the bundle $F$.
\end{defc}
\begin{example}[Classical variational problem]
  The classical variational problem is the variational problem (in the sense of the previous definition) equivalent to the first order field theory, consisting into the choices $F:=J^1\pi$ for the bundle of fields $\pi:E\rightarrow M$,
\[
\cI:=\cI_{\text{con}}=\text{ contact structure of the jet space}
\]
and $\lambda:=L\eta$, where $L$ is a function on $J^1\pi$ and $\eta$ is a volume form.
\end{example}
In the search of equations for a variational problem, there are some subtleties to be taken into account (see \cite[p.\ 40]{KrupkaVariational}); nevertheless, there exists a way to found an EDS format of the equations of motion, through the concept of Lepage-equivalence.
\begin{defc}[Lepage-equivalent problem]\label{Def:LepEquivProblem}
  A variational problem of the form
  \[
  \left(W\stackrel{\rho}{\longrightarrow}M,\alpha,0\right)
  \]
  is a \emph{Lepage-equivalent problem} for the variational problem $\left(F\stackrel{\pi_1}{\longrightarrow}M,\lambda,\cI\right)$ if there exists a surjective submersion $\nu:W\rightarrow F$ such that
  \begin{enumerate}[(i)]
  \item The following diagram is commutative
    \[
    \begin{diagram}
      \node{W}\arrow{e,t}{\nu}\arrow{se,b}{\rho}\node{F}\arrow{s,r}{\pi_1}\\
      \node[2]{M}
    \end{diagram}
    \]
  \item For every section $\gamma:M\rightarrow W$ of $\rho$ such that $\nu\circ\gamma$ is an integral section of $\cI$, we have that
    \[
    \gamma^*\alpha=\left(\nu\circ\gamma\right)^*\lambda.
    \]
  \end{enumerate}
\end{defc}
The first thing we need to know is that the equations for extremals of a Lepage-equivalent problem \emph{are easy to obtain}; it will be proved in Proposition \ref{Prop:OursEquivalentOnVars} below. The question is whether the set of extremals is preserved in some way when we change a variational problem with one of its Lepage-equivalents; in general it is not true, so it is necessary to introduce the following terminology.
\begin{defc}[Covariant and contravariant Lepage-equivalent problems]
  A Lepage-equiva\-lent problem is \emph{covariant} if the projection along $\nu$ of every of its extremals is an extremal of the original variational problem; in the same vein, we say that it is \emph{contravariant} if every extremal of the original problem can be lift through $\nu$ to an extremal of the Lepage-equivalent.
\end{defc}
The following result can be found in \cite{GotayCartan}.
\begin{prop}
  The classical variational problem admits a Lepage-equivalent problem co- and contravariant.
\end{prop}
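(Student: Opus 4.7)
The plan is to cite the classical construction of the Poincaré--Cartan (Lepage) form on the first-order jet bundle, verify that it satisfies Definition \ref{Def:LepEquivProblem}, and then check the co- and contravariance clauses by analysing the variational equations on both sides.

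\smallskip

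\textbf{Construction.} I would take $W := J^1\pi$, $\nu := \id_{J^1\pi}$, $\rho := \pi_1$, and
\[
\alpha := \Theta_L = L\eta + \frac{\partial L}{\partial v^a_i}\,\theta^a\wedge\eta_i,
\]
where $\theta^a = du^a - v^a_j\,dx^j$ are the generators of the contact structure $\cI_{\text{con}}$ and $\eta_i = \iota_{\partial_{x^i}}\eta$. The commutative-diagram condition of Definition \ref{Def:LepEquivProblem} is automatic. For the pullback condition, if $\gamma:M\to J^1\pi$ is a section of $\pi_1$ such that $\nu\circ\gamma = \gamma$ is an integral section of $\cI_{\text{con}}$, then by definition $\gamma^*\theta^a = 0$ for every $a$, so
\[
\gamma^*\alpha = \gamma^*(L\eta) + \frac{\partial L}{\partial v^a_i}\!\circ\gamma\;(\gamma^*\theta^a)\wedge(\gamma^*\eta_i) = \gamma^*\lambda.
\]

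\smallskip

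\textbf{Co- and contravariance.} This part hinges on the well-known local expression
\[
d\Theta_L = \Bigl(\frac{\partial L}{\partial u^a} - \frac{d}{dx^i}\frac{\partial L}{\partial v^a_i}\Bigr)\theta^a\wedge\eta \;+\; \frac{\partial L}{\partial v^a_i}\,d\theta^a\wedge\eta_i \;+\; (\text{terms of contact order }\ge 2).
\]
The equations for extremals of the Lepage-equivalent problem (whose explicit EDS form will be established in Proposition \ref{Prop:OursEquivalentOnVars} below) amount to $\gamma^*(\iota_X d\Theta_L) = 0$ for every $\rho$-vertical vector field $X$ on $J^1\pi$. Plugging in the two classes of vertical fields ($\partial/\partial u^a$ and $\partial/\partial v^a_i$) and using the above formula, the $\partial/\partial v^a_i$-component yields $\gamma^*\theta^a = 0$ (the holonomy condition, i.e.\ that $\gamma$ is an integral section of $\cI_{\text{con}}$), while the $\partial/\partial u^a$-component yields, on holonomic $\gamma = j^1\phi$, the Euler--Lagrange equations for $\phi$. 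This gives covariance. Conversely, if $\phi$ is an extremal of the original problem, then $\gamma := j^1\phi$ is a section of $\pi_1$ along which the contact forms vanish and which satisfies the Euler--Lagrange equations, so all components of $\gamma^*(\iota_X d\Theta_L)$ vanish; this is contravariance.

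\smallskip

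\textbf{Expected obstacle.} The routine checks are the commutativity of the diagram and the pullback identity; the delicate step is the holonomy conclusion, because extracting $\gamma^*\theta^a = 0$ from the vertical $\partial/\partial v^a_i$-components of $\gamma^*(\iota_X d\Theta_L) = 0$ requires a non-degeneracy argument on the coefficient matrix that in low fibre dimension may fail without an auxiliary regularity assumption or without enlarging $W$ to an affine dual multimomentum bundle. For this reason I would, in a fully general treatment, replace $W = J^1\pi$ by an appropriate bundle of $m$-forms on $E$ and take $\alpha$ to be the tautological Lepage form there, which is the construction worked out in \cite{GotayCartan}; the verification then proceeds as above but the non-degeneracy issue disappears by design. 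The remainder of the argument follows the outline given, and the result is precisely the statement cited from \cite{GotayCartan}.
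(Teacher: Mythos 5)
The paper itself contains no proof of this proposition: it is quoted verbatim from \cite{GotayCartan}, with the remark that it is proved there via the \emph{classical Lepage-equivalent} construction (the affine subbundle of $m$-forms $L\eta+(\text{contact terms})$ carrying the restriction of the tautological $m$-form, recalled later in the paper as $W_{L\eta}$). Your closing fallback paragraph therefore lands exactly on the paper's (cited) argument. The issue is with the construction you actually work out: taking $W=J^1\pi$, $\nu=\mathrm{id}$, $\alpha=\Theta_L$ does produce a Lepage-equivalent problem in the sense of Definition \ref{Def:LepEquivProblem}, and it is contravariant (the first-variation formula shows that $j^1\phi$ annihilates $\iota_X\dif\Theta_L$ for every $X$ when $\phi$ solves Euler--Lagrange), but it is \emph{not} covariant for a general classical variational problem. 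The contraction of $\dif\Theta_L$ with $\partial/\partial v^a_i$ pulls back along $\gamma$ to $\frac{\partial^2L}{\partial v^a_i\,\partial v^b_j}\,\gamma^*\!\left(\theta^b\wedge\eta_j\right)=0$, so the holonomy conclusion $\gamma^*\theta^b=0$ requires invertibility of the Hessian. For a degenerate $L$ (e.g.\ $L=0$, where $\Theta_L=0$ and \emph{every} section of $J^1\pi\to M$ is an extremal of the equivalent problem) there exist non-holonomic extremals whose projection under $\nu=\mathrm{id}$ is not even an integral section of $\cI_{\text{con}}$, hence not an extremal of the original problem. Since the proposition carries no regularity hypothesis, your primary route does not establish it.

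You correctly diagnose this under ``Expected obstacle'' and point to the right repair --- enlarging $W$ to a bundle of $m$-forms over $E$ (equivalently, the multimomentum picture) and using the tautological form, which is Gotay's canonical Lepage equivalent. But that replacement is where the entire content of the covariance proof lives (one must show that extremals of the enlarged problem are forced, by the extra momentum directions, to project onto holonomic Euler--Lagrange solutions), and neither your proposal nor the paper carries it out; both defer to \cite{GotayCartan}. So the net assessment is: your worked-out argument proves the proposition only for hyperregular Lagrangians; for the general statement you, like the paper, are ultimately citing rather than proving the result.
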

This proposition is proved there by using an standard construction known as \emph{classical Lepage-equivalent problem}; in Section \ref{Sec:HamiltonianVersion} we will adapt this construction in order to found a restricted Hamiltonian system for a variational problem.

\subsubsection{Several descriptions for EDS}

Before to continue with the article, it is necessary to introduce some additional terminology, intended to reduce ambiguities in the discussions we will perform below. The most easy way to generate an ideal in $\Omega^\bullet\left(M\right)$ is by means of a set of forms.
\begin{defc}[EDS generated by a set of forms]
  Let $S:=\left\{\alpha_i\right\}_{i\in I}\subset\Omega^\bullet\left(M\right)$ be a set of forms on $M$. The EDS \emph{generated by $S$} is the minimal ideal (respect to inclusion) closed by exterior differentiation containing the set $S$.
\end{defc}
There exists a more or less explicit description for the EDS generated by a \emph{finite} set of forms.
\begin{prop}
  Let $S_0:=\left\{\alpha_1,\cdots,\alpha_s\right\}$ be a finite set of forms. Then the EDS $\cI_{S_0}$ generated by $S_0$ is the set
  \[
  \cI_{S_0}:=\left\{\beta^1\wedge\alpha_1+\cdots+\beta^s\wedge\alpha_s+\gamma^1\wedge\dif\alpha_1+\cdots+\gamma^s\wedge\dif\alpha_s:\beta^i,\gamma^j\in\Omega^\bullet\left(M\right)\right\}.
  \]
\end{prop}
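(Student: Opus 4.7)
The plan is to establish the equality by double inclusion, writing $J$ for the right-hand side of the stated equality. One direction is immediate: since $\cI_{S_0}$ contains $\alpha_1,\ldots,\alpha_s$, is closed under $\dif$ (so it contains each $\dif\alpha_i$), and is an ideal (so it absorbs wedge products with arbitrary forms) and closed under addition, every element of the form $\sum_i \beta^i\wedge\alpha_i+\sum_j \gamma^j\wedge\dif\alpha_j$ lies in $\cI_{S_0}$. This gives $J\subseteq\cI_{S_0}$ without further work.

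The substantive direction is $\cI_{S_0}\subseteq J$, and here I would argue by the minimality property in the definition of the EDS generated by a set. It suffices to show that $J$ is itself an EDS containing $S_0$. Containment of $S_0$ follows by taking $\beta^i=1$, all other $\beta^k$ and $\gamma^j$ equal to zero. That $J$ is an $\mR$-linear subspace is evident from its description as a sum of two parametrized families. To see that $J$ is an ideal of $\Omega^\bullet(M)$, given $\omega\in\Omega^\bullet(M)$ and an element $\sum_i\beta^i\wedge\alpha_i+\sum_j\gamma^j\wedge\dif\alpha_j\in J$, I simply distribute the wedge product to obtain $\sum_i(\omega\wedge\beta^i)\wedge\alpha_i+\sum_j(\omega\wedge\gamma^j)\wedge\dif\alpha_j$, which has exactly the required form.

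The key step is closure under $\dif$. Applying the graded Leibniz rule to a generic element yields
\[
\dif\!\left(\sum_i\beta^i\wedge\alpha_i+\sum_j\gamma^j\wedge\dif\alpha_j\right)=\sum_i\dif\beta^i\wedge\alpha_i+\sum_i(-1)^{\deg\beta^i}\beta^i\wedge\dif\alpha_i+\sum_j\dif\gamma^j\wedge\dif\alpha_j,
\]
where the $\gamma^j\wedge\dif(\dif\alpha_j)$ terms vanish because $\dif^2=0$. The right-hand side is again of the form defining $J$, with new $\beta'^i:=\dif\beta^i$ and $\gamma'^j:=(-1)^{\deg\beta^j}\beta^j+\dif\gamma^j$. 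A small caveat is that the expression $(-1)^{\deg\beta^i}$ presupposes $\beta^i$ is homogeneous; this is handled by first decomposing each inhomogeneous form into its homogeneous components and then reassembling, which is legitimate since the defining family of $J$ is closed under finite sums.

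I expect the only subtlety to be this bookkeeping with signs and homogeneous pieces; it is not a deep obstacle, merely a point where one must be careful not to conclude the result prematurely from the graded Leibniz identity. Once $J$ is shown to be an EDS containing $S_0$, the minimality clause of the definition of $\cI_{S_0}$ forces $\cI_{S_0}\subseteq J$, completing the proof.
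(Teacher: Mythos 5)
Your argument is correct. Note that the paper states this proposition without proof, treating it as a standard fact about differentially generated ideals, so there is no authorial argument to compare against; your double-inclusion strategy via the minimality clause is the standard and complete way to establish it. The easy inclusion ($J\subseteq\cI_{S_0}$) follows exactly as you say from the ideal, additive, and $\dif$-closure properties of $\cI_{S_0}$, and the substantive inclusion reduces, as you observe, to checking that $J$ is itself an ideal containing $S_0$ and closed under $\dif$. The only point requiring care is the one you flag: the graded Leibniz rule produces the term $(-1)^{\deg\beta^i}\beta^i\wedge\dif\alpha_i$, whose sign only makes sense for homogeneous $\beta^i$, and your remedy of splitting each coefficient form into homogeneous components and reabsorbing the resulting pieces (legitimate because $J$ is visibly closed under the substitutions $\beta^i\mapsto\beta^i+\tilde\beta^i$, $\gamma^j\mapsto\gamma^j+\tilde\gamma^j$) closes that gap. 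No further issues.
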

There exists another way to generate EDS, which will be very important in the present work, namely, by using local sections of a subbundle of forms. The next definition, which has been adapted from \cite[Prop.\ $2.28\left(b\right)$]{MR0295244}, describes this kind of EDS.
\begin{defc}[EDS generated by sections]\label{Def:EDSGenerated}
  Let $\cI\subset\Omega^\bullet M$ be an EDS and $I\subset\wedge^\bullet\left(M\right)$ a subbundle of the bundle of forms on $M$. We will say that $\cI$ is \emph{locally generated by the sections of $I$} if and only if there exists an open cover $\left\{U\right\}$ of $M$ such that for every open set $U$ in the cover
  \begin{enumerate}[(i)]
  \item any section $\sigma:U\rightarrow I$ is an element of $\left.\cI\right|_U$, and
  \item for every $\beta\in\cI$ we can find a finite collection of local sections $\sigma_1,\cdots,\sigma_s:U\rightarrow I$ and functions $f_1,\cdots,f_s,g_1,\cdots,g_s$ on $U$ such that
  \[
  \left.\beta\right|_U=f_1\sigma_1+\cdots+f_s\sigma_s+g_1\dif\sigma_1+\cdots+g_s\dif\sigma_s.
  \]
  \end{enumerate}
\end{defc}
The first condition guarantees minimality of the rank of the subbundle $I$; without it, nothing prevents us to take the total space $\wedge^\bullet M$ as subbundle fulfilling the second requirement in the definition.
\begin{example}[The contact structure is generated by sections]\label{Ex:ContactGenerated}
  Let us recall that for every adapted coordinate chart $U$, the EDS $\left.\cI_{\text{con}}\right|_U$ contains the forms
  \[
  \theta^A:=\dif u^A-u^A_k\dif x^k;
  \]
  in fact, this EDS is differentially generated by them. Thus we can define the subbundle $I_c\subset\wedge^\bullet\left(J^1\pi\right)$ whose fibers are given by
  \[
  \left.I_c\right|_{j_x^1s}:=\left\{\sum_A\alpha_A\wedge\left.\theta^A\right|_{j_x^1s}:\alpha_A\in\wedge^\bullet_{j_x^1s}\left(J^1\pi\right)\right\};
  \]
  the local sections of $I_c$ generate $\cI_{\text{con}}$ in the sense of Definition \ref{Def:EDSGenerated}.
\end{example}

\subsection{Multiple versions of the variational equations}
In this section we will describe how to construct a (pre)multisymplectic manifold for every variational problem with a co- and contravariant Lepage-equivalent variational problem. In order to achieve this goal, it will be important to review the way in which such construction is carried out in classical field theory. So let $\pi:E\rightarrow M$ be a fibration, with $M$ a compact manifold with $\partial M=\emptyset$. Then we have the following bundles
\[
\begin{diagram}
  \node{J^1\pi}\arrow{se,b}{\pi_{10}}\node{\wedge{}^m_2E=:\mathcal{M}\pi}\arrow{s,l}{}\arrow{e,t}{\mu}\node{J^1\pi^*:=\wedge{}^m_2E/\wedge{}^m_1E}\arrow{sw,b}{\bar\tau}\\
  \node[2]{E}\arrow{s,l}{}\\
  \node[2]{M}
\end{diagram}
\]
The space $\mathcal{M}\pi\subset\wedge{^mE}$ is a multisymplectic space with an $m+1$-form $\Omega$, by using the differential of the restriction $\Theta$ of the canonical $m$-form on $\wedge{}^mE$.

\begin{defc}[Restricted Hamiltonian system - provisional definition]
  The \emph{restricted Hamiltonian system} in the classical setting is the couple $\left(J^1\pi^*,h\right)$ where $h:J^1\pi^*\rightarrow\mathcal{M}\pi$ is a section of $\mu$, known as \emph{Hamiltonian section}.
\end{defc}
We will take this definition as provisional, because we will replace it below with another, more general (see Definition \ref{Def:NewDefRestHamSystem}), suitable to work with the general variational problems we will look into in this work.

In classical first order field theory given by $\left(J^1\pi\stackrel{\pi}{\rightarrow}M,L\eta,\cI_{\text{con}}\right)$, there exists a procedure \cite{de1996geometrical,2011arXiv1110.4778C} allowing us (under mild conditions) to find both \emph{extended} and \emph{restricted} Hamiltonian systems on $\mathcal{M}\pi$ and $J^1\pi^*$ respectively, whose solutions can be put into correspondence; we refer to this reference for details. Briefly, given a \emph{hyperregular} Lagrangian $L$, it can be defined a Hamiltonian section $h:J^1\pi^*\rightarrow\mathcal{M}\pi$ given by
\[
h:\left(x^i,u^A,p_B^k\right)\mapsto\left(x^i,u^A,p_B^k,u_k^Bp_B^k-L\right)
\]
such that there exists a one-to-one correspondence between the solutions of the Euler-Lagrange equations for $L$ and the solutions of the restricted Hamiltonian system $\left(J^i\pi^*,h\right)$; in this formula it is considered that $u^A_k$ is a function on $J^1\pi^*$ defined through the equation
\[
p_A^k=\frac{\partial L}{\partial u_k^A}.
\]
Now, there exists several ways to formulate the equations associated to the restricted Hamiltonian system \cite{EcheverriaEnriquez:2005ht}:
\begin{itemize}
\item By seeking sections $\sigma:M\rightarrow J^1\pi^*$ which are extremals for the action integral
  \[
  \sigma\mapsto\int_M\sigma^*\Theta_h,
  \]
  where $h:J^1\pi^*\rightarrow\mathcal{M}\pi$ is a \emph{Hamiltonian section} of $\mu$ and
  \[
  \Theta_h:=h^*\Theta.
  \]
\item By finding $m$-dimensional integral submanifolds of the EDS
  \[
  \cI_{\text{HC}}:=\left\langle X\lrcorner\Omega_h:X\in\mathfrak{X}^{V\left(\bar\tau\right)}\left(J^1\pi^*\right)\right\rangle 
  \]
  where $\Omega_h:=h^*\Omega$, transverse to $\bar\tau$.
\item By discovering $m$-dimensional integral submanifolds of the EDS
  \[
  \cI_{\text{HC}}:=\left\langle X\lrcorner\Omega_h:X\in\mathfrak{X}\left(J^1\pi^*\right)\right\rangle ,
  \]
  transverse to $\bar\tau$.
\item By seeking integrable $m$-multivectors $X_h\in\Gamma\left(\wedge{}^mJ^1\pi^*\right)$ such that
  \[
  X_h\lrcorner\Omega_h=0.
  \]
\end{itemize}
When starting with a general variational problem $\left(F\rightarrow E\rightarrow M,\lambda,\cI\right)$ admitting a co- and contravariant Lepage equivalent variational problem $\left(W\rightarrow M,\alpha,0\right)$, the following result sets some equivalences analogous to the enumerated above.
\begin{prop}\label{Prop:OursEquivalentOnVars}
  Let $\left(W\stackrel{\tau}{\rightarrow}M,\alpha,0\right)$ be a variational problem. Then the following assertions are equivalent for a section $\sigma:M\rightarrow W$:
  \begin{enumerate}[label=(\arabic*),ref=(\arabic*)]\label{}
  \item $\sigma$ is an extremal for the variational problem
    \[
    \sigma\mapsto\int_M\sigma^*\alpha.
    \]\label{statement1}
  \item $\sigma$ is an integral section of the EDS
    \[
    \cI_{\text{HC}}:=\left\langle X\lrcorner\dif\alpha:X\in\mathfrak{X}^{V\left(\tau\right)}\left(W\right)\right\rangle .
    \]\label{statement2}
  \item $\sigma$ is an integral section of the EDS
    \[
    \overline{\cI}_{\text{HC}}:=\left\langle X\lrcorner\dif\alpha:X\in\mathfrak{X}\left(W\right)\right\rangle .
    \]\label{statement3}
  \item $\sigma$ is integral for a local decomposable $m$-multivector $Z_m\in\Gamma\left(\wedge{}^mW\right)$ such that
    \[
    Z_m\lrcorner\dif\alpha=0.
    \]\label{statement4}
  \end{enumerate}
\end{prop}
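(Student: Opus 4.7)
The plan is to establish the chain of equivalences $(1)\Leftrightarrow(2)\Leftrightarrow(3)\Leftrightarrow(4)$. The three pieces have rather different flavours: $(1)\Leftrightarrow(2)$ is a standard first-variation computation, $(2)\Leftrightarrow(3)$ is a dimensional observation exploiting $\dim M=m$, and $(3)\Leftrightarrow(4)$ is a pointwise reformulation via decomposable multivectors. For $(1)\Leftrightarrow(2)$, the admissible infinitesimal variations of a section $\sigma$ of $\tau$ are generated by compactly supported $\tau$-vertical vector fields $X\in\mathfrak{X}^{V(\tau)}(W)$, so $\sigma$ is an extremal iff
\[
0=\left.\frac{\dif}{\dif t}\right|_{t=0}\int_M\sigma_t^{*}\alpha=\int_M\sigma^{*}\mathcal{L}_X\alpha
\]
for every such $X$. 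Cartan's magic formula $\mathcal{L}_X\alpha=X\lrcorner\dif\alpha+\dif(X\lrcorner\alpha)$ together with Stokes' theorem discards the second summand (using either compact support or $\partial M=\emptyset$), so extremality becomes $\int_M\sigma^{*}(X\lrcorner\dif\alpha)=0$ for every compactly supported vertical $X$. A standard bump-function localisation then yields the pointwise condition $\sigma^{*}(X\lrcorner\dif\alpha)=0$, which is exactly (2).

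For $(2)\Leftrightarrow(3)$, only $(2)\Rightarrow(3)$ requires argument. Given an arbitrary $X\in\mathfrak{X}(W)$, the section condition $\tau\circ\sigma=\id_M$ guarantees that $\sigma_{*}(T_xM)$ and $V_{\sigma(x)}(\tau)$ are complementary in $T_{\sigma(x)}W$; hence along $\sigma(M)$ one may write $X\circ\sigma=\sigma_{*}Y+X^V$ with $Y$ a local vector field on $M$ and $X^V$ $\tau$-vertical. Extending $X^V$ to a vertical vector field in a neighbourhood, (2) yields $\sigma^{*}(X^V\lrcorner\dif\alpha)=0$, while the horizontal part satisfies
\[
\sigma^{*}\bigl((\sigma_{*}Y)\lrcorner\dif\alpha\bigr)=Y\lrcorner\sigma^{*}\dif\alpha=0
\]
because $\sigma^{*}\dif\alpha$ is an $(m+1)$-form on the $m$-dimensional manifold $M$. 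Adding the two contributions proves (3).

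For $(3)\Leftrightarrow(4)$, I would work in a coordinate chart $(x^i)$ on $M$ and associate with $\sigma$ the decomposable multivector
\[
Z_m:=\sigma_{*}\pdv{x^1}\wedge\cdots\wedge\sigma_{*}\pdv{x^m},
\]
extended to a neighbourhood of $\sigma(M)$ via a local trivialisation of $\tau$. Applied to any $X\in\mathfrak{X}(W)$, the $1$-form $Z_m\lrcorner\dif\alpha$ coincides along $\sigma(M)$, up to sign, with the evaluation of $\sigma^{*}(X\lrcorner\dif\alpha)$ on the coordinate frame; hence $Z_m\lrcorner\dif\alpha=0$ along $\sigma(M)$ is equivalent to (3). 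For the converse, if $\sigma$ is integral for some decomposable $Z_m$ satisfying $Z_m\lrcorner\dif\alpha=0$, then at each point $Z_m$ is a nonzero scalar multiple of the canonical choice above, and the same identity recovers (3). The main subtlety to keep an eye on lies precisely in this last equivalence: \emph{integral for $Z_m$} only pins down the $m$-plane spanned by the factors, not the factors themselves, and $Z_m$ needs to be defined only on a neighbourhood of $\sigma(M)$. Once this freedom is made explicit, the equivalence is just a reformulation of the contraction identity.
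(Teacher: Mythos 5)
Your proposal is correct and follows essentially the same route as the paper's proof: the same first-variation/Cartan-magic-formula argument for $(1)\Leftrightarrow(2)$, the same vertical--horizontal splitting along $\sigma$ plus the degree count on $\sigma^{*}\dif\alpha$ for $(2)\Rightarrow(3)$, and the same contraction identity relating $Z_m\lrcorner X\lrcorner\dif\alpha$ to the evaluation of $\sigma^{*}(X\lrcorner\dif\alpha)$ on a coordinate frame for the multivector statement. The only cosmetic difference is that the paper closes a cycle via $(4)\Rightarrow(1)$ whereas you prove $(4)\Rightarrow(3)$ directly; both directions rest on the identical computation.
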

\begin{proof}
  We proceed separately:
  \begin{description}
  \item[\Implies{statement1}{statement2}] The variations of a section $\sigma$ is a section of the pullback bundle $\sigma^*\left(VW\right)$, where $VW\subset TW$ indicates the subbundle of vertical vectors; by extending one of these variations to a true vector field $\delta V$ on $W$, we obtain the following formula for the variation of the action functional
    \[
    \int_M\sigma^*\left(\cL_{\delta V}\alpha\right)=0.
    \]
    By Cartan's magic formula and using the fact that variations annihilates on the boudary of $M$ it follows~\ref{statement2}.
  \item[\Implies{statement2}{statement3}] Let us suppose that $\sigma$ verifies~\ref{statement2}, and let $w$ belongs to $\Img{\sigma}$. Then we have the decomposition
    \[
    T_wW=V_\sigma W\oplus T_w\sigma\left(T_{\tau\left(w\right)}M\right)
    \]
    and so $X=X^V+T_w\sigma\left(V\right)$ for every $X\in T_wW$; here $V\in T_{\tau\left(w\right)}M$. Contracting it with the form $\dif\alpha$ we obtain
    \[
    X\lrcorner\dif\alpha=T_w\sigma\left(V\right)\lrcorner\dif\alpha
    \]
    and so
    \[
    \sigma^*\dif\alpha=\sigma^*\left(T_w\sigma\left(V\right)\lrcorner\dif\alpha\right)=V\lrcorner\sigma^*\dif\alpha=0
    \]
    because $\dif\alpha$ has degree $m+1$ and $\Img\sigma$ has dimension $m$.  Thus~\ref{statement3} follows.
  \item[\Implies{statement3}{statement4}] Let $\sigma$ be a section fulfilling~\ref{statement3} and $U\subset M$ an open set whose tangent bundle is trivializable; then we can define $Z\in\Gamma\left(\wedge{}^mW\right)$ such that on $\Img{\sigma}$ takes the values
    \[
    Z\circ\sigma=\wedge_{i=1}^mT\sigma\left(V_i\right)
    \]
    where $\left\{V_1,\cdots,V_m\right\}$ is a basis on $TU\subset TM$. It could be done because $\Img\sigma$ is closed in $W$. Then if $X$ is an arbitrary vector field on $W$, we will have that
    \[
    0=\sigma^*\left(X\lrcorner\dif\alpha\right)\left(V_1,\cdots,V_m\right)=Z\lrcorner X\lrcorner\dif\alpha
    \]
    and ~\ref{statement4} follows when we realize that $X$ is arbitrary.
  \item[\Implies{statement4}{statement1}] Let $\delta V\in\mathfrak{X}\left(W\right)$ be an arbitrary vertical vector field; so we will have that
    \begin{align*}
      \delta V\lrcorner Z_m\lrcorner\dif\alpha&=\left(-1\right)^mT\sigma\left(\partial_1\wedge\cdots\wedge\partial_m\right)\lrcorner\left(\delta V\lrcorner\dif\alpha\right)\\
      &=\left(-1\right)^m\left(\partial_1\wedge\cdots\wedge\partial_m\right)\lrcorner\sigma^*\left(\delta V\lrcorner\dif\alpha\right).
    \end{align*}
    So from $Z_m\lrcorner\dif\alpha=0$ it results that $\sigma^*\left(\delta V\lrcorner\dif\alpha\right)=0$ for any vertical vector field $\delta V$, and thus
    \[
    \int_M\sigma^*\left(\cL_{\delta V}\alpha\right)=0
    \]
    if we suppose that $\left.\delta V\right|_{\partial M}=0$.$\qedhere$
  \end{description}
\end{proof}
This Proposition is giving us a representation for the equations that rule the extremals in terms of an EDS \cite{GoldSternberg}.
\begin{defc}[Hamilton-Cartan EDS]
  Given a variational problem $\left(W\rightarrow M,\alpha,0\right)$, the EDS $\cI_{\text{HC}}$ whose integral sections are exactly its extremals is called \emph{Hamilton-Cartan EDS}.
\end{defc}
\begin{note}[How to reduce to the restricted Hamiltonian system setting]
  The equivalences detailed above for restricted Hamiltonian systems could be obtained from this Proposition by setting $W=J^1\pi^*,\tau=\bar\tau$ and $\alpha:=\Theta_h$.
\end{note}
The Proposition \ref{Prop:OursEquivalentOnVars} means that in order to find a restricted Hamiltonian version of a variational problem $\left(F\rightarrow M,\lambda,\cI\right)$, it will be necessary to replace them by another variational problem $\left(W\rightarrow M,\alpha,0\right)$ with trivial restriction EDS. To the description of a possible replacement is devoted the next section.


\section{On the Hamiltonian version of field theory}\label{Sec:HamiltonianVersion}

It is time now to study a method for the construction of a covariant Lepage-equivalent problem to every variational problem. This method is based in the works \cite{GotayCartan,W-1982,Gotay1991203}, although some modifications were introduced in order to adapt it to this context. By exploring a bit further the constructions yielding to a restricted Hamiltonian system in the classical case, we will obtain a hint on the key components of such a construction, and we will use these findings in order to set an analogous structure in variational problems of more general nature than those we encountered in field theory (see the examples in Section \ref{Sec:Examples}.)

\subsection{An scheme for first order field theory}

We will set up the restricted Hamiltonian version for first order field theory. Although it is a well-known scheme, we will present it in a non traditional fashion, highlighting those features which will prove to be important to generalize it to non classical variational problems. Our starting point is the classical variational problem
\[
\left(J^1\pi,L\eta,\cI_{\text{con}}\right),
\]
where $\pi:E\rightarrow M$ is the bundle of fields, $\eta$ is a volume form on $M$ and $\cI_{\text{con}}$ is the contact EDS on $J^1\pi$. With our purpose in mind, let us define
\begin{equation}\label{Eq:ClassicalVerticalSubBundle}
  \left(\wedge{}^m_2J^1\pi\right)^V:=\wedge{}^m_2J^1\pi\cap\left(V\pi_{10}\right)^0,
\end{equation}
where $\left(\cdot\right)^0$ indicates the annihilator of the vector bundle placed between the parenthesis. Moreover, let us consider the pullback bundle
\[
\begin{diagram}
  \node{\pi_{10}^*\left(\wedge{}^m_2E\right)}\arrow{e,t}{p_2}\arrow{s,l}{p_1}\node{\wedge{}^m_2E}\arrow{s,r}{\bar\tau^m_E}\\
  \node{J¹\pi}\arrow{e,b}{\pi_{10}}\node{E}
\end{diagram}
\]
\begin{lem}\label{Lem:BundleIsomorph}
  We have the bundle isomorphism
  \[
  \left(\wedge{}^m_2J^1\pi\right)^V\simeq\pi_{10}^*\left(\wedge{}^m_2E\right)
  \]
  as bundles on $J^1\pi$.
\end{lem}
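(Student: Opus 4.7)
The plan is to construct the isomorphism pointwise by pullback along $\pi_{10}$. For $p\in J^1\pi$ with $\pi_{10}(p)=e$, the fibre of $\pi_{10}^*(\wedge{}^m_2E)$ over $p$ is $(\wedge{}^m_2E)_e$, and I define
\[
\Phi_p:(\wedge{}^m_2E)_e\longrightarrow\wedge{}^mT^*_pJ^1\pi,\qquad\Phi_p(\mu):=(T_p\pi_{10})^*\mu.
\]
The first step is to verify that the image of $\Phi_p$ lies in $(\wedge{}^m_2J^1\pi)^V$. Annihilation by $V\pi_{10}$ is automatic, since $T_p\pi_{10}$ kills $V_p\pi_{10}$. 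For the $2$-horizontality over $M$, writing $\bar\pi:=\pi\circ\pi_{10}$ and using $T\bar\pi=T\pi\circ T\pi_{10}$, every $v\in V_p\bar\pi$ satisfies $T_p\pi_{10}(v)\in V_e\pi$; the hypothesis $\mu\in\wedge{}^m_2E$ then yields $v_1\lrcorner v_2\lrcorner\Phi_p(\mu)=0$ for any such $v_1,v_2$.

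The second step is bijectivity. Because $\pi_{10}$ is a submersion, $T_p\pi_{10}$ is surjective and hence $(T_p\pi_{10})^*$ is injective on multilinear forms, so $\Phi_p$ is injective. For surjectivity, take $\omega\in(\wedge{}^m_2J^1\pi)^V_p$. The condition $v\lrcorner\omega=0$ for every $v\in V_p\pi_{10}$ ensures that $\omega$ factors through the linear quotient $T_pJ^1\pi\to T_pJ^1\pi/V_p\pi_{10}$; via the canonical isomorphism $T_pJ^1\pi/V_p\pi_{10}\cong T_eE$ induced by $T_p\pi_{10}$ this produces a unique $\tilde\mu\in\wedge{}^mT^*_eE$ with $\omega=(T_p\pi_{10})^*\tilde\mu=\Phi_p(\tilde\mu)$. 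What remains, and is the real content of the lemma, is to show that $\tilde\mu\in\wedge{}^m_2E$.

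The key observation is that $T\pi_{10}$ restricts to a surjection between vertical subbundles, fitting into a short exact sequence of bundles over $J^1\pi$,
\[
0\longrightarrow V\pi_{10}\longrightarrow V\bar\pi\longrightarrow\pi_{10}^*V\pi\longrightarrow 0.
\]
Indeed, if $w\in V_e\pi$, any lift $v\in T_pJ^1\pi$ of $w$ along the surjective $T_p\pi_{10}$ automatically satisfies $T\bar\pi(v)=T\pi(w)=0$, hence $v\in V_p\bar\pi$. Using this lifting, any pair $w_1,w_2\in V_e\pi$ admits lifts $v_1,v_2\in V_p\bar\pi$, and
\[
\tilde\mu(w_1,w_2,\dots)=\omega(v_1,v_2,\dots)=0
\]
because $\omega\in\wedge{}^m_2J^1\pi$; thus $\tilde\mu\in(\wedge{}^m_2E)_e$ as required.

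Smoothness of $\Phi$ follows from the fact that it is given by fibrewise pullback against the smooth map $\pi_{10}$; alternatively it can be read off in adapted coordinates $(x^i,u^A,u^A_i)$, where both sides are locally free $C^\infty(J^1\pi)$-modules with basis $\eta$ and $du^A\wedge\eta_k$, while the $du^A_i\wedge\eta_k$ terms that would otherwise appear in $\wedge{}^m_2J^1\pi$ are precisely those removed by the $(V\pi_{10})^0$ condition. The only real subtlety I anticipate is organisational: keeping track of which verticality (with respect to $\pi_{10}$ or $\bar\pi$) is meant at each step, which the short exact sequence above neatly formalises.
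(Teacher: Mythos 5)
Your argument is correct. The paper itself states this lemma without proof, and in the one place where it does prove the analogous statement (the proposition identifying $\left(\wedge^m_2\left(\wedge^2M\oplus T^*M\right)\right)^V$ with $P_2^*\left(\wedge^m_2\left(T^*M\right)\right)$ in the Maxwell section) it simply exhibits the map in adapted coordinates, matching the local frames $\eta$ and $\dif u^A\wedge\eta_k$ on both sides. Your route is genuinely different: you build the isomorphism intrinsically as fibrewise pullback along $T_p\pi_{10}$, and the two nontrivial points --- that the descended form $\tilde\mu$ is again $2$-horizontal, and that the image of $\Phi_p$ is $2$-horizontal over $M$ --- are both handled correctly by the observation that $T\pi_{10}$ restricts to a surjection $V\bar\pi\to\pi_{10}^*V\pi$ with kernel $V\pi_{10}$, so $\pi$-vertical vectors admit $\bar\pi$-vertical lifts. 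What your approach buys is coordinate independence and immediate transferability: the identical argument, with $2$ replaced by $p$ and $J^1\pi$ by a general fibration $F\to E\to M$, proves the unproved Lemma asserting $\left(\wedge^m_pF\right)^V\simeq\pi_{10}^*\left(\wedge^m_pE\right)$ later in the paper, whereas the coordinate computation would have to be redone case by case (as the paper in fact does for Maxwell). The coordinate approach, for its part, makes the induced form $\Theta_{10}$ and the local description of $W_{L\eta}$ immediately visible, which is what the subsequent computations actually use. No gaps.
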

\begin{note}[On the pullback]
  It is necessary to point out that $\pi_{10}^*\left(\wedge_2^m\pi\right)$ is nothing but the bundle $J^1\pi\oplus Z$, sometimes called \emph{Pontryagyn bundle}, as in \cite{Vankerschaver:1282812}; under this identification, the main point of the previous lemma is to show how such space adquires a multisymplectic structure. Moreover, the definition adopted here will allow us to translate the scheme to another kind of variational problems, as will be shown below.
\end{note}
Then we define the bundle $\widetilde{W}_{L\eta}\rightarrow J^1\pi$ such that
\[
\left.\widetilde{W}_{L\eta}\right|_{j_x^1s}:=\left(L\left(j^1_xs\right)\eta+\left.I_c\right|_{j_x^1s}\cap\wedge^mJ^1\pi\right)\cap\left(\wedge{}^m_2J^1_{j_x^1s}\pi\right)^V
\]
where $I_c$ is the bundle in $\wedge{}^\bullet J^1\pi$ whose sections generate $\cI_{\text{con}}$ (see Example \ref{Ex:ContactGenerated}); it is quite amusing to note that this definition uses only the data provided by the actual definition of the variational problem, and so it is suitable for the desired generalization.
\begin{defc}\label{Def:ClassLepEquiv}
  The subbundle
  \[
  \pi_{10}^*\left(\wedge{}^m_2E\right)\supset W_{L\eta}\stackrel{p_1}{\longrightarrow} J^1\pi
  \]
  which is the image of the set $\widetilde{W}_{L\eta}$ under the isomorphism of Lemma \ref{Lem:BundleIsomorph} will be called \emph{classical Lepage-equivalent} of the first order field theory $\left(J^1\pi,L\eta,\cI_{\text{con}}\right)$.
\end{defc}
From now on, in those places where no danger of confusion arise, we will use the same symbol to refer to $\widetilde{W}_{L\eta}$ and ${W}_{L\eta}$. 
\begin{note}[On terminology]
  The definition adopted above is based in the work of \cite{GotayCartan}; every variational problem is associated to a simplified variational problem, named \emph{canonical Lepage-equivalent}. It must be stressed that this definition is consistent with Definition \ref{Def:LepEquivProblem}, namely, that the classical Lepage-equivalent problem (in the sense of Definition \ref{Def:ClassLepEquiv}) is a Lepage-equivalent (in the sense of Definition \ref{Def:LepEquivProblem}) of the classical variational problem.
\end{note}
\begin{note}[On the choice of the bundles of forms]
  It is natural to ask here on the reason to choose $\wedge^m_2J^1\pi$ in the previous adopted definitions. It could be justified \emph{a posteriori} by looking at $\widetilde{W}_{L\eta}$: It is an object living naturally into $\wedge^mJ^1\pi$, as the original definition of Gotay shows, but we are embedding it into $\wedge^m_2J^1\pi$ in order to keep close to the known formalism for first order field theories. An hypothesis about the verticality order of the forms being considered as relevant in this formalism could be the following: It is the minimal order keeping the generators of the restriction EDS $\cI_{\text{con}}$ into $\widetilde{W}_{L\eta}$. Perhaps this remark would be taken into account whenever variational problems with higher order restriction EDS could arise.
\end{note}

Let $\Theta^V$ be the restriction of the canonical $m$-form on $\wedge{}^m_2J^1\pi$ to $\left(\wedge{}^m_2J^1\pi\right)^V$. The isomorphism found above allows us to define on $\pi_{10}^*\left(\wedge{}^m_2E\right)$ an $m$-form $\Theta_{10}$; let $\Theta_{L\eta}$ be the restriction of this $m$-form to $W_{L\eta}$. 
\begin{prop}\label{Prop:HamiltonCartanLeta}
  The Hamilton-Cartan EDS
  \[
  \cI_{\text{HC}}=\left\langle V\lrcorner\dif\Theta_{L\eta}:V\in TW_{L\eta}\right\rangle _{\text{diff}}
  \]
  on $W_{L\eta}$ is equivalent to the Euler-Lagrange equations associated to the variational problem $\left(J^1\pi,L\eta,\cI_{\text{con}}\right)$.
\end{prop}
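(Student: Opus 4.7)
The plan is to verify that the triple $(W_{L\eta}\to M,\Theta_{L\eta},0)$ is a co- and contravariant Lepage-equivalent problem for the classical variational problem $(J^1\pi,L\eta,\cI_{\text{con}})$, after which Proposition~\ref{Prop:OursEquivalentOnVars} immediately identifies the integral sections of $\cI_{\text{HC}}$ with the extremals of the action $\gamma\mapsto\int_M\gamma^*\Theta_{L\eta}$; by co- and contravariance these project bijectively onto the solutions of the Euler-Lagrange equations associated to $L\eta$.

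The key step is to check condition (ii) of Definition~\ref{Def:LepEquivProblem} for the projection $\nu:=p_1:W_{L\eta}\to J^1\pi$. A point $w\in W_{L\eta}$ lying over $j_x^1s$ is, via Lemma~\ref{Lem:BundleIsomorph}, a form of the shape
\[
L(j_x^1s)\,\eta+\beta_w,\qquad \beta_w\in\left.I_c\right|_{j_x^1s}\cap\wedge^m_{j_x^1s}J^1\pi.
\]
Exploiting the tautological character of the canonical $m$-form on $\wedge^m_2J^1\pi$, for any section $\gamma:M\to W_{L\eta}$ the pullback $\gamma^*\Theta_{L\eta}$ decomposes as $(\nu\circ\gamma)^*(L\eta)$ plus the pullback along $\nu\circ\gamma$ of a local section of $I_c$; since the latter generates the contact ideal (cf.\ Example~\ref{Ex:ContactGenerated}), it vanishes whenever $\nu\circ\gamma$ is an integral section of $\cI_{\text{con}}$, yielding $\gamma^*\Theta_{L\eta}=(\nu\circ\gamma)^*(L\eta)$ on such sections.

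The co- and contravariance is then the classical statement quoted before Definition~\ref{Def:ClassLepEquiv}, due to Gotay~\cite{GotayCartan}; the only nuance is that we have embedded the object inside $\wedge^m_2J^1\pi$ instead of $\wedge^mJ^1\pi$, but as pointed out in the remark following Definition~\ref{Def:ClassLepEquiv} this does not alter the underlying Lepage-equivalent problem. The main obstacle is precisely this identification: one must verify that, under the isomorphism of Lemma~\ref{Lem:BundleIsomorph}, the form $\Theta_{L\eta}$ coincides with the Poincar\'e-Cartan form used in the standard Gotay-Cartan construction, so that the covariance results of \cite{GotayCartan} transfer verbatim. Once this is in place, combining Proposition~\ref{Prop:OursEquivalentOnVars} with the known co- and contravariance of the classical Lepage-equivalent concludes the proof; a secondary subtlety is to ensure that taking $V$ to range over \emph{all} of $TW_{L\eta}$ (rather than just the $\tau$-vertical vectors) does not enlarge the set of extremals, which is guaranteed by the equivalence between statements \ref{statement2} and \ref{statement3} in Proposition~\ref{Prop:OursEquivalentOnVars}.
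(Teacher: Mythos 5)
Your argument is correct in outline, but it takes the route the paper explicitly declines: the authors remark that the statement ``can be proved in general grounds by using the co- and contravariance of the classical Lepage-equivalent problem,'' and then instead give a direct local-coordinate proof. Concretely, they write a general point of $W_{L\eta}$ as $\alpha=p_A^k\dif u^A\wedge\eta_k+\left(L-p_A^lu^A_l\right)\eta$ and compute the contractions $\partial_i\lrcorner\dif\Theta_{L\eta}$, $\delta p_A^k\lrcorner\dif\Theta_{L\eta}$ and $\delta u^A_k\lrcorner\dif\Theta_{L\eta}$, exhibiting the generators of $\cI_{\text{HC}}$ as the contact forms, the Legendre constraint $p_A^k=\partial L/\partial u^A_k$, and the canonical equations, and checking that the horizontal contractions add no new generators. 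Your abstract argument (verify condition (ii) of Definition \ref{Def:LepEquivProblem} via the tautological property of the canonical form, then invoke Proposition \ref{Prop:OursEquivalentOnVars} together with the quoted co- and contravariance of the classical Lepage-equivalent) is the cleaner and more conceptual path, and your observation that passing from vertical to arbitrary $V$ is harmless by the equivalence \ref{statement2}$\Leftrightarrow$\ref{statement3} is exactly right. What the coordinate proof buys, however, is precisely the step you flag as ``the main obstacle'' and then leave unverified: that under the isomorphism of Lemma \ref{Lem:BundleIsomorph} the form $\Theta_{L\eta}$ really is the Cartan form of the Gotay construction, so that the literature's contravariance applies. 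The paper's explicit formula for $\alpha\in W_{L\eta}$ is the verification of that identification, and it simultaneously displays the equations of motion, which the abstract route never makes visible. So your proposal is acceptable modulo that one identification, which you should either carry out in coordinates (as the paper does) or at least reduce to the consistency claim made in the paper's remark following Definition \ref{Def:ClassLepEquiv}.
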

\begin{proof}
  This can be proved in general grounds by using the co- and contravariance of the classical Lepage-equivalent problem; nevertheless, it is intructive to give a proof involving local coordinates. In fact, $\alpha\in W_{L\eta}$ iff in the adapted coordinates $\left(x^i,u^A,u^A_i,p_A^k,p\right)$ on $\pi_{10}^*\left(\wedge{}^m_2E\right)$
  \[
  \alpha=p_A^k\dif u^A\wedge\eta_k+\left(L-p_A^lu^A_l\right)\eta
  \]
  where $\eta_k:=\partial_k\lrcorner\eta$ (and we are assuming that $\dif\eta=0$); it means in particular that $W_{L\eta}$ can be described by the set of coordinates $\left(x^k,u^A,u^A_k,p_A^k\right)$ via
  \[
  W_{L\eta}=\left\{\left(x^i,u^A,u^A_i,p_A^k,L-p_A^lu^A_l\right)\right\}.
  \]
  Then
  \begin{equation*}
    \partial_i\lrcorner\dif\Theta_{L\eta}=\dif p_A^k\wedge\left(\dif u^A-u_l^A\dif x^l\right)\wedge\eta_{ik}+\left(\frac{\partial L}{\partial u^A_k}-p^k_A\right)\dif u^A_k\wedge\eta_i,
  \end{equation*}
  where $\eta_{ik}:=\partial_i\lrcorner\partial_k\lrcorner\eta$; but we have
  \begin{align*}
    \delta p_A^k\lrcorner\dif\Theta_{L\eta}&=\delta p_A^k\left(\dif u^A-u_l^A\dif x^l\right)\wedge\eta_{k}\\
    \delta u^A_k\lrcorner\dif\Theta_{L\eta}&=\delta u^A_k\left(\frac{\partial L}{\partial u^A_k}-p^k_A\right)\eta,
  \end{align*}
  so the contraction along elements of the form $\partial_i$ does not implies new generators for the EDS $\cI_{\text{H-C}}$.
\end{proof}
It is time to obtain the restricted Hamiltonian system from this scheme; in order to do that, it is necessary to work with a subbundle of $W_{L\eta}$. It can be seen as the subset generated by the set of zero forms belonging to the Hamilton-Cartan EDS when consider the independence condition $\eta\not=0$.
\begin{prop}
  The zero forms of the EDS $\cI_{\text{HC}}$ define a subset of $W_{L\eta}$ which can be generated as the image of a section $\sigma:J^1\pi\rightarrow W_{L\eta}$.
\end{prop}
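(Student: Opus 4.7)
The plan is to extract from the generators of $\cI_{\text{HC}}$---already computed in the coordinates used in the proof of Proposition \ref{Prop:HamiltonCartanLeta}---the coefficients of the volume form $\eta$ that, under the independence condition $\eta\neq 0$, play the role of zero-form constraints on $W_{L\eta}$, and to recognise the resulting locus as the graph of an explicit section $\sigma\colon J^1\pi\to W_{L\eta}$.

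First I would revisit the contractions of $\dif\Theta_{L\eta}$ with the coordinate vector fields tangent to $W_{L\eta}$. The only ones relevant for zero forms are the vertical directions along the $p_1$-fibres, which yield
\[
\frac{\partial}{\partial u^A_k}\lrcorner\dif\Theta_{L\eta}=\left(\frac{\partial L}{\partial u^A_k}-p^k_A\right)\eta.
\]
With the independence condition $\eta\neq 0$ in force, an $m$-form of the shape $f\,\eta$ belonging to $\cI_{\text{HC}}$ is algebraically equivalent to the scalar requirement $f=0$; thus the functions $f^k_A:=\partial L/\partial u^A_k-p^k_A$ are exactly the zero forms to be kept. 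The remaining contractions (along $\partial_i$, $\partial/\partial u^A$ or $\partial/\partial p^k_A$) always carry a residual factor of the contact $1$-form $\dif u^A-u^A_l\,\dif x^l$ or of $\dif u^A_k$, so they cannot contribute new coefficients purely of $\eta$.

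The common zero locus of the $f^k_A$ in $W_{L\eta}$ is therefore
\[
C=\bigl\{(x^i,u^A,u^A_k,p^k_A)\in W_{L\eta}:p^k_A=\partial L/\partial u^A_k\bigr\},
\]
and this is manifestly the image of the smooth map
\[
\sigma\colon J^1\pi\longrightarrow W_{L\eta},\qquad(x^i,u^A,u^A_k)\longmapsto\Bigl(x^i,u^A,u^A_k,\tfrac{\partial L}{\partial u^A_k}(x,u,u')\Bigr),
\]
which satisfies $p_1\circ\sigma=\id_{J^1\pi}$ by construction and is therefore a bona fide section of $p_1$.

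The main obstacle I anticipate is making the assignment ``coefficient of $\eta$'' intrinsic, so that $C$ is genuinely attached to $\cI_{\text{HC}}$ rather than to a particular coordinate chart. I would address this by showing that two local representations of the same generator of $\cI_{\text{HC}}$ differ only by elements of the subideal generated by the contact $1$-forms $\dif u^A-u^A_l\,\dif x^l$ and by the $\dif u^A_k$; hence the induced assignment $f\eta\mapsto f$ is well defined modulo a subset of $\cI_{\text{HC}}$ that vanishes once $\eta$ is declared non-zero, and consequently $C$ depends only on $\cI_{\text{HC}}$ and on the independence condition, not on the chosen local trivialization.
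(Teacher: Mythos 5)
Your proposal is correct and follows essentially the same route as the paper: both read off the generator $\left(\partial L/\partial u^A_k-p^k_A\right)\eta$ obtained by contracting $\dif\Theta_{L\eta}$ with the $\partial/\partial u^A_k$ directions (as computed in Proposition \ref{Prop:HamiltonCartanLeta}), invoke $\eta\neq 0$ to turn these into the scalar constraints $p^k_A=\partial L/\partial u^A_k$, and identify the resulting locus as the image of the Legendre-transformation section $\sigma$. Your closing remark on the coordinate-independence of the ``coefficient of $\eta$'' is a small additional check the paper leaves implicit, but it does not change the argument.
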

\begin{proof}
  In fact, from the Corollary \ref{Prop:HamiltonCartanLeta}, the elements of the form
  \[
  \left(0,0,\delta u^A_i,0,0\right)
  \]
  gives rise to the desired zero forms according to the formula
  \[
  \delta u^A_k\left(\frac{\partial L}{\partial u^A_k}-p_A^k\right)\eta=0
  \]
  for all $\delta u^A_k$, once we realize that $\eta\not=0$ on integral sections. Thus the section $\sigma$ reads locally as follows
  \[
  \sigma\left(x^k,u^A,u^A_k\right)=\left(x^k,u^A,u^A_k,\frac{\partial L}{\partial u^A_l},L-\frac{\partial L}{\partial u^A_l}u^A_l\right)\in W_{L\eta}.\qedhere
  \]
\end{proof}
The section $\sigma$ is nothing but the Legendre transformation $\text{Leg}_L:J^1\pi\rightarrow\mathcal{M}\pi$. When Lagrangian $L$ is hyperregular, the restriction of the EDS $\cI_{\text{HC}}$ makes sense, because its integral sections can be put into one-to-one correspondence with solutions of the Euler-Lagrange equations for $L$. Nevertheless it loses some of its power when singular Lagrangian are considered, because further restrictions will be needed to achieve the desired correspondence; given that in this work we are searching for an algorithm allowing us to find these constraints, it will be no reason in restricting \emph{a priori} to $\text{Im}\left(\text{Leg}_L\right)$. These considerations will be taken into account in the following, when $W_{L\eta}$ will be assumed as the main object defining the restricted Hamiltonian systems associated to a variational problem, instead of the graph of the Hamiltonian section $h$ (or the Legendre transformation.) So we are ready to adopt a new definition for \emph{restricted Hamiltonian systems} suitable to be applied to the kind of variational problems we are considering here.
\begin{defc}[Restricted Hamiltonian system - new definition]\label{Def:NewDefRestHamSystem}
  A \emph{restricted Hamiltonian system} will be from now on a couple $\left(W,\Omega\right)$ consisting of a $m+1$-premultisymplectic manifold $W$ and its presymplectic form $\Omega$.
\end{defc}
By taking $W:=J^1\pi^*$ and $\Omega:=\Omega_h$ we obtain the restricted Hamiltonian systems in the old sense; the new definition will include some additional cases in which the Hamiltonian section is not easy to define (see the examples discussed in Section \ref{Sec:Examples}.) Nevertheless, as indicated above, whenever a restricted Hamiltonian system $\left(W,\Omega\right)$ is at our disposal, some sort of Hamiltonian equations can be formulated.
\begin{defc}[Hamilton equations associated to a restricted Hamiltonian system]
  The \emph{Hamilton equations} for the restricted Hamiltonian system $\left(W,\Omega\right)$ are the equations
  \[
  Z\lrcorner\Omega=0
  \]
  determining a decomposable $m$-vector field $Z\in\mathfrak{X}^m\left(W\right)$. Whenever a volumen form $\eta\in\Omega^m\left(W\right)$ is fixed, we can use the additional equation
  \[
  Z\lrcorner\eta=1
  \]
  as part of the Hamilton equations.
\end{defc}


\subsection{Hamiltonian structures for general variational problems}

Let us attack the problem of building a Hamiltonian structure for a general variational problem. The basic structure is a triple $\left(F\rightarrow E\rightarrow M,\lambda,\cI\right)$, where $\pi_{10}:F\rightarrow E$ is a fiber bundle on $E$, it is a bundle $\pi:E\rightarrow M$ on a $m$-dimensional manifold $M$, $\lambda$ is a $m$-form and $\cI$ is an EDS, both on $F$, which plays a rôle analogous to the jet space $J^1\pi$ in this setting. In these terms, we define
\[
\left(\wedge^m_pF\right)^V:=\wedge^m_pF\cap\left(V\pi_{10}\right)^0
\]
for some integer $p$ to be chosen according to a criteria we will set below, and the pullback bundle
\[
\begin{diagram}
  \node{\pi_{10}^*\left(\wedge^m_pE\right)}\arrow{e,t}{p_2}\arrow{s,l}{p_1}\node{\wedge^m_pE}\arrow{s,r}{\bar\tau^m_E}\\
  \node{F}\arrow{e,b}{\pi_{10}}\node{E}
\end{diagram}
\]
The bundle $\left(\wedge^m_pF\right)^V$ is the set of $p$-horizontal $m$-forms having ``no differential in the velocities direction''; as before, the following result holds.
\begin{lemma}
  There exists an isomorphism
  \[
  \left(\wedge^m_pF\right)^V\simeq\pi_{10}^*\left(\wedge^m_pE\right)
  \]
  as bundles on $F$.
\end{lemma}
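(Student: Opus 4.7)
The strategy is to construct the isomorphism fiberwise, exactly in the spirit of Lemma \ref{Lem:BundleIsomorph}, by using the exact sequence of tangent spaces along $\pi_{10}$ and then restricting to the $p$-horizontal subbundle on both sides. Since $\pi_{10}\colon F\to E$ is a submersion, the short exact sequence
\[
0\longrightarrow V_f\pi_{10}\longrightarrow T_f F\stackrel{T_f\pi_{10}}{\longrightarrow} T_{\pi_{10}(f)}E\longrightarrow 0
\]
identifies $T_{\pi_{10}(f)}E$ with the quotient $T_f F/V_f\pi_{10}$. Dualizing and taking the $m$-th exterior power, the pullback $(T_f\pi_{10})^{*}$ embeds $\wedge^m T^{*}_{\pi_{10}(f)}E$ as exactly the subspace of $m$-forms on $T_f F$ that annihilate every vector from $V_f\pi_{10}$; that is, the fibre of $\pi_{10}^{*}(\wedge^m E)$ at $f$ gets identified with $\wedge^m F_f\cap (V_f\pi_{10})^{0}$.

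The second step is to verify that this identification preserves the $p$-horizontality condition on each side. A vector $w\in T_f F$ satisfies $T_f(\pi\circ\pi_{10})w=0$ if and only if $T_f\pi_{10}(w)\in V_{\pi_{10}(f)}\pi$, and moreover, by the submersion property of $\pi_{10}$, the restriction of $T_f\pi_{10}$ to $V_f(\pi\circ\pi_{10})$ surjects onto $V_{\pi_{10}(f)}\pi$. Consequently, if $\beta$ is a $p$-horizontal $m$-form on $E$ at $\pi_{10}(f)$, its pullback $(\pi_{10})^{*}\beta$ evaluated on $p$ vectors vertical for $\pi\circ\pi_{10}$ reduces to $\beta$ evaluated on their $\pi$-vertical projections, hence vanishes. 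Conversely, any $\alpha\in (\wedge^m_p F)^{V}_f$ descends by the first step to some $\beta$ on $E$, and the lifting property just mentioned shows that $\beta$ itself is $p$-horizontal. This matches up the two $p$-horizontal pieces.

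Finally, because the constructions in both directions (pullback and quotient descent) are canonical in $f$ and linear on fibres, they glue into a smooth bundle morphism over $F$ whose inverse is equally canonical; since the resulting map is fibrewise a linear isomorphism, it is a bundle isomorphism. I expect the main obstacle to be purely organizational rather than conceptual: one must keep straight the two notions of verticality (for $\pi_{10}$ and for $\pi\circ\pi_{10}$) and use carefully the key identity $T_f\pi_{10}(V_f(\pi\circ\pi_{10}))=V_{\pi_{10}(f)}\pi$ that underlies the translation of $p$-horizontality. Writing everything in fibred coordinates adapted to the tower $F\to E\to M$ would make the bundle isomorphism completely explicit and settle any remaining smoothness issues.
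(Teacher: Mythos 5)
Your proof is correct, and it supplies exactly the argument the paper omits: the lemma (like its classical antecedent for $J^1\pi$) is stated without proof, and the intended identification is precisely your fibrewise one, namely that forms annihilating $V_f\pi_{10}$ are the pullbacks along $T_f\pi_{10}$ of forms on $T_{\pi_{10}(f)}E$, with $p$-horizontality transported in both directions via the surjection $T_f\pi_{10}\bigl(V_f(\pi\circ\pi_{10})\bigr)=V_{\pi_{10}(f)}\pi$. No gap.
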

Next it is time to find the set $\widetilde{W}_\lambda$ associated to our variational problem; in order to mimick the definition made in the case of first order field theory, we want to define this set according to the formula
\begin{equation}\label{Eq:DeffWL}
  \left.\widetilde{W}_\lambda\right|_{f}:=\left(\left.\lambda\right|_f+\left.\left(I\cap\wedge^mF\right)\right|_f\right)\cap\left.\left[\left(\wedge^m_pF\right)^V\right]\right|_f,\qquad\forall f\in F.
\end{equation}
We have some requirements to impose in order to ensure it exists and has nice properties, namely:
\begin{itemize}
\item There must exists a subbundle $I\subset\wedge^\bullet F$ such that $\cI$ is generated (in the sense of Definition \ref{Def:EDSGenerated}) by its sections.
\item $\widetilde{W}_\lambda$ must be a bundle on $F$, meaning in particular that the intersection
  \[
  I\cap\left(\wedge^m_pF\right)^V
  \]
  must have constant rank.
\item The set $\widetilde{W}_\lambda$ must contain all the relevant data belonging to the variational problem, i.e., both the Lagrangian form and every multiple of the algebraic generators of $\cI$.
\end{itemize}
\begin{example}[On the last condition]
  Let us consider the EDS
  \[
  \left(\mR^4\stackrel{\pi}{\longrightarrow}\mR^2\rightarrow\mR^2,0,\left\langle \theta\right\rangle _{\text{diff}}\right),
  \]
  where the global coordinates on $\mR^4$ are $\left(x,y,u,v\right)$, $\pi\left(x,y,u,v\right)=\left(x,y\right)$ is the projection map and moreover $\theta:=\dif u\wedge\dif v$. Then the underlying diagram becomes
  \[
  \begin{diagram}
    \node{\text{id}^*\left(\wedge^2_2\mR^4\right)}\arrow{s,l}{\bar\tau}\arrow{e,t}{}\node{\wedge^2_2\mR^4}\arrow{s,r}{\bar\tau}\\
    \node{\mR^4}\arrow{e,b}{}\node{\mR^4}\arrow{e,b}{\pi}\node{\mR^2}
  \end{diagram}
  \]
  and we have that
  \[
  \text{id}^*\left(\wedge^2_2\mR^4\right)=\left(\wedge^2_2\mR^4\right)^V=\wedge^2_2\mR^4;
  \]
  then
  \[
  \widetilde{W}=\mR\theta\cap\wedge^2_2\mR^4=0
  \]
  and the multisymplectic structure results trivial: Every $2$-vector is solution of the underlying equations. On the other side, the initial EDS represents the PDE
  \[
  u_xv_y-u_yv_x=0
  \]
  whose solutions are a proper subset of the solutions of the trivial multisymplectic structure. This must be solved by allowing the forms on $\mR^4$ to be of higher vertical degree. 
\end{example}
\begin{example}[On the last condition II]\label{Ex:LastCondII}
  The subtleties we could found related to the last item could also be illustrated by the following toy model
  \[
  \left(\mR^5\rightarrow\mR^3\rightarrow\mR^2,0,\left\langle \theta,\Gamma_1,\Gamma_2\right\rangle _{\text{diff}}\right),
  \]
  where the fibration structure is given by
  \[
  \left(x,y,u,p,q\right)\stackrel{\pi_{10}}{\longmapsto}\left(x,y,u\right)\stackrel{\pi}{\longmapsto}\left(x,y\right)
  \]
  and $\theta:=\dif u-p\dif x-q\dif y,\Gamma_1:=\dif p\wedge\dif x,\Gamma_2:=\dif q\wedge\dif y$. Then $\left(V\pi_{10}\right)^0=\left\langle \dif x,\dif y,\dif u\right\rangle _{\text{alg}}$ and so
  \[
  \left(\wedge^2_2\mR^5\right)^V=\left\{\left(x,y,u,p,q,a\dif x\wedge\dif y+b\dif u\wedge\dif x+c\dif u\wedge\dif y\right):\left(x,y,u,p,q,a,b,c\right)\in\mR^8\right\};
  \]
  it is evident that this set, although fulfilling the two first requirements above, does not meet the third, because it does not contain multiples of the generators $\Gamma_i,i=1,2$.
\end{example}
The last condition sets a constraint on the number $p$: It must be large enough to allow $\widetilde{W}_\lambda$ to fulfills it. So, in order to obtain a well-defined quantity, let us take $p$ as the minimum integer making it true. Additionally, it sets a constraint on the bundle $F$ as fibration on $E$, because the bundle $\pi_{10}^*\left(\wedge^m_pE\right)$ does not contain any form in the $\left(V\pi_{10}\right)^0$-direction; in fact, we could reformulate the bundles in Example \ref{Ex:LastCondII} in order to find a well-behaved variational problem: It is just enough to change the bundle $\mR^5\rightarrow\mR^3\rightarrow\mR^2$ by $\mR^5\rightarrow\mR^5\rightarrow\mR^2$.

\begin{defc}[Admissible variational problem]
  With the notation introduced above, we will say that the variational problem $\left(F\rightarrow E\rightarrow M,\lambda,\cI\right)$ is \emph{admissible} if and only if $\cI$ is linearly generated by a subbundle $I\subset\wedge^\bullet F$, the set $I\cap\left(\wedge^m_pF\right)^V$ has constant rank, and there exists an integer $p\leq m$ such that $\widetilde{W}_\lambda$ contains the Lagrangian form and every multiple of a set of algebraic generators of $\cI$ in $\left(\wedge^m_pF\right)^V$.
\end{defc}

The main purpose of the notion of admissibility for variational problems is to set the next result. It is formulated by using the notion of \emph{covariant Lepage-equivalent variational problem}, borrowed from \cite{GotayCartan} (see also \cite{Gotay1991203}.) In short, it means that every extremal for the variational problem $\left(W_\lambda,\Theta_\lambda,0\right)$ projects onto an extremal of the original variational problem via $p_1:W_\lambda\rightarrow F$.
\begin{proposition}
  Let $\left(F,\lambda,\cI\right)$ be an admissible variational problem. Then $\left({W}_\lambda,\Theta_{\lambda},0\right)$ is a covariant Lepage-equivalent variational problem.
\end{proposition}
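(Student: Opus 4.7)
The plan is to check the two conditions of Definition~\ref{Def:LepEquivProblem} and then establish covariance in the sense introduced just before the proposition. I will take $\nu := p_1 : W_\lambda \to F$ as the candidate submersion; admissibility (constant rank of $I \cap (\wedge^m_p F)^V$) makes $W_\lambda$ a genuine bundle over $F$, so $p_1$ is a surjective submersion, and commutativity of the Lepage triangle reduces to $\rho = \pi \circ \pi_{10} \circ p_1$, which holds by construction.

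To verify condition~(ii) of Definition~\ref{Def:LepEquivProblem}, I would take a section $\gamma : M \to W_\lambda$ of $\rho$ with $\sigma := p_1 \circ \gamma$ integral for $\cI$, and locally along $\sigma(M)$ extend $\gamma$ to a section $\tilde\alpha : U \to W_\lambda$ of $p_1$ defined over a tubular neighbourhood $U \subset F$ of $\sigma(M)$. Under the identification $(\wedge^m_p F)^V \simeq \pi_{10}^*(\wedge^m_p E)$, write $\tilde\alpha = \lambda + \tilde\beta$ with $\tilde\beta$ a local section of $I \cap (\wedge^m_p F)^V$. Using the defining property $\tilde\alpha^*\Theta = \tilde\alpha$ of the canonical $m$-form together with the factorisation $\gamma = \tilde\alpha \circ \sigma$, one computes
\[
\gamma^*\Theta_\lambda = \sigma^*\tilde\alpha = \sigma^*\lambda + \sigma^*\tilde\beta = \sigma^*\lambda,
\]
the last equality following from $\tilde\beta \in \Gamma(I) \subset \cI$ and the integrality of $\sigma$. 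This establishes the Lepage-equivalence.

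For covariance, let $\gamma$ be an extremal of $(W_\lambda, \Theta_\lambda, 0)$; by Proposition~\ref{Prop:OursEquivalentOnVars}, $\gamma$ is an integral section of the Hamilton-Cartan EDS. I would split admissible variations of $\gamma$ into those vertical for $p_1$---moving $\gamma(x)$ within the fibre $I \cap (\wedge^m_p F)^V|_{\sigma(x)}$---and those projecting non-trivially to $F$. Testing against variations of the first kind of the shape $\delta\gamma = \mu_a \wedge \theta_a$ (with $\{\theta_a\}$ a set of algebraic generators of $\cI$ and $\mu_a$ of complementary degree, the wedge product lying in $W_\lambda$ by the admissibility clause), the vanishing of $\int_M (\sigma^*\mu_a) \wedge (\sigma^*\theta_a) = 0$ for arbitrary $\mu_a$ forces $\sigma^*\theta_a = 0$ pointwise, and closedness of $\cI$ under $\dif$ upgrades this to $\sigma^*\cI = 0$. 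With $\sigma$ now integral, any variation $\delta\sigma$ of $\sigma$ through integral sections lifts to a variation of $\gamma$ in $W_\lambda$ by deforming $\tilde\alpha$ smoothly in the parameter, and the pullback identity of~(ii) applies along the whole family; stationarity of $\gamma$ then transfers to stationarity of $\sigma$ within the constrained problem.

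The principal obstacle is the integrality argument: one must verify that admissibility really supplies enough freedom in the multipliers $\mu_a$---both as fibrewise directions at each point and as smoothly varying local sections---to draw the pointwise conclusion $\sigma^*\theta_a = 0$. A related but lesser technical point is that the family of lifts $\tilde\alpha_t$ used in transferring stationarity must remain inside $W_\lambda$ smoothly in $t$, which is again guaranteed by the constant-rank condition built into the definition of admissibility.
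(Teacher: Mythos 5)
Your proof is correct and takes essentially the same route as the paper's: you establish the pullback identity $\gamma^*\Theta_\lambda=\sigma^*\lambda$ via the tautological property of the canonical form, force $\sigma^*\cI=0$ by varying $\gamma$ fibrewise along multiples of the generators of $\cI$ (which admissibility places inside $W_\lambda$), and then transfer stationarity through the pullback identity. The only cosmetic difference is that the paper phrases the fibrewise-variation step infinitesimally, via vertical vector fields $\delta\beta$ tangent to $W_\lambda$ satisfying $\delta\beta\lrcorner\dif\Theta_\lambda=p_1^*\beta$ together with the Hamilton--Cartan EDS form of the extremal condition, rather than as a first variation of the action; the point you flag about having enough multipliers is left at the same level of precision in the paper's own argument.
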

\begin{proof}
  The proof goes in a similar way to the discussion of the canonical Lepage-equivalent problem in \cite{GotayCartan}. We need to prove that
  \begin{enumerate}[(i)]
  \item for every extremal $\gamma$ of $\left(W_\lambda,\Theta_\lambda,0\right)$ projecting onto an extremal of $\left(F,\lambda,\cI\right)$, we have that $\phi:=p_1\circ\gamma$ verifies $\phi^*\cI=0$, and\label{Item1}
  \item $\psi^*\lambda=\sigma^*\Omega_\lambda$ for every section $\sigma:M\rightarrow W_\lambda$ such that $\psi:=p_1\circ\sigma$ is an integral section of $\cI$.\label{Item2} 
  \end{enumerate}
  For \ref{Item1}, we will use that admissibility means that every generator (or multiple of it) $\beta$ of $\cI$ belongs to $W_\lambda$, in the sense that $\left.\beta\right|_f\in W_\lambda$ for every $f\in F$ where it is defined, and so the curve
  \[
  \beta^\alpha_t:=\alpha+t\left.\beta\right|_f,\qquad t\in\mR
  \]
  for every $\alpha\in\left.W_\lambda\right|_f$; therefore the vector field
  \[
  \delta\beta:\alpha\mapsto\left.\frac{\overrightarrow{\text{d}\beta^\alpha_t}}{\text{d}t}\right|_{t=0}
  \]
  is tangent to $W_\lambda$, and it meets the formula
  \[
  \delta\beta\lrcorner\Omega_\lambda=p_1^*\beta
  \]
  just as in the cotangent bundle case. Thus
  \begin{align*}
    \phi^*\beta&=\gamma^*p_1^*\beta\\
    &=\gamma^*\left(\delta\beta\lrcorner\Omega_\lambda\right)=0
  \end{align*}
  for every generator $\beta$ of $\cI$; it implies that $\phi^*\cI=0$. Then \ref{Item2} follows by taking into account the formula defining the canonical $m$-form, namely
  \[
  \sigma^*\left(\left.\Theta_\lambda\right|_\alpha\right)=\sigma^*p_1^*\alpha=\psi^*\alpha,
  \]
  and that $\alpha\in W_\lambda$ iff $\alpha=\lambda+\rho$ for some $\rho\in\cI$. Now with \ref{Item2} at our disposal, it is immediate to see that if $\phi_t$ is a curve in the integral sections of $\cI$ and $\gamma_t$ is any section of $W_\lambda$ covering it through $p_1$ such that $\gamma_0$ is an extremal of $\left(W_\lambda,\Theta_\lambda,0\right)$, we will have that $\phi_t^*\lambda=\omega_t\Theta_\lambda$ and so
  \[
  \left.\frac{{\text{d}}}{\text{d}t}\left(\int_M\phi_t^*\lambda\right)\right|_{t=0}=\left.\frac{{\text{d}}}{\text{d}t}\left(\int_M\gamma_t^*\Theta_\lambda\right)\right|_{t=0}=0
  \]
  because $\gamma_0$ was assumed to be extremum.
\end{proof}
As Gotay pointed out in the cited work, there is no warranty on the Lepage-equivalent problem to have the same solutions as the original, namely, there is no general proof of the so called \emph{contravariance} of this Lepage-equivalent problem. In the following we will take this as granted; in practice, this condition must be verified in each particular case separately, except in the case of classical variational problems, where proofs of the contravariance of this Lepage-equivalent can be found in the literature, see for instance \cite{GotayCartan,KrupkaVariational}. In the examples of Section \ref{Sec:Examples}, we provide a proof of this property in those cases where it was necessary. 


\begin{defc}[Restricted Hamiltonian system associated to a variational problem]\label{Def:RestHamSsytForVarsProb}
  The \emph{restricted Hamiltonian system associated to the admissible variational problem }$\left(F,\lambda,\cI\right)$ is the (pre) multisymplectic manifold $\left(W_\lambda,\Theta_\lambda\right)$.
\end{defc}

Whenever Lepage-equivalent problem $\left(W_\lambda,\Theta_\lambda,0\right)$ is contravariant, the associated restricted Hamiltonian system has the same extremals of the original problem, and so its substitution keeps the solution set. It is a quite desirable property, because it is our interest to work with a problem without loosing crucial information.

\section{Tentative Gotay algorithm for general variational problems}\label{Sec:GotayAlgorithm}

\subsection{Introduction}
It is our purpose in the following paragraphs to describe an algorithm to deal with solutions of a restricted Hamiltonian system $\left(W,\Omega\right)$
\begin{equation}\label{Eq:RestHamTentative}
  Z\lrcorner\Omega=0,\qquad Z\lrcorner p^*\eta=1
\end{equation}
of similar nature to Gotay algorithm; here we are taking as granted that $W$ is a bundle $p:W\rightarrow M$ on spacetime, and $\eta$ is a volume form on $M$. Now the constraint algorithms of classical mechanics are designed for the search of the so called \emph{final constraint manifold}, which can be characterized by the property that through every of its points passes at least one solution of the underlying equations of motion. When dealing with (analytical) systems of PDE expressed as linear Pfaffians, a condition ensuring integrability is \emph{involutivity}, and there exists an algorithm designed to find another (involutive) linear Pfaffian (defined perhaps in a subset) whose solutions induces solutions of the original \cite{BCG,CartanBeginners,nkamran2000}. In particular, the subset obtained from this procedure must be included into the final constraint manifold, so it gives a kind of Gotay algorithm in this context.

There exists another reason to develop such an algorithm: In \cite{EcheverriaEnriquez:1998jr,DeLeon2005839} a procedure of this sort is given for deal with solutions of Equation \eqref{Eq:RestHamTentative}, which face two obstacles: The tangency condition (arising when asks the factors of $Z$ to be vectors tangent to the constraint manifold) and the integrability condition, the latter being a distinguished feature of this context, absent for dimensional reasons when working with these kind of algorithms in classical mechanics. The former condition is an expected outcome when dealing with ``contravariant'' elements, where no natural pullback exists; additionally, the integrability condition must be implemented by means of Lie brackets, a not very efficient method for work with these type of conditions\footnote{In order to have a look to the kind of difficulties people faces in dealing with integrability issues from Frobenius viewpoint, see for example \cite{1751-8121-46-10-105201} and references therein.}. The dual perspective offers advantages in both aspects: Forms have natural pullback, and the existence of exterior differentiation yields to methods for deal with integrability issues. 

Finally we would bring to the attention of the reader an additional advantage of the approach chosen here to (a procedure analogous to) Gotay algorithm: When applied to field theory, it does not depends on the choice of slices of the bundle of fields (the constant time leaves.) So we are working with a \emph{covariant constraint algorithm} for field theory. Additionally it could be considered as an approach (from the viewpoint of Cartan's EDS) to the Dirac theory of constraints, complementary to the analysis of the same subject carried out in \cite{Seiler95involutionand} from the viewpoint of Janet-Riquier theory of formal integrability.

\subsection{Formulation of the algorithm}
So, the idea is to use a known algorithm, dubbed \emph{Cartan algorithm} \cite{CartanBeginners,Hartley:1997:IAN:2274723.2275278}, useful when dealing with linear Pfaffian EDS, in order to get rid of the integrability condition; the flowchart shown in Figure \ref{Fig:CartanAlgorithm} sketch it, and further details can be found in Appendix \ref{App:EDSInvolution}.
\begin{figure}[h]
  \begin{tikzpicture}[node distance = 3cm, auto]
    \node [block] (init) {\textbf{Input:}\\Linear Pfaffian $\left(I,J\right)$ on $\Sigma$};
    \node [block, left of=init] (rename) {Rename $\Sigma'\to\Sigma$};
    \node [block, right of=init] (prolong) {Prolong};
    \node [decision, below of=init] (zerotorsion) {Is $\left[T\right]=0$?};
    \node [decision, left of=zerotorsion] (empty) {Is $\Sigma'=0$?};
    \node [decision, right of=zerotorsion] (involutive) {Is the EDS involutive?};
    \node [block, below of=zerotorsion] (restrict) {Restricts $\Sigma$ to $\Sigma'$};
    \node [block, below of=involutive] (existence) {\textbf{There exists solutions!}};
    \node [block, below of=empty] (failure) {\textbf{There are no integral manifolds}};
    \path [line] (init) -- (zerotorsion);
    \path [line] (zerotorsion) -- node [near start] {yes} (involutive);
    \path [line] (zerotorsion) -- node [near start] {no} (restrict);
    \path [line] (involutive) -- node [near start] {yes} (existence);
    \path [line] (involutive) -- node [near start] {no} (prolong);
    \path [line] (empty) -- node [near start] {yes} (failure);
    \path [line] (empty) -- node [near start] {no} (rename);
    \path [line] (restrict) -- (empty);
    \path [line] (rename) -- (init);
    \path [line] (prolong) -- (init);
  \end{tikzpicture}
  \caption{The flowchart for Cartan algorithm (from \cite{CartanBeginners}.)}\label{Fig:CartanAlgorithm}
\end{figure}
The way to fit in this scheme is to introduce the canonical contact structure on the Grassmann bundle $\mathsf{Gr}_m\left(W,p^*\eta\right)$; the canonical structure is differentially generated by the sets of forms
\[
\left.I\right|_{\left(w,E\right)}:=\pi^*\left(E^0\right)\subset\Omega^1\left(G_m\left(TW,p^*\eta\right)\right).
\]
The crucial fact is that equations \eqref{Eq:RestHamTentative} could be used to define a subbundle $G_0$ of $G_m\left(TW,p^*\eta\right)$, as the zero set of the mapping
\[
E\mapsto X\lrcorner\Omega\qquad\text{if and only if}\qquad X\in\mathfrak{X}^nW\text{ such that }\left[X\right]=E,
\]
where $\left[\cdot\right]$ indicates the subspace spanned by the components of the decomposable $m$-multivector in it. In the case of restricted Hamiltonian system associated to a classical field theory, the subset $G_0$ is composed by the $n$-planes defined by the PDE system of Hamilton equations, so we could introduce the following notation.

\begin{defc}
  The set $G_0$ will be called the \emph{Hamilton submanifold} of $G_m\left(TW,p^*\eta\right)$; the EDS $\cI_0$ induced on $G_0$ by the canonical structure will be referred as \emph{Hamilton (linear) Pfaffian}.
\end{defc}

\begin{note}[Local calculations]
  It is convenient to introduce a local description of these objects in order to have an intuition on these matters. Recall that every adapted coordinate chart $\left(x^i,u^A\right)$ on $W$ induces the coordinates $\left(x^i,u^A,p_j^B\right)$ on an open set $U\subset G_n\left(TW,p^*\eta\right)$ such that every $E\in U$ fulfills
  \[
  E=\left\langle \frac{\partial}{\partial x^i}+p^A_i\left(E\right)\frac{\partial}{\partial u^A}:i=1,\cdots,n\right\rangle .
  \]
  Then if we parametrize the set of solutions of Equation \eqref{Eq:RestHamTentative} by $Z:=Z_1\wedge\cdots\wedge Z_n$, where
  \[
  Z_i:=\frac{\partial}{\partial x^i}+Z^A_i\frac{\partial}{\partial u^A},
  \]
  then $G_0$ is described by the equations
  \[
  p^A_i=Z^A_i, 
  \]
  and the generators of $\cI$ reads
  \[
  \theta^A:=\dif u^A-Z^A_k\dif x^k.
  \]
\end{note}
\begin{note}
  From now on we will suppose that $\Omega$ in our restricted Hamiltonian systems meets a \emph{regularity criteria}, namely, that $G_0$ is a subbundle of $\mathsf{Gr}_m\left(W\right)$.
\end{note}
The relevance of this concept lies on the following result; in part it justifies the choice of language made in the previous definition.
\begin{prop}\label{Prop:IntSectionsI0}
  The integral sections for the bundle $G_0\rightarrow M$ of the EDS with independence condition
  \[
  \left(\cI_0,p^*\eta\right)
  \]
  are in one-to-one correspondence with the solutions of the Hamilton equations \eqref{Eq:RestHamTentative}.
\end{prop}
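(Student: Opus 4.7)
The plan is to exploit the standard fact that integral sections of the canonical contact structure on a Grassmann bundle over $W$ coincide with Gauss lifts of sections of $W\to M$; specialized to the subbundle $G_0$, this correspondence will translate the defining condition $s(x)\in G_0$ directly into the Hamilton equations \eqref{Eq:RestHamTentative}.

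First I would work out the picture in the local coordinates $\left(x^i,u^A,p^A_j\right)$ on $\mathsf{Gr}_m\left(TW,p^*\eta\right)$ introduced in the Note above. In these coordinates the canonical contact Pfaffian is differentially generated by $\theta^A:=\dif u^A-p^A_k\dif x^k$. A section $s\colon M\to\mathsf{Gr}_m\left(TW,p^*\eta\right)$ with local description $s(x)=\left(x^i,u^A(x),p^A_j(x)\right)$ satisfies $s^*\theta^A=0$ if and only if $p^A_j(x)=\partial u^A/\partial x^j$, i.e.\ iff $s$ is the Gauss lift of the section $\sigma:=\pi\circ s$ of $p\colon W\to M$.

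With this in hand the forward direction is almost a rewriting of definitions. Let $s\colon M\to G_0$ be integral for $\left(\cI_0,p^*\eta\right)$; because $G_0\to M$ factors through $\pi$ and $p$, the composite $\sigma:=\pi\circ s$ is automatically a section of $p$ and $s^*p^*\eta=\eta$ is nowhere vanishing. Integrality forces $s(x)=T_x\sigma\left(T_xM\right)$, so the decomposable $m$-vector along $\sigma(M)$
\[
Z_\sigma:=T\sigma\cdot\bigl(\partial/\partial x^1\wedge\cdots\wedge\partial/\partial x^m\bigr)
\]
spans $s(x)$ and is normalized by $Z_\sigma\lrcorner p^*\eta=1$; since $s(x)\in G_0$, it must satisfy $Z_\sigma\lrcorner\Omega=0$, which is exactly \eqref{Eq:RestHamTentative}. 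Conversely, a solution $Z$ of \eqref{Eq:RestHamTentative} together with the section $\sigma\colon M\to W$ integrating it (which, in view of Proposition \ref{Prop:OursEquivalentOnVars} and the discussion following it, is what a ``solution of the Hamilton equations'' means geometrically) produces an integral section of $\left(\cI_0,p^*\eta\right)$ simply by taking the Gauss lift $s(x):=T_x\sigma\left(T_xM\right)$.

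The one point where care is needed, and where I would invoke the regularity hypothesis stated in the Note just before the proposition, is that $G_0$ be a subbundle of $\mathsf{Gr}_m\left(TW,p^*\eta\right)$: this is what makes $G_0\to M$ a genuine fibered manifold and $\cI_0$ a well-defined EDS on it, so that the two constructions above yield smooth objects on which the coordinate computation applies uniformly. Once this regularity is in place the remainder is a pointwise tautological identification of the algebraic Hamilton equations with membership of the Gauss lift in $G_0$.
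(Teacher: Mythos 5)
Your argument is correct, and it is exactly the argument the paper intends: the proposition is in fact stated without proof there, but the preceding Note introducing the coordinates $\left(x^i,u^A,p^A_j\right)$ and the generators $\theta^A=\dif u^A-Z^A_k\dif x^k$ is precisely the Gauss-lift identification you spell out, so your write-up supplies the missing details rather than a different route. The only point worth polishing is the normalization: $T\sigma\left(\partial_1\wedge\cdots\wedge\partial_m\right)\lrcorner p^*\eta$ equals $\eta\left(\partial_1,\dots,\partial_m\right)$ rather than $1$ in general coordinates, but since $\eta$ is a volume form a positive rescaling fixes this without affecting $Z\lrcorner\Omega=0$, so the correspondence is unharmed.
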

This proposition finish the two-way road going from a variational problem to a linear Pfaffian EDS, schematically presented in Figure \ref{Fig:DifferentSolutions}.
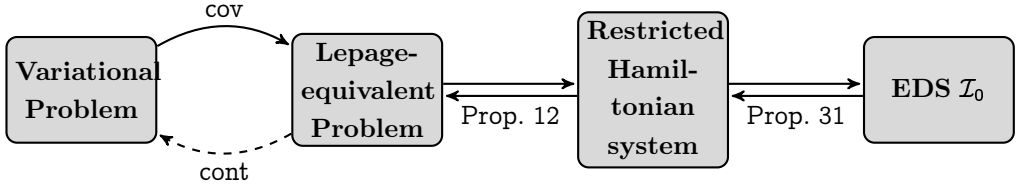
\begin{figure}[h]
  \begin{tikzpicture}[->,>=stealth',shorten >=1pt,auto,node distance=3.8cm,thick]
    \node [block] (vp) {\textbf{Variational\\Problem}};
    \node [block, right of=vp] (lep) {\textbf{Lepage-equivalent Problem}};
    \node [block, right of=lep] (rhs) {\textbf{Restricted Hamiltonian system}};
    \node [block, right of=rhs] (eds) {\textbf{EDS }$\cI_0$};
    \path
    (lep) edge [bend left,dashed] node[below] {cont} (vp)
    (vp) edge [bend left] node[above] {cov} (lep);
    \draw[transform canvas={yshift=0.5ex},->] (lep) -- (rhs);
    \draw[transform canvas={yshift=-0.5ex},->] (rhs) -- (lep) node[below,midway] {Prop.\ \ref{Prop:OursEquivalentOnVars}};
    \draw[transform canvas={yshift=0.5ex},->] (rhs) -- (eds);
    \draw[transform canvas={yshift=-0.5ex},->] (eds) -- (rhs) node[below,midway] {Prop.\ \ref{Prop:IntSectionsI0}};
  \end{tikzpicture}
  \caption{The different representations of the dynamical problem}\label{Fig:DifferentSolutions}
\end{figure}
It means that we can translate extremals of the original variational problem into extremals of the Lepage-equivalent problem, they can be rewritten as solutions of a restricted Hamiltonian system, and finally into integral sections of a linear Pfaffian EDS; the maps inducing these correspondence are fully understood, and were described early. Therefore, under the assumption of contravariance of the chosen Lepage-equivalent problem, it is the same to work with the original variational problem, its Lepage-equivalent, the restricted Hamiltonian system or with the associated Hamilton Pfaffian. By choosing the final option, we are in position to use the Cartan algorithm: After the first sequence of absortion of torsion, elimination of $0$-forms and a unique prolongation, we will obtain the following diagram
\begin{figure}[h]
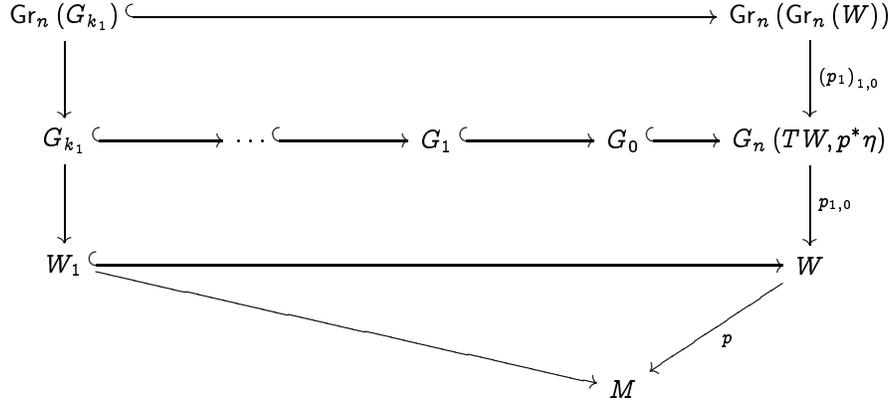

  \[
  \begin{diagram}
    \node{\mathsf{Gr}_n\left(G_{k_1}\right)}\arrow[4]{e,t,J}{}\arrow{s,l}{}\node[4]{\mathsf{Gr}_n\left(\mathsf{Gr}_n\left(W\right)\right)}\arrow{s,r}{\left(p_{1}\right)_{1,0}}\\
    \node{G_{k_1}}\arrow{e,b,J}{}\arrow{s,l}{}\node{\cdots}\arrow{e,b,J}{}\node{G_1}\arrow{e,b,J}{}\node{G_0}\arrow{e,b,J}{}\node{G_n\left(TW,p^*\eta\right)}\arrow{s,r}{p_{1,0}}\\
    \node{W_1}\arrow[4]{e,t,J}{}\arrow{seee,b}{}\node[4]{W}\arrow{sw,r}{p}\\
    \node[4]{M}
  \end{diagram}
  \]
  \caption{A sequence of restrictions plus a prolongation}
  \label{Fig:SeqRestrictionProl}
\end{figure}
where we use the shorthand $\mathsf{Gr}_n\left(F\right):=G_n\left(TW,q^*\eta\right)$ for any bundle $q:F\rightarrow M$, and the convention that when using the Grassmannian of such a bundle, the induced morphisms reads
\[
\begin{diagram}
  \node{\mathsf{Gr}_n\left(F\right)}\arrow{e,t}{q_{1,0}}\arrow{se,b}{q_1}\node{F}\arrow{s,r}{q}\\
  \node[2]{M}
\end{diagram}
\]
Under smoothness assumptions, a successful termination of the algorithm yields to the diagram
\[
\begin{diagram}
  \node{{G_{k_r}'}}\arrow{s,l,J}{}\arrow{e,t}{}\node{\cdots}\arrow{e,t}{}\node{{G_{k_1}'}}\arrow{e,t}{}\arrow{s,l,J}{}\node{{W'}}\arrow{se,t}{}\arrow{s,l,J}{}\\
  \node{\mathsf{Gr}_n^{\left(r\right)}\left(\widetilde{W}\right)}\arrow{e,b}{}\node{\cdots}\arrow{e,b}{}\node{\mathsf{Gr}_n\left(\widetilde{W}\right)}\arrow{e,b}{}\node{\widetilde{W}}\arrow{e,b}{}\node{M}
\end{diagram}
\]
where the primes are indicating that every prolongation could induce further restrictions on the spaces at right, and
\[
\mathsf{Gr}^{\left(n\right)}\left(F\right)=\mathsf{Gr}\left(\mathsf{Gr}^{\left(n-1\right)}\left(F\right)\right),\quad\mathsf{Gr}^{\left(0\right)}\left(F\right)=F.
\]
After imposing the differential conditions induced by the multiple contact structures underlying the spaces in this diagram, we could extract the following information:
\begin{itemize}
\item ${W}'$ provide us the restrictions we need to impose on the dependent variables.
\item The set ${G_{k_1}'}$ restricts the components of the multivector solution $Z$.
\item The sets ${G_{k_j}'},j\geq2$ give us restrictions on the derivatives of the components of $Z$.
\end{itemize}

\begin{note}[A discussion concerning contravariance of Lepage-equivalent problem]
  We have two remarks to make at this points, both concerning the proof of contravariance:
  \begin{itemize}
  \item When an existence result is at our disposal, there exists some arguments at hands in order to ensures contravariance of Lepage-equivalent problem $\left(W_\lambda,\Theta_\lambda,0\right)$. The idea is to see the behaviour of regular integral elements of the final space $G_{k_r}\subset \mathsf{Gr}_m^{\left(r\right)}\left(\widetilde{W}\right)$ under the differential of the map $\mathsf{Gr}_m^{\left(r\right)}\rightarrow W'\hookrightarrow\widetilde{W}\rightarrow F$; because an integral section of the original variational problem $\left(F,\lambda,\cI\right)$ gives rise to $m$-planes on $F$, the question reduces to see if these $m$-planes are covered by regular integral elements of the Lepage-equivalent problem via the above mentioned map. Because of the covariance, every projection of a regular element must be a infinitesimal solution of the Euler-Lagrange equations; nevertheless, it is not always clear whether every of the planes tangent to solutions of Euler-Lagrange equations has a regular integral element on it.
  \item It is important to point out that in general the proof of contravariance (see Subsections \ref{Sec:ContravarianceInt1} and \ref{Sec:ContravarianceInt2}) yields to the study of a new EDS, namely, the pullback of the Hamilton-Cartan EDS to the subbundle $p_1^{-1}\left(\text{Im}\,s\right)$ for $s:M\rightarrow F$ a solution of the Euler-Lagrange equations. A well-chosen Lepage-equivalent problem gives rise to an EDS which can be easily solved, in order to establish the contravariance; for example, the classical variational problem for field theory yields to a case like this, and the multitude of Cartan forms that can be found for it could be associated to the differents solutions that can be discovered to this underlying EDS.
  \end{itemize}
\end{note}

\section{Examples}
\label{Sec:Examples}

We are ready to see how the scheme discussed in the previous sections works on particular examples. They were chosen in order to try to make connections with systems studied in the literature from another viewpoint.

\subsection{Examples from classical mechanics}\label{Sec:ClassicalExample}

Although designed in order to get rid of constraints arising from field theory, it is interesting to note that this algorithm can be used in the classical mechanics realm.

\subsubsection{Geometrical setting}

The underlying geometrical setting is best summarized by the following diagram
\[
\begin{diagram}
  \node{\left(\text{id}\times\tau_Q\right)^*\left(T^*\left(\mR\times Q\right)\right)}\arrow{e,t}{p_2}\arrow{s,l}{p_1}\node{\wedge^1_1\left(\mR\times Q\right)}\arrow{s,r}{\bar{\tau}_{\mR\times Q}}\\
  \node{\mR\times TQ}\arrow{e,b}{\text{id}\times\tau_Q}\node{\mR\times Q}\arrow{e,b}{}\node{\mR}
\end{diagram}
\]
and the additional identifications
\begin{align*}
  \left(\text{id}\times\tau_Q\right)^*\left(T^*\left(\mR\times Q\right)\right)&=\mR\times TQ\oplus_{\mR\times Q}T^*\left(\mR\times Q\right),\\
  \wedge^1_1\left(\mR\times Q\right)&=T^*\left(\mR\times Q\right).
\end{align*}
Thus we will work on the subbundle of the space of forms
\[
\left(\text{id}\times\tau_Q\right)^*\left(T^*\left(\mR\times Q\right)\right)=\left\{\left(t,q^i,v^j,\alpha\dif t+p_i\dif q^i\right)\right\}\subset\wedge^1_1\left(\mR\times TQ\right)
\]
with the inclusion given by the identification
\[
\left(\text{id}\times\tau_Q\right)^*\left(T^*\left(\mR\times Q\right)\right)=\left(\wedge^1_1\left(\mR\times TQ\right)\right)^V;
\]
therefore we have a multisymplectic structure induced by the canonical structure
\[
\left.\Omega\right|_{\left(t,q,v;\alpha,p,r\right)}:=\dif \alpha\wedge\dif t+\dif p_i\wedge\dif q^i+\dif r_i\wedge\dif v^i\in\Omega^2\left(\wedge^1_1\left(\mR\times TQ\right)\right).
\]
Let us suppose that we have a Lagrangian system described by the Lagrangian function $L\in C^\infty\left(\mR\times TQ\right)$. Our immediate task is to define the set $\widetilde{W}_{L}$ according to the formula \eqref{Eq:DeffWL}; because the contact structure is generated by the sections of the subbundle
\[
\mR\left(\dif q^i-v^i\dif t\right)\subset\wedge^1_1\left(\mR\times TQ\right),
\]
we will have that
\[
\widetilde{W}_{L}=\left\{L\dif t+p_i\left(\dif q^i-v^i\dif t\right)\right\}\subset\left(\text{id}\times\tau_Q\right)^*\left(T^*\left(\mR\times Q\right)\right),
\]
and the (pre)symplectic $2$-form will be given by $\Omega^L:=\left.\Omega\right|\widetilde{W}_L$, namely
\[
\left.\Omega^L\right|_{\left(t,q,v,p\right)}:=\dif\left(L-p_iv^i\right)\wedge\dif t+\dif p_i\wedge\dif q^i\in\Omega^2\left(\widetilde{W}_L\right).
\]
The restricted Hamiltonian system has the Hamilton equations
\begin{equation}\label{Eq:EcsHam1D}
  Z\lrcorner\Omega^L=0,\qquad Z\lrcorner\dif t=1
\end{equation}
which define the vector field $Z$. In the following paragraphs we will use this approach to work with some singular Lagrangian systems.

\subsubsection{Example from \cite{Sundermeyer}}

Let us consider the singular Lagrangian
\[
L\left(q^1,q^2,v^1,v^2\right):=\frac{1}{2}\left(v^1\right)^2+q^2v^1+\left(1-\alpha\right)q^1v^2+\beta\left(q^1-q^2\right)^2
\]
where $\alpha$ and $\beta$ are some constants. The coordinates on $\mathsf{Gr}_1\left(\widetilde{W}_L\right)$ are
\[
\left(t,q^i,v^i,p_i;Z^{q^i},Z^{v^i},Z^{p_i}\right)
\]
if and only if they represent the line spanned by the vector
\[
Z:=\frac{\partial}{\partial t}+Z^{q^i}\frac{\partial }{\partial q^i}+Z^{v^i}\frac{\partial }{\partial v^i}+Z^{p_i}\frac{\partial }{\partial p_i}\in T_{\left(t,q^i,v^i,p_i\right)}\widetilde{W}_L.
\]
The Hamilton equations \eqref{Eq:EcsHam1D} yield to
\begin{align*}
  Z^{p_1}&=\beta\left(q^1-q^2\right)+\left(1-a\right)v^2,\\
  Z^{p_2}&=v^1-\beta\left(q^1-q^2\right),\\
  Z^{q^1}&=v^1,\\
  Z^{q^2}&=v^2
\end{align*}
alongside the restrictions
\[
p_1=q^2+v^2,\quad p_2=\left(1-\alpha\right)q^1.
\]
These equations define a subbundle $G_0$ of the Grassmann bundle $\mathsf{Gr}_1\left(\widetilde{W}_L\right)$ projecting onto a submanifold $W_0\subset\widetilde{W}_L$, which is defined by the two last equations, fitting in the commutative diagram
\[
\begin{diagram}
  \node{G_0}\arrow{s,l}{}\arrow{e,t,J}{}\node{\mathsf{Gr}_1\left(\widetilde{W}_L\right)}\arrow{s,r}{}\\
  \node{W_0}\arrow{e,b,J}{}\node{\tilde{W}_L}
\end{diagram}
\]
The restriction of the contact structure, locally generated by the $1$-forms
\[
\theta^{q^i}:=\dif q^i-Z^{q^i}\dif t,\quad\theta^{v^i}:=\dif v^i-Z^{v^i}\dif t,\quad\theta^{p_i}:=\dif p_i-Z^{p_i}\dif t
\]
gives rise to the initial EDS $\cI_0$ on $G_0$; in particular, it contains $0$-forms yielding to the restrictions
\[
Z^{v^1}=\beta\left(q^1-q^2\right)-\alpha v^2,\quad\beta\left(q^1-q^2\right)-\alpha v^1=0.
\]
In the reference cited above, the analysis turns out to depends onto the differents values of these numbers; it is true in our case, and we will proceed accordingly.
\begin{itemize}
\item\textbf{Case $\left(1\right)$: $a\not=0$.} In this case we can restrict ourselves to a subbundle $G_1\subset G_0$ fibred on the submanifold $W_1\subset W_0$ defined there by the second equation above. Under the assumption $\alpha\not=0$, on $G_1$ the velocity $v^1$ can be expressed as function of the rest of the coordinates. The EDS $\cI_1$ determined by pullback of $\cI_0$ contains another $0$-form, which gives us another constraint, namely
  \[
  \left(\alpha^2-\beta\right)\left(av^2-\beta\left(q^1-q^2\right)\right)=0.
  \]
  We have here a couple of subcases we need to take care of:
  \begin{itemize}
  \item \textbf{Case $\left(1.A\right)$: $\beta\not=\alpha^2$.} Then it shows up an additional subbundle $G_2$ fibred on the submanifold $W_2\subset W_1$ determined by the formula
    \[
    v^2=\frac{\beta\left(q^1-q^2\right)}{\alpha};
    \]
    the EDS $\cI_2$ induced here is involutive, yielding to the equations of motion
    \[
    \dot{q}^1=\dot{q}^2=\frac{\beta}{\alpha}\left(q^1-q^2\right).
    \]
  \item \textbf{Case $\left(1.B\right)$: $\beta=\alpha^2$.} In this case the EDS $\cI_1$ on $G_1$ is involutive, and this means that $W_1$ is the final constraint submanifold; the equations of motion there become
    \[
    \dot{q}^1=\alpha\left(q^1-q^2\right)
    \]
    with no further restrictions on the coordinate $q^2$.
  \end{itemize}
\item \textbf{Case $\left(2\right)$: $a=0$.} The EDS $\cI_0$ contains a set of $0$-forms yielding to the equations
  \[
  \beta\left(q^2-q^1\right)=0,\qquad Z^{v^1}=0,
  \]
  and the first of these gives rise to additional constraints under further assumptions:
  \begin{itemize}
  \item \textbf{Case $\left(2.A\right)$: $\beta\not=0$.} In such case we have that the subbundle $G_0$, fibered on the submanifold $W_0$, is determined by the equation
    \[
    q^1=q^2.
    \]
    The pullback along these restrictions gives rise to the new constraint
    \[
    v^1-v^2=0,
    \]
    which determines another subbundle $G_1'\subset G_0$ and the corresponding base submanifold $W_1'\subset W_0$; on $G_1'$ there exists an induced EDS $\cI_1'$. A new restriction of these geometrical structures gives rise to the constraint
    \[
    Z^{v^2}=0,
    \]
    thus defining a subbundle $G_2'\subset G_1'$ which is fibered on the same submanifold $W_2'=W_1'$. The EDS $\cI_2'$ is involutive, and the corresponding equations of motion are
    \[
    \ddot{q}_2=0.
    \]
  \item \textbf{Case $\left(2.B\right)$: $\beta=0$.} The subbundle $G_0$ is given by
    \[
    Z^{v^1}=0
    \]
    with no restriction in the base, so $G_0$ is fibered on $\mR\times TQ$; the induced EDS $\cI_0$ is involutive, and yields to the sole equation of motion
    \[
    \ddot{q}^1=0.
    \]
  \end{itemize}
\end{itemize}

\subsection{Example: Maxwell equations}

We want to deal here with a physical example, not a toy model, in order to show the handiness of the algorithm; additionally, we will describe this theory using a non classical variational problem, trying to advertise on the advantages of the variational problems in Griffiths formulation.
\subsubsection{Geometrical setting}
Let $M$ be a pseudoriemannian manifold with metric $g$; the canonical volume $\eta$ has the simple form
\[
\eta=\sqrt{\abs{g}}\dif x^1\wedge\cdots\wedge\dif x^m
\]
in some coordinate neighborhood. Let us begin with a non standard example simple enough to carry out the previous manipulations to a successful end. It is \emph{non standard} in the sense that the underlying variational principle
\[
\left(\wedge{}^2M\oplus T^*M\rightarrow T^*M\rightarrow M,g^{pa}g^{qb}F_{ab}F_{pq}\dif^mx,F_{pq}\dif x^p\wedge\dif x^q-\dif A_p\wedge\dif x^p\right)
\]
is not a classical one, namely, coming from a variational problem on a jet space. In order to apply the previous scheme, it is necessary to perform the following identifications
\begin{align*}
  E&\leadsto T^*M\\
  J^1\pi&\leadsto\wedge{}^2M\oplus T^*M\\
  \pi_{10}&\leadsto P_2\text{ such that }P_2\left(F,A\right)=A;
\end{align*}
therefore the space where the restricted Hamiltonian system will live is given by the pullback bundle $P_2^*\left(\wedge{}^m_2\left(T^*M\right)\right)$, i.e.
\[
\begin{diagram}
  \node{P_{2}^*\left(\wedge{}^m_2\left(T^*M\right)\right)}\arrow{e,t}{p_2}\arrow{s,l}{p_1}\node{\wedge{}^m_2\left(T^*M\right)}\arrow{s,r}{\bar\tau^m_{T^*M}}\\
  \node{\wedge{}^2M\oplus T^*M}\arrow{e,b}{P_2}\node{T^*M}  
\end{diagram}
\]
The classical Lepage equivalent must be formulated on
\[
\left(\wedge^m_2\left(\wedge^2M\oplus T^*M\right)\right)^V:=\wedge^m_2\left(\wedge^2M\oplus T^*M\right)\cap\left(VP_2\right)^0
\]
which is the corresponding formula in this context to Equation \eqref{Eq:ClassicalVerticalSubBundle}. In terms of the adapted coordinates $\left(x^k,A_i,F_{kl}\right)$ on $\wedge^2M\oplus T^*M$, we will have that $\alpha\in\left(\wedge^m_2\left(\wedge^2M\oplus T^*M\right)\right)^V$ if and only if
\[
\alpha=p\eta+q^{ij}\dif A_i\wedge\eta_j
\]
for some $p,q^{ij}$; here we introduce, as before, the handy notation $\eta_{i}:=\partial_i\lrcorner\eta$. It defines the coordinates
\[
\alpha\mapsto\left(x^k,A_i,F_{ij},p,q^{ij}\right)
\]
on $\left(\wedge^m_2\left(\wedge^2M\oplus T^*M\right)\right)^V$, whose transformation properties are detailed by the next proposition.
\begin{proposition}
  Every coordinate change in $M$
  \[
  x^i\mapsto y^k=y^k\left(x\right)
  \]
  induces the coordinate change on $\left(\wedge^m_2\left(\wedge^2M\oplus T^*M\right)\right)^V$,
  \[
  \left(x^i,A_i,F_{ij},p,q^{ij}\right)\mapsto\left(y^k,B_k,G_{kl},r,s^{kl}\right),
  \]
  where
  \begin{align*}
    B_k&=\frac{\partial x^i}{\partial y^k}A_i\\
    G_{kl}&=\frac{\partial x^i}{\partial y^k}\frac{\partial x^j}{\partial y^l}F_{ij}\\
    p&=r+A_is^{jk}\frac{\partial^2 x^i}{\partial y^j\partial y^k}\\
    q^{ij}&=s^{kl}\frac{\partial x^i}{\partial y^k}\frac{\partial x^j}{\partial y^l}.
  \end{align*}
\end{proposition}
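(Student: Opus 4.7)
The transformation laws for $A_i$ and $F_{ij}$ are simply those of a covector and a $(0,2)$-tensor on $M$, respectively, inherited from the bundle structures on $T^*M$ and $\wedge^2 M$; so these formulas are immediate. The heart of the statement is the simultaneous transformation of the pair $(p,q^{ij})$, which sits inside the form part of the decomposition
\[
\alpha = p\,\eta + q^{ij}\,\dif A_i\wedge\eta_j = r\,\eta + s^{kl}\,\dif B_k\wedge\tilde{\eta}_l,
\]
where $\eta_j:=\partial_{x^j}\lrcorner\eta$ and $\tilde{\eta}_l:=\partial_{y^l}\lrcorner\eta$. My plan is to expand the right-hand side using the transformation of the coordinate functions and match coefficients.

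First I would differentiate the relation $B_k=\tfrac{\partial x^i}{\partial y^k}A_i$ to obtain
\[
\dif B_k = \frac{\partial^2 x^i}{\partial y^k\partial y^j}A_i\,\dif y^j + \frac{\partial x^i}{\partial y^k}\,\dif A_i,
\]
in which the second-derivative term is the source of the inhomogeneous shift relating $p$ and $r$. Simultaneously, from $\partial_{y^l}=\tfrac{\partial x^m}{\partial y^l}\partial_{x^m}$ one gets $\tilde{\eta}_l=\tfrac{\partial x^m}{\partial y^l}\eta_m$.

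Next, I would exploit the fact that $\eta$ is a top form, so $\dif y^j\wedge\eta=0$, and from Cartan's identity for the interior product one deduces
\[
\dif y^j\wedge\tilde{\eta}_l=\delta^j_l\,\eta.
\]
Plugging the expansion of $\dif B_k$ and of $\tilde{\eta}_l$ into $s^{kl}\,\dif B_k\wedge\tilde{\eta}_l$, the mixed term $s^{kl}\tfrac{\partial^2 x^i}{\partial y^k\partial y^j}A_i\,\dif y^j\wedge\tilde{\eta}_l$ collapses to $s^{kl}\tfrac{\partial^2 x^i}{\partial y^k\partial y^l}A_i\,\eta$ via the identity above, while the remaining term becomes $s^{kl}\tfrac{\partial x^i}{\partial y^k}\tfrac{\partial x^j}{\partial y^l}\,\dif A_i\wedge\eta_j$. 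Matching with $p\,\eta+q^{ij}\,\dif A_i\wedge\eta_j$ yields precisely the stated formulas for $p$ and $q^{ij}$.

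The only subtle point — and the one where I expect an error to be easiest to make — is the bookkeeping with the symmetric second-derivative term: one has to be careful that the contraction with $s^{kl}$ is against $\tfrac{\partial^2 x^i}{\partial y^k\partial y^l}$ (symmetric in $k,l$), which is the reason no antisymmetric component is lost and the identification of coefficients is unambiguous. A final sanity check is to verify that the formula for $q^{ij}$ is consistent with $q^{ij}$ being the components of a $(2,0)$-density-like object (the natural pairing of $\dif A_i$ with a contraction of $\eta$), so that composition of two coordinate changes reproduces the correct chain rule; this follows at once from the product form $\tfrac{\partial x^i}{\partial y^k}\tfrac{\partial x^j}{\partial y^l}$.
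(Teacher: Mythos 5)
Your computation is correct, and it is the natural (indeed essentially the only) way to obtain these formulas: the paper states this proposition without any proof, so there is nothing to compare against, but your derivation — equating $p\,\eta+q^{ij}\,\dif A_i\wedge\eta_j$ with $r\,\eta+s^{kl}\,\dif B_k\wedge\tilde\eta_l$, expanding $\dif B_k$ via the chain rule, and collapsing the second-derivative term with the identity $\dif y^j\wedge\tilde\eta_l=\delta^j_l\,\eta$ — reproduces exactly the stated transition functions, and the implicit use of the coordinate-invariant metric volume $\eta$ in both charts is consistent with the absence of Jacobian-determinant factors in the stated formulas. The only cosmetic remark is that your aside about "no antisymmetric component being lost" is slightly misleading: the contraction $s^{kl}\partial^2x^i/\partial y^k\partial y^l$ simply annihilates the antisymmetric part of $s^{kl}$, which is harmless here because on the full space $\bigl(\wedge^m_2(\wedge^2M\oplus T^*M)\bigr)^V$ the coefficients $q^{ij}$ (hence $s^{kl}$) are unconstrained and the $m^2$ forms $\dif A_i\wedge\eta_j$ together with $\eta$ are pointwise linearly independent, so coefficient matching is unambiguous.
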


So we have the following result.
\begin{proposition}
  There exists an isomorphism
  \[
  \left(\wedge^m_2\left(\wedge^2M\oplus T^*M\right)\right)^V\simeq P_2^*\left(\wedge{}^m_2\left(T^*M\right)\right)
  \]
  as bundles on $\wedge^2M\oplus T^*M$.
\end{proposition}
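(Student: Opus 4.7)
The plan is to mimic the proof of Lemma \ref{Lem:BundleIsomorph} and construct the isomorphism explicitly via pullback along $P_2$. Concretely, I will define a bundle morphism
\[
\Phi: P_2^*\bigl(\wedge^m_2(T^*M)\bigr) \longrightarrow \bigl(\wedge^m_2(\wedge^2M\oplus T^*M)\bigr)^V
\]
over $\wedge^2M\oplus T^*M$ by sending a pair $\bigl((F,A),\beta\bigr)$, with $\beta\in\wedge^m_2_A(T^*M)$, to the pullback $(P_2^*\beta)|_{(F,A)}$.

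First I would verify that $\Phi$ is well defined. Because $V_{(F,A)}P_2=\ker T_{(F,A)}P_2$, functoriality of pullback immediately gives $(P_2^*\beta)|_{(F,A)}\in \bigl(\wedge^\bullet(\wedge^2M\oplus T^*M)\bigr)^V$. Moreover, $P_2$ intertwines the two projections to $M$, so it maps vectors vertical for $\wedge^2M\oplus T^*M\to M$ onto vectors vertical for $T^*M\to M$; the $2$-horizontality of $\beta$ (vanishing of any contraction with two vertical vectors for $T^*M\to M$) therefore pulls back to $2$-horizontality of $P_2^*\beta$ for $\wedge^2M\oplus T^*M\to M$. Hence $\Phi$ takes values in $(\wedge^m_2(\wedge^2M\oplus T^*M))^V$ as claimed.

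Next I would produce the inverse. Given $\alpha\in(\wedge^m_2(\wedge^2M\oplus T^*M))^V$ at $(F,A)$, the condition that $\alpha$ annihilates $V_{(F,A)}P_2$ means it factors uniquely through the quotient
\[
T_{(F,A)}(\wedge^2M\oplus T^*M)\big/V_{(F,A)}P_2\;\simeq\; T_A(T^*M),
\]
yielding a well-defined $\beta\in\wedge^m_A(T^*M)$. The $2$-horizontality of $\alpha$ transports to $\beta$ via the same quotient argument, since the isomorphism above sends vertical vectors for $\wedge^2M\oplus T^*M\to M$ onto vertical vectors for $T^*M\to M$. The assignment $\alpha\mapsto((F,A),\beta)$ manifestly inverts $\Phi$.

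Finally, the smoothness and the bundle-isomorphism properties can be verified in the adapted coordinates already introduced: on both sides a general element is described by the coordinates $(x^k,A_i,F_{ij},p,q^{ij})$ through the local expression $p\eta+q^{ij}\dif A_i\wedge\eta_j$, and $\Phi$ is the identity in these coordinates. There is no real obstacle in this proof; the only point requiring care is to keep clearly separated the two notions of verticality at play, namely $VP_2$ versus the kernel of the tangent map to the projection to $M$, and to confirm that pullback (respectively its pointwise inverse) interchanges the $2$-horizontality conditions on the two sides. Once this bookkeeping is in place the statement follows formally.
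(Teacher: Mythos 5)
Your construction is correct, and it yields exactly the isomorphism the paper uses; the two arguments differ only in how the map is justified. The paper's proof consists of displaying the map in the adapted coordinates, sending $p\eta+q^{ij}\dif A_i\wedge\eta_j$ at $\left(F,A\right)$ to the pair $\left(\left(F,A\right),\,p\eta+q^{ij}\dif A_i\wedge\eta_j\right)$, and it leans on the immediately preceding proposition about the transformation law of $\left(x^i,A_i,F_{ij},p,q^{ij}\right)$ to guarantee that this local formula patches into a global bundle map. You instead build the map intrinsically: pullback along $P_2$ lands in $\left(VP_2\right)^0$ and preserves $2$-horizontality because $TP_2$ carries vertical vectors over $M$ to vertical vectors over $M$, and the inverse comes from factoring a form annihilating $V_{\left(F,A\right)}P_2$ through the quotient $T_{\left(F,A\right)}\left(\wedge^2M\oplus T^*M\right)/V_{\left(F,A\right)}P_2\simeq T_A\left(T^*M\right)$, using that $P_2$ is a surjective submersion fibered over $M$ so that $2$-horizontality descends as well. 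Your route buys coordinate independence for free and makes explicit why the two notions of verticality interact correctly, while the paper's one-line coordinate formula is quicker given the groundwork already laid; both identify the same bundle isomorphism.
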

\begin{proof}
  In the adapted coordinates introduced above the isomorphism reads
  \[
  p\eta+q^{ij}\dif A_i\wedge\eta_j\in\left.\wedge^m_2\right|_{\left(F,A\right)}\left(\wedge^2M\oplus T^*M\right)\mapsto\left(F_{ij}\dif x^i\wedge\dif x^j\oplus A_i\dif x^i,p\eta+q^{ij}\dif A_i\wedge\eta_j\right).\qedhere
  \]
\end{proof}
Up to now we proceed with no extra assumptions about $M$; from now on, in sake of simplicity, we will consider that we are working on $M=\mR^m$ with $g$ a constant metric.

\subsubsection{Contravariance for Lepage-equivalent problem of Maxwell equations}

Let us suppose that $\left(x^i\right)\mapsto\left(A_l,F_{ij}\right)$ is a solution of the Euler-Lagrange equations for Maxwell equations, namely
\[
F=\dif A,\qquad \dif\mathop{\ast}{F}=0;
\]
if these functions are replaced in the Hamilton-Cartan EDS, is obtained an EDS generated by the forms
\[
\alpha^l:=\dif P^{rl}\wedge\eta_r,\qquad\beta^{ij}:=\left(2F^{ij}-\left(-1\right)^{m+1}P^{ij}\right)\eta.
\]
The unique solution for this EDS under the independence condition $\eta\not=0$ is given by
\[
P^{ij}:=\frac{\left(-1\right)^{m+1}}{2}F^{ij},
\]
with no further restrictions arising from the annihilation of the forms $\alpha^l$, because of the Euler-Lagrange equations; namely, the lift of solutions from Euler-Lagrange equations to Hamilton-Cartan equations is uniquely determined, and thus it is a contravariant problem.

\subsubsection{Restricted Hamiltonian system for Maxwell equations}

We are now ready to define the bundle $\widetilde{W}_\lambda\rightarrow\wedge^2M\oplus T^*M$ via an analogous formula
\[
\left.\widetilde{W}_\lambda\right|_{\left(F,A\right)}:=\left.\lambda\right|_{\left(F,A\right)}+\left.I_{\text{M}}\right|_{\left(F,A\right)}\cap\left(\wedge^m_2\left(\wedge^2M\oplus T^*M\right)\right)^V,
\]
where $I_{\text{M}}$ is the subbundle of $\wedge^\bullet\left(\wedge^2M\oplus T^*M\right)$ whose sections generate the Maxwell EDS
\[
\cI_{\text{M}}:=\left\langle \Theta_1-\dif\Theta_2\right\rangle _{\text{diff}},
\]
where $\Theta_i,i=1,2$ is the canonical form on the first and second summand respectively. Then we have the following result.
\begin{proposition}\label{Prop:WinMaxwellEqs}
  The set $\widetilde{W}_\lambda\subset\left(\wedge^m_2\left(\wedge^2M\oplus T^*M\right)\right)^V$ is given locally by the equations $p=\left(g^{ap}g^{bq}F_{pq}-P^{pq}\right)F_{pq}$ and $q^{ij}=-2P^{ij}$ for some collection $\left(P_{ab}\right)$ of real numbers such that $P_{ij}+P_{ji}=0$. Equivalently we can say that $\widetilde{W}_\lambda$ is composed of the forms
  \[
  \alpha=\left[F^{pq}-\frac{\left(-1\right)^{m+1}}{2}q^{pq}\right]F_{pq}\eta+q^{ij}\dif A_i\wedge\eta_j
  \]
  such that $q^{ij}+q^{ji}=0$.
\end{proposition}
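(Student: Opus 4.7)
The plan is to describe the intersection $I_{\text{M}}|_{(F,A)}\cap(\wedge^m_2(\wedge^2M\oplus T^*M))^V$ explicitly and then translate the result by $\lambda|_{(F,A)}$. Since $\cI_{\text{M}}$ is algebraically generated by the single $2$-form $\theta:=F_{pq}\dif x^p\wedge\dif x^q-\dif A_p\wedge\dif x^p$, I take, in the spirit of Example \ref{Ex:ContactGenerated}, the subbundle $I_{\text{M}}$ to consist of the wedge multiples of $\theta$, so that its $m$-form fibre over $(F,A)$ is parametrised by $(m-2)$-forms $\gamma$ at the point via $\gamma\wedge\theta|_{(F,A)}$.

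The first step is to determine which $\gamma$'s yield elements of $(\wedge^m_2)^V$. I would decompose $\gamma$ by its number of $\dif A$ and $\dif F$ factors. Any $\dif F$ factor propagates to $\gamma\wedge\theta$ (because $\theta$ carries no $\dif F$), violating the $(VP_2)^0$ condition, so $\gamma$ must have no $\dif F$ component. Writing $\gamma=\gamma_{(0)}+\gamma_{(1)}+\cdots$ where $\gamma_{(t)}$ is the part with exactly $t$ factors of $\dif A$, the summand $F_{pq}\dif x^p\wedge\dif x^q$ of $\theta$ preserves the $\dif A$-count while $-\dif A_p\wedge\dif x^p$ raises it by one; membership in $\wedge^m_2$ (at most one $\dif A$ factor) thus demands $\gamma_{(t)}\wedge(-\dif A_p\wedge\dif x^p)=0$ for every $t\geq 1$. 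A direct antisymmetrisation argument then shows that, at generic $(F,A)$, this forces each $\gamma_{(t)}$ with $t\geq 1$ to vanish, so $\gamma$ is purely horizontal.

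The second step is the actual computation. I would parametrise horizontal $(m-2)$-forms as $\gamma=\tfrac{1}{2}P^{pq}\eta_{pq}$ with $P^{pq}+P^{qp}=0$, and expand $\gamma\wedge\theta$ using the identities $\dif x^a\wedge\eta_{pq}=\delta^a_q\eta_p-\delta^a_p\eta_q$ and $\dif x^a\wedge\dif x^b\wedge\eta_{pq}=(\delta^a_p\delta^b_q-\delta^a_q\delta^b_p)\eta$. These produce a horizontal contribution proportional to $P^{pq}F_{pq}\,\eta$ and a one-$\dif A$ contribution proportional to $P^{ij}\dif A_i\wedge\eta_j$, with an overall sign that depends on the parity of $m$ through the commutation of the $(m-2)$-form $\eta_{pq}$ with the $1$-forms in $\theta$. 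Adding $\lambda=F^{pq}F_{pq}\,\eta$ and setting $q^{ij}:=-2P^{ij}$ should then match the claimed
\[
\alpha=\Bigl[F^{pq}-\tfrac{(-1)^{m+1}}{2}q^{pq}\Bigr]F_{pq}\,\eta+q^{ij}\dif A_i\wedge\eta_j,
\]
the antisymmetry of $q^{ij}$ being inherited from that of $P^{ij}$.

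The main obstacle I anticipate is the verification that $\gamma_{(t)}=0$ for all $t\geq 1$ at generic $(F,A)$. Already for $t=1$, writing $\gamma_{(1)}=\sum_r\dif A_r\wedge\sigma^r$ with $\sigma^r$ horizontal $(m-3)$-forms, the condition $\gamma_{(1)}\wedge\dif A_p\wedge\dif x^p=0$ reduces to the identity $\sigma^r\wedge\dif x^p-\sigma^p\wedge\dif x^r=0$ for all $r\neq p$, and a careful multilinear argument is needed to conclude $\sigma^r=0$; for $t\geq 2$ one must additionally rule out subtler cancellations between the two summands of $\theta$. Once that has been disposed of, the remainder is merely bookkeeping of signs.
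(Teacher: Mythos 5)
Your route is the paper's route: realize $I_{\text{M}}|_{(F,A)}$ as the wedge multiples of $\theta=F_{pq}\dif x^p\wedge\dif x^q-\dif A_l\wedge\dif x^l$, argue that only a \emph{horizontal} $(m-2)$-form factor survives the intersection with $\left(\wedge^m_2\left(\wedge^2M\oplus T^*M\right)\right)^V$, parametrize such factors by $P^{pq}\eta_{pq}$ with $P^{pq}$ antisymmetric, and contract using $\dif x^p\wedge\eta_{ij}=\delta^p_j\eta_i-\delta^p_i\eta_j$. The only point of divergence is the step you single out as the main obstacle: the paper's proof passes directly from $I_{\text{M}}|_{(F,A)}=\{\beta\wedge\theta\}$ to the horizontal ansatz $\beta=P^{pq}\eta_{pq}$ without any justification, so you are worrying about a lacuna the paper leaves silent rather than one it resolves differently. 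Your worry is legitimate: a nonzero $\gamma_{(1)}$ with $\gamma_{(1)}\wedge\dif A_l\wedge\dif x^l=0$ would still contribute $\gamma_{(1)}\wedge F_{pq}\dif x^p\wedge\dif x^q$ to the one-$\dif A$ part of the product, which would destroy the claimed coupling between the coefficients $p$ and $q^{ij}$ and the antisymmetry of $q^{ij}$.

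The step does close, and more easily than you fear. For $t=1$ your identity $\sigma^r\wedge\dif x^l=\sigma^l\wedge\dif x^r$ for all $r\neq l$ kills $\sigma^r_s$ for $s\notin\{r,l\}$ by comparing coefficients, and then leaves $\sigma^r_r=-\sigma^l_l$ for every pair, which for $m\geq 3$ forces all $\sigma^r=0$ (for $m=2$ the factor is a $0$-form and the issue is vacuous). For $t\geq 2$ the ``subtler cancellations'' you anticipate do not occur: writing the $\dif A$-degree-$s$ component of $\gamma\wedge\theta$ as $\gamma_{(s)}\wedge F_{pq}\dif x^p\wedge\dif x^q-\gamma_{(s-1)}\wedge\dif A_l\wedge\dif x^l$ and letting $T$ be the top nonvanishing degree of $\gamma$, the degree-$(T+1)$ component is just $-\gamma_{(T)}\wedge\dif A_l\wedge\dif x^l$, with no $F$-term to cancel against; so the top piece is annihilated by the same coefficient comparison, and a descending induction eliminates every $\gamma_{(t)}$ with $t\geq 1$. (Likewise, a $\dif F$-component of $\gamma$ either produces a forbidden $\dif F$-term or contributes nothing, so it may be discarded without loss of generality.) With that supplied, your bookkeeping reproduces the paper's computation, including the $(-1)^{m+1}$ coming from commuting $\eta_{pq}$ past the $1$-forms of $\theta$.
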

\begin{proof}
  In the working coordinates
  \[
  \left.I_{\text{M}}\right|_{\left(F,A\right)}=\left\{\beta\wedge\left(F_{pq}\dif x^p\wedge\dif x^q-\dif A_l\wedge\dif x^l\right):\beta\in\wedge^\bullet\left(\wedge^2M\oplus T^*M\right)\right\},
  \]
  meaning that
  \[
  \left.\widetilde{W}_\lambda\right|_{\left(F,A\right)}=\left\{\left(F^{pq}F_{pq}-P^{ab}F_{ab}\right)\eta+P^{pq}\eta_{pq}\wedge\dif A_l\wedge\dif x^l\right\}
  \]
  where it was introduced the usual terminology $F^{ab}:=g^{ap}g^{bq}F_{pq}$. By using the identity
  \[
  \dif x^p\wedge\eta_{ij}=\delta^p_j\eta_i-\delta^p_i\eta_j
  \]
  it could be written as
  \[
  \left.\widetilde{W}_\lambda\right|_{\left(F,A\right)}=\left\{\left(F^{ab}-P^{ab}\right)F_{ab}\eta+2P^{pq}\eta_{p}\wedge\dif A_q:P_{ab}+P_{ba}=0\right\};
  \]
  therefore $p=\left(F^{ab}-P^{ab}\right)F_{ab}$ and $q^{ij}=\left(-1\right)^{m+1}2P^{ij}$ are the equations describing this subset.
\end{proof}
Thus we have just obtained the first remarkable difference between this example and the classical case of first order field theory: The codimension of $\widetilde{W}_{L\eta}$ in $\left(\wedge^m_2J^1\pi\right)^V$ is $1$ in the latter, and the codimension of $\widetilde{W}_\lambda$ in $\left(\wedge^m_2\left(\wedge^2M\oplus T^*M\right)\right)^V$ is equal to $\frac{m\left(m+1\right)}{2}+1$ in the former case, because of the set of restrictions $q^{ij}+q^{ji}=0$. 
\begin{note}[On first order formalism]
  There exists a formulation of Maxwell equations where the fields $A$ and $F$ are considered as independent each other, the so called \emph{first order formalism} \cite{Sundermeyer}. It is immediate to verify that the set of restrictions \eqref{Eq:PFuncF}, arising from the zero-forms set of the Hamilton equations, defines a subbundle $\Sigma\subset\widetilde{W}_\lambda$ such that the Hamilton equations restricted to this set are equivalent to the equations of motion of the first order formalism.
\end{note}

\subsubsection{Implementation in Reduce via EDS package}

Let us describe the implementation of the previous algorithm in the computer algebra software called \emph{Reduce} \cite{Hearn:1967:RUM}, by using \emph{EDS} package. The first thing to do in this environment is to define the objects we work with, by using the command \verb|pform| of the \emph{EXCALC} package
\begin{reduce}
  pform {x(i), g(i,j), f(-i,-j), a(-i), p(i,j), u(i,j,-k), gdet, 
    xig(-i,-j,-k), xif(-i,-j,-k), xip(-i,j,k), xia(-i,-j), 
    FEqs(i,j), AEqs(i)}=0, 
  {th(i,j), restrictedham, ka(-i), kp(j,k), kf(-i,-j)}=1, 
  {s(-i,-j)}=2, {r(-i)}=3, {gamma,eta}=4, {theta}=5;
  tvector xi(i), va(i), vf(i,j);
\end{reduce}
where the indices $i,j,k,l$ runs from $1$ to $4$; so we will use the symbols
\StopVerbatimLineNos
\begin{verbatim}
x(i), f(-i,-j), a(-i), p(i,j)
\end{verbatim}
as variables on the manifold $\widetilde{W}_\lambda$. Additionally, the metric tensor $g^{ij}$ is set to be a diagonal Lorentz metric with signature $\left(-1,1,1,1\right)$. The (pre)multisymplectic form on this space will reads
\ResumeVerbatimLineNos
\begin{reduce}
  eta:=sqrt(abs(1/gdet))*d x(1)^d x(2)^d x(3)^d x(4);
  r(-i):=(@ x(i))_|eta;
  s(-i,-j):=(@ x(i))_|((@ x(j))_|eta);
  th(i,j):=d g(i,j)-u(i,j,-k)*d x(k);
  gamma:=(1/4)*g(i,j)*g(k,l)*f(-i,-k)*f(-j,-l)*eta+
  p(i,j)*s(-i,-j)^(f(-i,-j)*d x(i)^d x(j)-d a(-i)^d x(i));
  theta:=d gamma;
\end{reduce}
The components of $Z$ are defined through
\begin{reduce}
  va(j):=(@ a(-j));
  vf(i,j):=(@ f(-i,-j));
  xi(-i):=(@ x(i))+xip(-i,j,k)*(@ p(j,k))+
      xif(-i,-j,-k)*vf(j,k)+xia(-i,-j)*va(j);
\end{reduce}
and the equations of motion will be given by
\begin{reduce}
  restrictedham:=xi(-4)_|((xi(-3))_|((xi(-2))_|((xi(-1))_|theta)));
\end{reduce}
We try to choose these components in order to annihilate this $1$-form; it results that some additional requirements must be imposed by ensuring the existence of a solution, briefly, it is equivalent to 
\begin{equation}\label{Eq:PFuncF}
  P^{ij}=\frac{1}{32}g^{ik}g^{jl}F_{kl}.
\end{equation}
The restrictions on the components of the multivector $Z$ will result the following consistency conditions
\begin{subequations}
  \label{Eq:AllG}
  \begin{align}
    &Z^{P^{ij}}_k+Z^{P^{ji}}_k=0\label{Eq:G01}\\
    &Z^{F_{ij}}_k+Z^{F_{ji}}_k=0\label{Eq:G02},
  \end{align}
\end{subequations}
the relationship between $A$ and $F$,
\begin{align}\label{Eq:G03}
  Z^{A_i}_j-Z^{A_j}_i=8F_{ij}
\end{align}
and the true first pair of equations of motion, namely
\begin{align}\label{Eq:FirstEqsMaxwell}
  Z^{P^{ik}}_k=0
\end{align}
corresponding to the subset of Maxwell equations
\[
\nabla\cdot\mathbf{E}=0,\qquad\nabla\times\mathbf{B}-\partial_t\mathbf{E}=0.
\]
These equations (together to the previously found) describe the bundle $G_0\subset G_4\left(T\widetilde{W}_\lambda\right)$; accordingly, we must define the Grassmannian bundle, which can be set in \emph{Reduce} by
\begin{reduce}
  ka(-i) := d a(-i) - xia(-j,-i)*d x(j);
  kp(j,k) := d p(j,k) - xip(-i,j,k)*d x(i);
  kf(-j,-k) := d f(-j,-k) - xif(-i,-j,-k)*d x(i);
  ListPfaff:=index_expand {ka(-i), kp(j,k), kf(-j,-k)};
  Xes:=index_expand {d x(i)};
  NewEDS0:=eds(ListPfaff,Xes);
\end{reduce}
By means of the command \verb|pullback| we could define the EDS $\cI_0$ representing on $G_0$ the restricted Hamiltonian system
\begin{reduce}
  NewEDS:=pullback(NewEDS0,ListSols1.1);
\end{reduce}
This EDS is almost a Linear Pfaffian system; it fails in doing it because it contains a set of $0$-forms; shortly, it is composed by the consistency conditions
\begin{equation}\label{Eq:DerivFP}
  Z^{P^{ij}}_k=\frac{1}{32}g^{ip}g^{jq}Z^{F_{pq}}_k
\end{equation}
which are nothing but the ``derivatives'' of \eqref{Eq:PFuncF}, and the consequences of these equations when applied to the first set of equations of motion \eqref{Eq:FirstEqsMaxwell}. According to the algorithm, it will be necessary to pullback again the EDS to the submanifold $G_1\subset G_0$ defined by these new restrictions; it is achieved by the commands
\begin{reduce}
  PullNewEDS:=pullback(NewEDS,RestNewEDS1.1);
  characters(PullNewEDS);
  involutive(PullNewEDS);
  quasilinear(PullNewEDS);
  TorsionPullNewEDS:=torsion(PullNewEDS);
\end{reduce}
where we measure different aspects of this EDS $\cI_1$. A remarkable fact is that $\cI_1$ has nontrivial torsion and Cartan characters $\{10,9,7,4\}$. This torsion is equivalent to the requirement
\begin{equation}\label{Eq:SecondMaxwel}
  Z^{F_{ij}}_{k}+Z^{F_{ki}}_{j}+Z^{F_{jk}}_{i}=0,
\end{equation}
which are the remaining equations, equivalent in this context to the set of Maxwell homogeneous equations
\[
\nabla\cdot\mathbf{B}=0,\qquad\nabla\times\mathbf{E}+\partial_t\mathbf{B}=0.
\]
These must be eliminated by an additional pullback to a submanifold $G_2$ where these conditions are met; it is performed by the commands
\begin{reduce}
  RestTorsionPullNewEDS:=for each ii in TorsionPullNewEDS join {ii = 0};
  RestTorsionPullNewEDS1:=solve(TorsionPullNewEDS,ListVars1);
  Pull2NewEDS:=pullback(PullNewEDS,RestTorsionPullNewEDS1.1);
  characters(Pull2NewEDS);
  involutive(Pull2NewEDS);
  quasilinear(Pull2NewEDS);
  TorsionPull2NewEDS:=torsion(Pull2NewEDS);
\end{reduce}
yielding to an \emph{involutive EDS} with characters $\{10,9,6,1\}$. The algorithm stops. The full procedure will be shown on the diagram in Figure \ref{Fig:ConstraintMaxwell}.
\begin{figure}
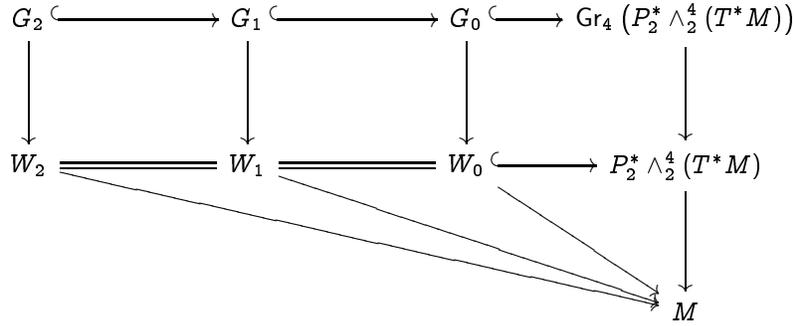

  \centering
  \[
  \begin{diagram}
    \node{G_2}\arrow{e,t,J}{}\arrow{s,l}{}\node{G_1}\arrow{e,t,J}{}\arrow{s,l}{}\node{G_0}\arrow{e,t,J}{}\arrow{s,r}{}\node{\mathsf{Gr}_4\left(P_2^*\wedge^4_2\left(T^*M\right)\right)}\arrow{s,r}{}\\
    \node{W_2}\arrow{e,b,=}{}\arrow{seee,b}{}\node{W_1}\arrow{e,b,=}{}\arrow{see,b}{}\node{W_0}\arrow{e,b,J}{}\arrow{se,b}{}\node{P_2^*\wedge^4_2\left(T^*M\right)}\arrow{s,r}{}\\
    \node[4]{M}
  \end{diagram}
  \]  
  \caption{The constraint structure for Maxwell equations.}
  \label{Fig:ConstraintMaxwell}
\end{figure}
As before $G_0$ is the subset of $\mathsf{Gr}_4\left(P_2^*\wedge^4_2\left(T^*M\right)\right)$ determined by \eqref{Eq:PFuncF}, \eqref{Eq:AllG}, \eqref{Eq:G03} and \eqref{Eq:FirstEqsMaxwell}, namely the Hamilton submanifold; the projected set $W_0$ is the subset of $P_2^*\wedge^4_2\left(T^*M\right)$ locally described by \eqref{Eq:PFuncF}. The set $G_1\subset G_0$ is the zero level set of the $0$-forms in the EDS $\cI_0$ determined by the contact structure on $G_0$, namely \eqref{Eq:DerivFP} and
\[
g^{lk}Z^{F_{ik}}_l=0;
\]
its projection does not produce any further restriction in $W_0$. Finally we need to absorb the torsion \eqref{Eq:SecondMaxwel} of the EDS $\cI_1$ induced by $\cI_0$ on $G_1$; again, no restrictions arise from projection onto $P_2^*\wedge^4_2\left(T^*M\right)$. As we said earlier, the algorithm stops here, because the induced EDS $\cI_2$ is involutive, meaning that the multivector field $Z$ obeying the constraints characterizing $G_2$ is an integrable $4$-vector.

\subsection{Example: EDS with an integrability condition}

\subsubsection{Geometrical setting}

The PDE system
\begin{equation}\label{Eq:IntPDE1}
  \begin{cases}
    u_z+yu_x=0&\\
    u_y=0
  \end{cases}
\end{equation}
can be written as the EDS $\cI\subset\wedge{}^\bullet\mR^4$ generated by the forms
\begin{align*}
  \theta^1&:=\dif u\wedge\dif y\wedge\left(\dif x-y\dif z\right)\\
  \theta^2&:=\dif u\wedge\dif x\wedge\dif z
\end{align*}
with the independence condition $\dif x\wedge\dif y\wedge\dif z$; therefore their solutions are the extremals of the variational problem
\[
\left(\mR^4\rightarrow\mR^3,0,\cI\right).
\]
The following diagram gives the underlying structure characterizing a multisymplectic version of this variational problem:
\[
\begin{diagram}
  \node{\wedge{}^3_2\mR^4}\arrow{e,t}{\text{id}}\arrow[2]{s,l}{\bar\tau}\node{\wedge{}^3_2\mR^4}\arrow[2]{s,r}{\bar\tau}\\
  \\
  \node{\mR^4}\arrow{e,b}{\text{id}}\node{\mR^4}\arrow{e,t}{p}\node{\mR^3}\\
  \node[2]{\left(x,y,z,u\right)}\arrow{e,b,T}{}\node{\left(x,y,z\right)}
\end{diagram}
\]
Thus we will have that
\begin{align*}
  W_{\left(x,y,z,u\right)}&:=0+I\cap\left(\wedge{}^3_2\mR^4\right)_{\left(x,y,z,u\right)}\\
  &=\left\{p_1\theta^1+p_2\theta^2:\left(p_1,p_2\right)\in\mR^2\right\}
\end{align*}
where $I\rightarrow\mR^4$ is a subbundle of $\wedge{}^\bullet\mR^4$ whose sections generate $\cI$.

\subsubsection{Contravariance}\label{Sec:ContravarianceInt1}

The contravariance means that for every $\left(x,y,z\right)\mapsto\left(x,y,z,u\right)$ integral section of $\cI$, there exists a pair of functions $p_1,p_2$ such that
\[
\left(x,y,z\right)\mapsto\left(x,y,z,u,p_1,p_2\right)
\]
is an integral section of the Euler-Lagrange equations of the Lepage-equivalent problem. So, in order to establish this crucial property, it is necessary to take a solution
\[
\left(x,y,z\right)\mapsto u\left(x,y,z\right)
\]
of our PDE system and to found a map
\[
\left(x,y,z\right)\mapsto\left(p_1,p_2\right)
\]
which is integral for the EDS generated by
\[
\gamma:=\dif p_1\wedge\dif y\wedge\left(\dif x-y\dif z\right)+\dif p_2\wedge\dif x\wedge\dif z.
\]
This means that these functions must verify the PDE
\[
\left(p_1\right)_z+y\left(p_1\right)_x+\left(p_2\right)_y=0;
\]
any solution of this equation (for example, $p_1=0,p_2=f\left(x,z\right)$) allows us to prove the desired contravariance.

\subsubsection{Restricted Hamiltonian-like system}

Let us consider $j:W\hookrightarrow\wedge{}^3\mR^4$ the immersion of $W$ in the space of $2$-horizontal $3$-forms; so we define
\[
\Theta_W:=j^*\Theta
\]
where $\Theta\in\Omega^3\left(\wedge^3\mR^4\right)$ is the canonical $3$-form on $\wedge{}^3\mR^4$. The \emph{solutions} of the \emph{restricted Hamiltonian system} $\left(W,\Theta_W\right)$ are the integral sections of the decomposable multivector field $Z:W\rightarrow\wedge{}^3TW$ such that
\begin{equation}\label{Eq:ResHamIntCond}
  Z\lrcorner\dif\Omega=0.
\end{equation}
Let us write $Z=Z_x\wedge Z_y\wedge Z_z$, where
\begin{align*}
  Z_x&:=\partial_x+Z_x^u\partial_u+Z_x^{p_1}\partial_{p_1}+Z_x^{p_2}\partial_{p_2}\\
  Z_y&:=\partial_y+Z_y^u\partial_u+Z_y^{p_1}\partial_{p_1}+Z_y^{p_2}\partial_{p_2}\\
  Z_z&:=\partial_z+Z_z^u\partial_u+Z_z^{p_1}\partial_{p_1}+Z_z^{p_2}\partial_{p_2};
\end{align*}
the system \eqref{Eq:ResHamIntCond} gives us the restrictions
\begin{equation}\label{Eq:ZConditions}
  Z_x^u=-y^{-2}Z_z^u,\quad Z_x^{p_1}=-y^{-2}\left(Z_y^{p_2}+Z_z^{p_1}\right),\quad Z_y^u=0.
\end{equation}
In general, the system \eqref{Eq:ResHamIntCond} admits solutions only in a subset $C\subset W$; even assuming $C$ is a submanifold, it is necessary to ensures us that $Z$ is a true solution, namely
\begin{enumerate}
\item It is tangent to $C$, i.e. that it is a map $Z:C\rightarrow\wedge{}^3C$, and
\item It is integrable on $C$.
\end{enumerate}
The failure of any of these conditions forces us to restrict ourselves to a subset $C_1^{A}\subset C$, where the superindex $A$ indicates if it is the subset where the \emph{tangency condition} is fulfilled (i.e., when $A=T$) or whether we have $A=I$ where the \emph{integrabiltiy condition} is met. From Eqs. \eqref{Eq:ZConditions} it results evident that $C_1^T=W$; although it will be also the case for the integrability conditions, it will be necessary to made some considerations before to reach to this conclusion.

\paragraph{Integrability conditions on $Z$}

The integrability conditions are
\[
\left[Z_i,Z_j\right]\in\left\langle Z_x,Z_y,Z_z\right\rangle 
\]
for all $i,j=x,y,z$. In order to properly work with them, it could be more efficient to describe the subspace $\left\langle Z_x,Z_y,Z_z\right\rangle $ by means of its annihilator $\left\langle Z_x,Z_y,Z_z\right\rangle ^\perp$ spanned by the set of forms
\begin{align}\label{Eq:EDSforZ}
  \begin{split}
    &\Gamma^u:=\dif u-Z^u_x\dif x-Z^u_y\dif y-Z^u_z\dif z,\cr
    &\Gamma^{p_1}:=\dif p_1-Z^{p_1}_x\dif x-Z^{p_1}_y\dif y-Z^{p_1}_z\dif z,\cr
    &\Gamma^{p_2}:=\dif p_2-Z^{p_2}_x\dif x-Z^{p_2}_y\dif y-Z^{p_2}_z\dif z
  \end{split}
\end{align}
constrained by the restrictions \eqref{Eq:ResHamIntCond}. As we know \cite{BCG,CartanBeginners}, a necessary condition for this EDS to be \emph{in involution} is that its \emph{torsion} must be zero; the torsion of this system will be
\[
Z^u_z=0.
\]
Adding this condition to \eqref{Eq:ResHamIntCond} gives us a set of conditions for $Z$ transforming the EDS \eqref{Eq:EDSforZ} into an involutive EDS\footnote{This has to be checked by another method, such as Cartan Test. In the present case this test was carried out by using the package EDS of the computer algebra system REDUCE.}. Thus the multivector $Z:W\rightarrow\wedge{}^3W$ which is tanget, integrable and decomponible is given by the equations
\begin{equation}
  Z_x^u=-y^{-2}Z_z^u,\quad Z_x^{p_1}=-y^{-2}\left(Z_y^{p_2}+Z_z^{p_1}\right),\quad Z_y^u=0=Z^u_z.
\end{equation}
By recalling the meaning of the components of $Z$ we can interpret the integrability condition $Z_x^u=0$; if
\[
\phi:\left(x,y,z\right)\mapsto\left(u(x,y,z),p_1\left(x,y,z\right),p_2\left(x,y,z\right)\right)
\]
is a solution for this system, we see that
\[
Z^m_i=T_{\left(x,y,z\right)}\phi^m\left(\partial_i\right),\qquad i=x,y,z,\quad m=u,p_1,p_2
\]
so $0=Z^u_z=u_z$. It is interesting to note that it is exactly the integrability condition found when playing with the PDE \eqref{Eq:IntPDE1}.

\subsection{Example: An EDS with strong integrability conditions}\label{Sec:EDSStrong}

Let us benefit on the wider class of variational problems we have at our disposal in order to deal with a classical example of PDE system with integrability conditions of higher order; it will allow us to see how the algorithm evolves in dealing with these cases.

\subsubsection{Geometrical preliminaries}

We will try to fit the following variational problem into this scheme, namely
\[
\left(F\rightarrow M,0,\cI_{\text{IC}}\right)
\]
by taking $E:=\mR^4\stackrel{p}{\longrightarrow}M:=\mR^3$ and $F:=J^1p$; the following diagram gives us some idea on what is going on with the maps
\[
\begin{diagram}
  \node{\wedge{}^3_2J^1p}\arrow{e,t}{\text{id}}\arrow[2]{s,l}{\bar\tau}\node{\wedge{}^3_2J^1p}\arrow[2]{s,r}{\bar\tau}\\
  \\
  \node{J^1p}\arrow{e,b}{\text{id}}\node{J^1p}\arrow{e,t}{p_{10}}\node{E}\arrow{e,t}{p}\node{M}\\
  \node[2]{\left(x,y,z,\phi,p,q,r\right)}\arrow{e,b,T}{}\node{\left(x,y,z,\phi\right)}\arrow{e,b,T}{}\node{\left(x,y,z\right)}
\end{diagram}
\]
and the restriction EDS, the system we want to study, is given by
\begin{multline*}
  \cI_{\text{IC}}:=\Big\langle \theta:=\dif\phi-p\dif x-q\dif y-r\dif z,\\
  \Gamma_1:=\dif x\wedge\dif y\wedge\dif r+y\dif y\wedge\dif z\wedge\dif p,\Gamma_2:=\dif x\wedge\dif z\wedge\dif q\Big\rangle _{\text{diff}}.
\end{multline*}
In the previous diagram we use the identifications
\begin{equation}\label{Eq:VertBundleTrivial}
  \left(\wedge{}^3_2J^1p\right)^V=\text{id}^*\left(\wedge{}^3_2J^1p\right)=\wedge{}^3_2J^1p,
\end{equation}
which come from the fact that $V\left(\text{id}\right)=0$, and so $\left(V\left(\text{id}\right)\right)^0=TJ^1p$. 

\subsubsection{Contravariance of the proposed Lepage-equivalent problem}\label{Sec:ContravarianceInt2}

As warned before, it is necessary to ensure the contravariance of these Lepage-equivalent problems before proceed further with the study of its Hamiltonian-like system. By using Proposition \ref{Prop:OursEquivalentOnVars}, it is possible to establish that, besides the pullback of the EDS $\cI_{\text{IC}}$ , the Euler-Lagrange equations for the Lepage-equivalent problem contains the following forms
\[
\gamma_1:=\dif\alpha,\quad\gamma_2:=\alpha\wedge\dif x+y\dif\lambda_1\wedge\dif y\wedge\dif z,\quad\alpha\wedge\dif y+\dif\lambda_2\wedge\dif x\wedge\dif z,\quad\gamma_4:=\alpha\wedge\dif z+\dif\lambda_1\wedge\dif x\wedge\dif y
\]
where $\alpha:=A\dif x\wedge\dif y+B\dif x\wedge\dif z+C\dif y\wedge\dif z$. Given a section $\sigma:\left(x,y,z\right)\rightarrow\left(\phi,p,q,r\right)$, the solutions $\Sigma:\left(x,y,z\right)\mapsto\left(\phi,p,q,r,\alpha,\lambda_1,\lambda_2\right)$ of the EDS $\cJ:=\left\langle \gamma_1,\cdots,\gamma_4\right\rangle $ whose first components are determined by $\sigma$ defines those sections needed to proving the contravariance. We are now in danger to run into a dead end, because it is a new EDS with possibly its own integrability conditions; nevertheless, we do not need any detailed knowledge of this EDS, but only if it has solutions under the previously stated conditions. In fact, it is immediate to find that this EDS is equivalent to the PDE system
\[
\begin{cases}
  C+y\left(\lambda_1\right)_x=0,&\\
  B+\left(\lambda_2\right)_y=0,&\\
  A+\left(\lambda_1\right)_z=0,&\\
  A_z-B_y+C_x=0
\end{cases}
\]
with an easily found solution given by
\[
\left(x,y,z\right)\mapsto\left(\alpha=-f\left(x,z\right)\dif x\wedge\dif z,\lambda_1=0,\lambda_2=f\left(x,z\right)y\right)
\]
for $f$ an smooth function of two variables. Any of these solutions can be used in order to prove the desired contravariance of the proposed Lepage-equivalent problem.

\subsubsection{Restricted Hamiltonian-like system}

If $I_{\text{IC}}\subset\wedge{}^\bullet J^1p$ is the subbundle spanned by a set of generators of $\cI_{\text{IC}}$, the bundle $W\rightarrow J^1p$ determined by these data is given by
\begin{align}
  \left.W\right|_{\left(x,y,z,\phi,p,q,r\right)}&:=\left\{0+\left.I_{\text{IC}}\right|_{\left(x,y,z,\phi,p,q,r\right)}\cap\left(\wedge{}^3_2J^1p\right)^V\right\}\cr
  &=\left.I_{\text{IC}}\right|_{\left(x,y,z,\phi,p,q,r\right)}\cr
  &=\Big\{\left(A\dif x\wedge\dif y+B\dif x\wedge\dif z+C\dif y\wedge\dif z\right)\wedge\theta+\lambda_1\Gamma_1+\lambda_2\Gamma_2:\cr
  &\pushright{A,B,C,\lambda_1,\lambda_2\in\mR\Big\}.}\label{Eq:RestrictedPhaseSpace}
\end{align}
Let us define the $m$-form
\[
\Theta_h:=j^*\Theta
\]
where $j:W\hookrightarrow\wedge{}^3_2J^1p$ is the canonical immersion; thus we can prove the following result.
\begin{prop}[Co- and contravariance of the Lepage-equivalent problem]
  The extremals of the variational problem $\left(W\rightarrow M,\Theta_h,0\right)$, namely, sections $\sigma:M\rightarrow W$ such that
  \[
  \sigma^*\left(X\lrcorner\dif\Theta_h\right)=0\qquad\text{for all }X\in\mathfrak{X}^V\left(W\right)
  \]
  are in a one-to-one correspondence with the extremals of the original variational problem
  \[
  \left(F\rightarrow M,0,\cI_{\text{IC}}\right).
  \]
\end{prop}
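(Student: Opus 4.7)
The plan is to split the equivalence into its covariant and contravariant halves and, for each, reduce to material already developed earlier in the paper. Because the Lagrangian form is trivial, extremals of $(F\to M,0,\cI_{\text{IC}})$ are precisely integral sections of $\cI_{\text{IC}}$, and on the Lepage side they are, by Proposition~\ref{Prop:OursEquivalentOnVars}, sections $\sigma:M\to W$ satisfying the Hamilton--Cartan condition $\sigma^{*}(X\lrcorner\dif\Theta_h)=0$ for every vertical $X$.

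For the covariance direction, I would first verify admissibility of $(F\to M,0,\cI_{\text{IC}})$ in the sense of the earlier definition. The restriction EDS is linearly generated by sections of the finite-rank subbundle $I_{\text{IC}}\subset\wedge^{\bullet}J^{1}p$ spanned by $\theta,\Gamma_{1},\Gamma_{2}$; the identification in \eqref{Eq:VertBundleTrivial} makes $I_{\text{IC}}\cap(\wedge^{3}_{2}J^{1}p)^{V}$ of constant rank; and the description \eqref{Eq:RestrictedPhaseSpace} exhibits $W$ as containing the trivial Lagrangian form together with every multiple of the three algebraic generators. Admissibility in hand, the general proposition on covariance of the canonical Lepage-equivalent already established in Section~\ref{Sec:HamiltonianVersion} applies, and yields the forward projection of extremals along $p_{1}$.

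For contravariance, fix an integral section $\phi:M\to F$ of $\cI_{\text{IC}}$ and look for a lift $\sigma:M\to W$ with $p_{1}\circ\sigma=\phi$ satisfying the Hamilton--Cartan condition. By \eqref{Eq:RestrictedPhaseSpace} such $\sigma$ is parametrized by coefficient functions $(A,B,C,\lambda_{1},\lambda_{2})$ on $M$. Writing $\dif\Theta_h$ in the adapted fiber coordinates on $W$ and contracting against a basis of $V(p_{1})$, the resulting equations pulled back along any such $\sigma$ reduce exactly to the EDS $\cJ=\langle\gamma_{1},\ldots,\gamma_{4}\rangle$ computed in Section~\ref{Sec:ContravarianceInt2}; the terms arising from $\theta,\Gamma_{1},\Gamma_{2}$ drop out because $\phi$ is already integral for $\cI_{\text{IC}}$. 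The explicit family of solutions of $\cJ$ exhibited there, namely $\alpha=-f(x,z)\dif x\wedge\dif z$, $\lambda_{1}=0$, $\lambda_{2}=f(x,z)\,y$ for arbitrary smooth $f$, then produces a bona fide lift of $\phi$, closing the contravariance direction.

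The main obstacle in this plan is the bookkeeping required to verify that the vertical contractions of $\dif\Theta_{h}$, once pulled back along a section of $W$ over $\phi$, really reduce to the four forms $\gamma_{1},\ldots,\gamma_{4}$ of Section~\ref{Sec:ContravarianceInt2} without generating any hidden algebraic restriction on the fiber coordinates $(A,B,C,\lambda_{1},\lambda_{2})$. This is a local computation in a definite coordinate system, but it must be done carefully so the identification with the auxiliary EDS $\cJ$ is exact; once that is in place, the existence of the one-parameter family of solutions in $f$ automatically delivers the desired lift and, incidentally, signals the non-uniqueness one expects in the absence of hyperregularity.
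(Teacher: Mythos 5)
Your proposal is correct and follows essentially the same route as the paper: covariance is obtained from the general result on admissible variational problems in Section \ref{Sec:HamiltonianVersion}, and contravariance is exactly the explicit lift constructed in Section \ref{Sec:ContravarianceInt2} via the auxiliary EDS $\cJ=\left\langle\gamma_1,\dots,\gamma_4\right\rangle$ and its solution $\alpha=-f\left(x,z\right)\dif x\wedge\dif z$, $\lambda_1=0$, $\lambda_2=f\left(x,z\right)y$. The paper in fact states the proposition without a separate proof, relying on precisely the discussion you reassemble, so your write-up fills in the same argument in slightly more detail.
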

Thus our previous discussion allow us to change the problem into a kind of \emph{restricted Hamiltonian system}, namely, to find the integrable $3$-multivectors $X_h\in\wedge{}^3\left(W\right)$ such that
\begin{equation}\label{Eq:RestHamStrong}
  X_h\lrcorner\dif\Theta_h=0,\qquad X_h\lrcorner\left(\dif x\wedge\dif y\wedge\dif y\right)=1.
\end{equation}

\subsubsection{The constraint algorithm}

As we saw before, in order to solve \eqref{Eq:RestHamStrong} it is necessary to construct the Hamilton submanifold $G_0$ in $\mathsf{Gr}_3\left(W\right)$. By using the corresponding definition it appears that this submanifold is defined by
\begin{align*}
  &Z^\phi_x=p,\quad Z^\phi_y=q,\quad Z^\phi_z=r\\
  &yZ^p_x+Z^r_z=0,\quad Z^q_y=0\\
  &Z^C_x-Z^B_y+Z^A_z=0\\
  &yZ^{\lambda_1}_x+C=0,\quad Z^{\lambda_2}_y+B=0,\quad Z^{\lambda_1}_z+A=0.
\end{align*}
The EDS $\cI_0$ induced by the contact structure has the Cartan characters $\left\{7,6,5\right\}$ and torsion spanned by
\[
Z^r_y-Z^q_z,\quad Z^r_x-Z^p_z,\quad Z^q_x-Z^p_y,\quad Z^C_z-yZ^A_x.
\]
The annihilation of first three functions are equivalent to the commutativity of the second order derivatives. The zero torsion locus is the submanifold $G_1\subset G_0$, and the Hamilton Pfaffian induces the EDS $\cI_1$ on it, whose Cartan characters are $\left\{7,5,2\right\}$, and thus not involutive\footnote{Clearly it has no torsion, because it was defined on the zero locus of the torsion of $\cI_0$}. In order to continue the search of an involutive EDS, we need to prolong the EDS $\left(G_1,\cI_1\right)$; it gives rise to the EDS composed by the Grassmann bundle $G_0^{\left(1\right)}:=\mathsf{Gr}_3\left(G_1\right)$ together with its contact structure $\cI_0^{\left(1\right)}$. This operation adds the coordinates
\[
Z^A_{y\alpha},Z^B_{x\alpha},Z^B_{y\alpha},Z^B_{z\alpha},Z^C_{y\alpha},Z^C_{z\alpha},Z^{\lambda_1}_{y\alpha},Z^{\lambda_2}_{x\alpha},Z^{\lambda_2}_{z\alpha},Z^q_{x\alpha},Z^r_{\alpha\beta}
\]
where $\alpha,\beta=x,y,z$. This EDS has Cartan characters $\left\{13,5,2\right\}$ and the torsionless condition translates into
\[
Z^q_{1x}=0,
\]
yielding to the first integrability condition, equivalent to $\phi_{xxy}=0$ on an integral section. The pullback along this new condition gives the EDS $\left(G_1^{\left(1\right)},\cI_1^{\left(1\right)}\right)$ with Cartan characters $\left\{12,5,2\right\}$ and no torsion; it is not involutive, so we need to prolong again. This new prolongation adds the coordinates corresponding to the second derivatives of the components of $Z$, and defines the EDS $\cI_2^{\left(0\right)}$ on $G_2^{\left(0\right)}:=\mathsf{Gr}_3\left(G_1^{\left(1\right)}\right)$; the zero torsion locus for this EDS is given by
\[
\left(Z^r_x\right)_{xz}=0
\]
that on an integral section is equivalent to $\phi_{zxzx}=0$. It is an additional constraint associated to this system; going into the pullback EDS $\left(G_2^{\left(1\right)},\cI_2^{\left(1\right)}\right)$ we see that it has no torsion, although its Cartan characters becomes $\left\{17,6,1\right\}$, and so it is not involutive. Performing an additional prolongation to an EDS $\left(G_3^{\left(0\right)},\cI_3^{\left(0\right)}\right)$, we obtain an involutive EDS with Cartan characters $\left\{22,8,1\right\}$; the algorithm must stops here, and we could present these operations in a diagram:
\[
\begin{diagram}
  \node{\boxed{G_0^{\left(3\right)}}}\arrow[4]{e,t,J}{}\arrow{s,l}{}\node[4]{\mathsf{Gr}_3^{\left(4\right)}\left(W\right)}\arrow{s,r}{}\\
  \node{G_1^{\left(2\right)}}\arrow[3]{s,l}{}\arrow{e,b,J}{}\node{G_0^{\left(2\right)}}\arrow[3]{e,t,J}{}\arrow{s,l}{}\node[3]{\mathsf{Gr}_3^{\left(3\right)}\left(W\right)}\arrow{s,r}{}\\
  \node[2]{G_1^{\left(1\right)}}\arrow{e,b,J}{}\node{G_0^{\left(1\right)}}\arrow[2]{e,t,J}{}\arrow{s,l}{}\node[2]{\mathsf{Gr}_3^{\left(2\right)}\left(W\right)}\arrow{s,r}{}\\
  \node[3]{G_1}\arrow{e,b,J}{}\node{G_0}\arrow{e,b,J}{}\node{\mathsf{Gr}_3\left(W\right)}\arrow{s,r}{}\\
  \node{\widetilde{W}}\arrow[4]{e,b,=}{}\node[4]{W}\arrow{s,l}{}\\
  \node[5]{M}
\end{diagram}
\]
where, as before $G_0^{\left(k\right)}=\mathsf{Gr}_3\left(G_{j}^{\left(k-1\right)}\right)$ and
\[
\mathsf{Gr}_3^{\left(k\right)}=\underbrace{\mathsf{Gr}_3\circ\cdots\circ\mathsf{Gr}_3}_{k\text{ times}}.
\]
The fact $\widetilde{W}=W$ encodes the fact that no restriction is imposed on the coordinates
\[
\phi,A,B,C,p,q,r,\lambda_1,\lambda_2;
\]
instead, these constraints are absorbed by the components of $Z$.

\subsection{Example: Classical first order field theory with prolongations}

It is very interesting to note that there exists a classical first order field theory where the application of the Cartan algorithm on its Hamilton system requires to perform prolongations \cite{zbMATH01857369}. The bundle of fields is taken to be
\[
\pi:\mR^6\rightarrow\mR^3:\left(t,x,y,u,v,w\right)\mapsto\left(t,x,y\right)
\]
and its underlying Lagrangian function is given by
\[
L\left(j^1_{\left(x,y,z\right)}\left(u,v,w\right)\right):=\frac{1}{2}\left(u_t^2+yu_x^2\right)+v_yu_y+vw.
\]
The diagram defining the restricted Hamiltonian system is
\[
\begin{diagram}
  \node{\pi_{10}^*\left(\wedge^3_2\mR^6\right)}\arrow{e,l}{}\arrow{s,t}{}\node{\wedge^3_2\mR^6}\arrow{s,r}{}\\
  \node{J^1\pi}\arrow{e,b}{\pi_{10}}\node{\mR^6}\arrow{e,b}{\pi}\node{\mR^3}
\end{diagram}
\]
such that the space for this Hamiltonian system will be composed by the $3$-forms
\[
\widetilde{W}_L:=\left\{\left(L-A^iu_i-B^iv_i-C^iw_i\right)\eta+A^i\eta_i\wedge\dif u+B^i\eta_i\wedge\dif v+C^i\eta_i\wedge\dif w\right\}
\]
on $J^1\pi$, where the indices $i,j$ runs into the values $x,y,z$, $\eta:=\dif t\wedge\dif x\wedge\dif y$ and
\[
\eta_x:=\frac{\partial}{\partial x}\lrcorner\eta,\quad\eta_y:=\frac{\partial}{\partial y}\lrcorner\eta,\quad\eta_t:=\frac{\partial}{\partial t}\lrcorner\eta.
\]
The Hamilton equations read
\[
Z\lrcorner\Omega^L=0,
\]
with $\Omega^L$ the restriction of the canonical $4$-form of $\wedge^3J^1\pi$ to this subbundle of forms, and $Z=Z_t\wedge Z_x\wedge Z_y$ a decomposable $3$-multivector on $\widetilde{W}_L$ with components
\[
Z_i:=\frac{\partial}{\partial x^i}+Z^u_i\frac{\partial}{\partial u}+\cdots+Z^{u_j}_i\frac{\partial}{\partial u_j}+\cdots+Z^{C^j}_i\frac{\partial}{\partial C^j}.
\]
The components of $Z$ can be considered as the coordinates on the fibres of the bundle $\mathsf{Gr}_3\left(\widetilde{W}_L\right)$, and this viewpoint allows us to consider the Hamilton equations as equations defining a submanifold $G_0$ there; these equations turns out to be
\begin{subequations}
  \begin{align}
    &A^t=u_t,\quad A^x=yu_x,\quad A^y=v_y,\quad B^t=B^x=0,\quad B^y=u_y,\quad C^t=C^x=C^y=0,\label{Eq:HamEqsSeilerA}\\
    &Z^{u}_t=u_t, Z^{v}_t=v_t, Z^{w}_t=w_t, Z^{u}_x=u_x, Z^{v}_x=v_x, Z^{w}_x=w_x, Z^{u}_y=u_y, Z^{v}_y=v_y, Z^{w}_y=w_y,\label{Eq:HamEqsSeilerB}\\
    &Z^{A^t}_t+Z^{A^x}_x+Z^{A^y}_y=0,\quad Z^{B^t}_t+Z^{B^x}_x+Z^{B^y}_y=v,\quad Z^{C^t}_t+Z^{C^x}_x+Z^{C^y}_y=w.\label{Eq:HamEqsSeilerC}
  \end{align}
\end{subequations}
The conditions \eqref{Eq:HamEqsSeilerA} restricts the form $\Omega^L$ to be the usual Cartan form for this field theory; in fact, they are equivalent to
\[
A^i=\frac{\partial L}{\partial u_i},\quad B^i=\frac{\partial L}{\partial v_i},\quad C^i=\frac{\partial L}{\partial w_i}.
\]
Additionally, they determine a submanifold $W_0\subset\widetilde{W}_L$ on which the subbundle $G_0$ is fibered. The next line, Eqs. \eqref{Eq:HamEqsSeilerB}, means that the functions $u_i,v_i,w_i$ will be the derivatives of the functions $u,v,w$ on the solutions; just \eqref{Eq:HamEqsSeilerC} are true equations of motion in the sense of classical field theory. This reveals that our approach includes some of the granted relations between quantities as equations of motion.
\newline
The restriction of the contact structure of $\mathsf{Gr}_3\left(\widetilde{W}_L\right)$ to $G_0$ gives rise to an EDS $\cI_0$, which contains a number of $0$-form (i.e. functions) that must be absorbed, as Cartan algorithm dictates; it gives rise to a new set of restrictions to be adopted
\begin{align*}
  &Z^{B^x}_x+Z^{B^y}_y=w,\quad Z^{B^t}_t=Z^{B^t}_x=Z^{B^t}_y=Z^{B^x}_t=Z^{B^x}_x=Z^{B^x}_y=0,\\
  &Z^{C^x}_x+Z^{C^y}_y=w,\quad Z^{C^t}_t=Z^{C^t}_x=Z^{C^t}_y=Z^{C^x}_t=Z^{C^x}_x=Z^{C^x}_y=Z^{C^y}_t=Z^{C^y}_x=Z^{C^y}_y=0,\\
  &Z^{A^x}_x+Z^{A^y}_y+Z^{u_t}_t=0, Z^{A^t}_x-Z^{u_t}_x=0, Z^{A^t}_y-Z^{u_t}_y=0, Z^{A^x}_t-yZ^{u_x}_t=0, Z^{A^x}_x-yZ^{u_x}_x=0,\\
  &Z^{A^y}_t-Z^{v_y}_t=0, Z^{A^y}_x-Z^{v_y}_x=0, Z^{A^y}_y-Z^{v_y}_y=0, Z^{A^x}_y-yZ^{u_x}_y=u_x, \\
  &Z^{B^y}_t-Z^{u_y}_t=0,\quad Z^{B^y}_y-Z^{u_y}_y=0, \quad Z^{B^y}yt-Z^{u_y}_y=0.
\end{align*}
A consequence of these equations is
\begin{equation}
  \label{Eq:FirstIntCondSeiler}
  v=0,
\end{equation}
which becomes an equation of motion; these conditions define a subbundle $G_1\subset G_0$ which is fibered on the submanifold $W_1\subset W_0$. The EDS $\cI_1$ induced on $G_1$ has $0$-forms yielding to the additional constraints
\[
v_t=v_x=v_y=0;
\]
the continuation of the Cartan algorithm forces us to use them in order to define the subbundle $G_2\subset G_1$ fibering onto $W_2\subset W_1$, where $\cI_1$ induces a new EDS $\cI_2$. Again, a set of $0$-forms arise, imposing
\begin{align*}
  &Z^{v_t}_t=Z^{v_t}_x=Z^{v_t}_y=Z^{v_x}_t=Z^{v_x}_x=Z^{v_x}_y=Z^{A^y}_t=Z^{A^y}_x=Z^{A^y}_y=0.
\end{align*}
These are simply the annihilation of the second derivatives of $v$, expressed in terms of the chosen free variables on $G_2$; they determine the subbundle $G_3\subset G_2$ fibered on $W_3=W_2$. The induced EDS $\cI_3$ has no $0$-forms at last; nevertheless, it is not involutive, because the occurrence of torsion, whose absortion, as the Cartan algorithm tells us, yields to the restriction
\[
Z^{w_x}_y-Z^{w_y}_x=Z^{w_x}_t-Z^{w_t}_x=Z^{w_t}_y-Z^{w_y}_t=0, Z^{A^t}_y-Z^{B^y}_t=0, Z^{A^x}_y-yZ^{B^y}_x=u_x, Z^{A^x}_t=yZ^{A^t}_x.
\]
As usual, the first appearance of torsion is related to the equality of crossed second order derivatives; thus it determines a new subbundle $G_4\subset G_3$ with no restriction in its base, i.e. $W_4=W_3=W_2$. The EDS $\cI_4$ induced on $G_4$ by $\cI_3$ has nor $0$-forms neither torsion, but it has Cartan characters $\left\{6,3,1\right\}$, and so \emph{it is not involutive}. The Cartan algorithm says us to prolong this EDS in the search of an equivalent involutive EDS; thus we obtain an EDS $\cI_0^{\left(1\right)}$ on $G_0^{\left(1\right)}\subset\mathsf{Gr}_3^{\left(2\right)}\left(\widetilde{W}_L\right)$, which has torsion. The associated restriction arising from the absortion of this torsion is
\[
2Z^{A^x}_x-2y\left(Z^{A^x}_x\right)_y-y^2Z^{w_t}_t-y^3Z^{w_x}_x=0,
\]
thus determining a subbundle $G_1^{\left(1\right)}\subset G_0^{\left(1\right)}$ fibered on $W_4=W_3=W_2$; on a solution it is equivalent to the PDE
\[
2u_{xxy}+yw_{xx}+w_{tt}=0
\]
if is taken into account that in such case $Z^{A^x}_x=\left(A^x\right)_x,A^x=yu_x$ and every subindex means partial derivative. The EDS $\cI_1^{\left(1\right)}$ obtained by pullback along this constraint has no $0$-forms but it presents torsion, giving rise to the additional constraint
\[
3Z^{w_x}_x+\left(Z^{w_t}_t\right)_y+y\left(Z^{w_x}_x\right)_y=0;
\]
it yields to the PDE
\[
w_{ytt}+yw_{xtt}+3w_{xx}=0
\]
by using that on solutions we have the identifications $Z^{w_x}_x=w_{xx}, Z^{w_t}_t=w_{tt}$ and, as before, the subindices mean partial derivatives everywhere. Anyway, from this constraint emerges the subbundle $G_2^{\left(1\right)}\subset G_1^{\left(1\right)}$ without further restrictions on the base; on it appears the EDS $\cI_2^{\left(1\right)}$ without $0$-forms, with zero torsion, but with the Cartan characters $\left\{9,3,0\right\}$, and therefore not involutive. Thus it is necessary to perform a prolongation to an EDS $\cI^{\left(2\right)}_0$ induced by the contact structure on $G_0^{\left(2\right)}\subset\mathsf{Gr}_3^{\left(3\right)}\left(\widetilde{W}_L\right)$, which has no $0$-forms but nonzero torsion, determining the restriction
\[
2\left(Z^{A^t}_x\right)_{tx}+y\left(Z^{w_t}_t\right)_{tt}+2y^2\left(Z^{w_t}_t\right)_{xx}+y^3\left(Z^{w_x}_x\right)_{xx}=0.
\]
This is the last of the equations (the fourth order equation) mentioned in \cite{zbMATH01857369}. The zero torsion locus is a subbundle $G_1^{\left(2\right)}\subset G_0^{\left(2\right)}$ fibered onto $W_4$, and the induced EDS $\cI_1^{\left(2\right)}$ has neither torsion nor $0$-forms, and its Cartan characters become $\left\{11,2,0\right\}$, meaning that it is an involutive EDS; so the algorithm must stops here. As before, we can fit all these subbundles in a diagram, as shows Figure \ref{Fig:SeilerLagDiagram}.
\begin{figure}
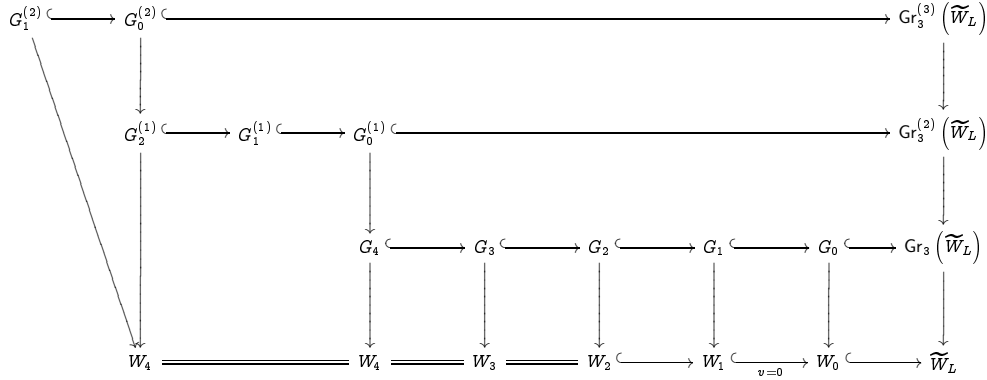

  \begin{adjustbox}{max size={.95\textwidth}{.8\textheight}}
    $\displaystyle
    \begin{diagram}
      \node{G_1^{\left(2\right)}}\arrow{e,b,J}{}\arrow{ssse,b}{}\node{G_0^{\left(2\right)}}\arrow{s,r}{}\arrow[7]{e,t,J}{}\node[7]{\mathsf{Gr}_3^{\left(3\right)}\left(\widetilde{W}_L\right)}\arrow{s,r}{}\\
      \node[2]{G^{\left(1\right)}_2}\arrow[2]{s,r}{}\arrow{e,t,J}{}\node{G^{\left(1\right)}_1}\arrow{e,t,J}{}\node{G^{\left(1\right)}_0}\arrow[5]{e,t,J}{}\arrow{s,r}{}\node[5]{\mathsf{Gr}_3^{\left(2\right)}\left(\widetilde{W}_L\right)}\arrow{s,r}{}\\
      \node[4]{G_4}\arrow{e,t,J}{}\arrow{s,l}{}\node{G_3}\arrow{s,l}{}\arrow{e,t,J}{}\node{G_2}\arrow{s,l}{}\arrow{e,t,J}{}\node{G_1}\arrow{s,l}{}\arrow{e,t,J}{}\node{G_0}\arrow{s,l}{}\arrow{e,t,J}{}\node{\mathsf{Gr}_3\left(\widetilde{W}_L\right)}\arrow{s,l}{}\\
      \node[2]{W_4}\arrow[2]{e,b,=}{}\node[2]{W_4}\arrow{e,b,=}{}\node{W_3}\arrow{e,b,=}{}\node{W_2}\arrow{e,b,J}{}\node{W_1}\arrow{e,b,J}{v=0}\node{W_0}\arrow{e,b,J}{}\node{\widetilde{W}_L}
    \end{diagram}
    $
  \end{adjustbox}
  \caption{The full structure of the Cartan constraints of Lagrangian $L$.}
  \label{Fig:SeilerLagDiagram}
\end{figure}

\subsection{Example: An affine Lagrangian density}
In \cite{DeLeon2005839} an example based on a field theory with an affine Lagrangian is discussed, in order to show the main features of its constraint algorithm. It is our understanding that dealing with this example from the viewpoint of our own algorithm could be useful for exploring its behaviour in a controlled environment.

\subsubsection{Introduction}

The configuration bundle is in this case $\pi:\mR^4\rightarrow\mR^2$ and so $J^1\pi=\mR^4\times\mR^8$ with global coordinates $\left(x^i,y^j,v^i_j\right)$ where $i,j=1,2$. On $J^1\pi$ we consider the $2$-form
\[
\alpha:=y^1y^2\dif x^1\wedge\dif x^2-x^2y^1\dif y^1\wedge\dif x^1-x^2y^2\dif y^2\wedge\dif x^1
\]
defining the (pre)symplectic restricted Hamiltonian equations
\begin{equation}\label{Eq:AffineHam}
  Z\lrcorner\dif\alpha=0,\qquad Z\lrcorner\dif x^1\wedge\dif x^2=1
\end{equation}
for some $Z\in\mathfrak{X}^2\left(J^1\pi\right)$. As we know, it is the right setting for the use of the algorithm developed above.

\subsubsection{Resolution}

According to our method, we need to pullback the canonical contact structure of $\pi_{2,0}:G_2\left(T\left(J^1\pi\right),\dif x^1\wedge\dif x^2\right)\rightarrow\mR^4$ to the subset 
\[
G_0:=\left\{E_{\left(x^i,y^i,v^i_j\right)}:v\wedge w\lrcorner\left(\left.\dif\alpha\right|_{\left(x^i,y^i,v^i_j\right)}\right)=0\text{ for some }v,w\text{ basis of }E\right\}.
\]
By taking into account that
\[
G_2\left(T\left(J^1\pi\right),\dif x^1\wedge\dif x^2\right)=\mR^4\times\mR^8\times\mR^{20}
\]
with global coordinates $\left(x^1,y^j,v^i_j,Y^j_k,V^i_{jk}\right)$, the set $G_0$ is described by the formulas
\[
y^1-y^2=0,\qquad\left(y^1-y^2\right)\left(Y^1_1-Y^2_1\right)=0,\qquad\left(y^1-y^2\right)\left(Y^1_2-Y^2_2\right)=0;
\]
then $G_0$ is the set of points of $G_2\left(T\left(J^1\pi\right),\dif x^1\wedge\dif x^2\right)$ such that $y^1=y^2$, and pulling back the contact structure on it we obtain the EDS $\cI_0$ generated by the $0$-forms 
\[
M:=Y^1_1-Y^2_1,\qquad N:=Y^1_2-Y^2_2
\]
and the set of $1$-forms
\[
\left\{\dif v^i_j-v^i_{jk}\dif x^k,\dif y^2-y^2_i\dif x^i\right\}.
\]
The level zero set of the functions $M,N\in C^\infty\left(G_2\left(T\left(J^1\pi\right),\dif x^1\wedge\dif x^2\right)\right)$ define a submanifold $G_1$ where we must restrict to, so the solutions of \eqref{Eq:AffineHam} are the elements of the fibers of the Grassmann bundle living into $G_1$; the EDS $\cI_1$ gives the geometrical interpretation of the components of these solutions as derivatives of the corresponding dependent variables, and the constraints induced by the requirements of tangency and integrability are recovered as the equations describing the set $E_1:=\pi_{2,0}\left(G_1\right)$. As before, these structures fit in the following diagram
\[
\begin{diagram}
  \node{G_1}\arrow{s,l}{}\arrow{e,t,J}{}\node{G_2\left(T\left(J^1\pi\right),\dif x^1\wedge\dif x^2\right)}\arrow{s,r}{\pi_{2,1}}\\
  \node{K_1}\arrow{s,l}{}\arrow{e,t,J}{}\node{J^1\pi}\arrow{s,r}{\pi_{1,0}}\\
  \node{E_1}\arrow{se,b}{}\arrow{e,t,J}{}\node{E}\arrow{s,r}{\pi}\\
  \node[2]{\mR^2}
\end{diagram}
\]
In short, we will have solutions for \eqref{Eq:AffineHam} only when $y^1=y^2$, and the corresponding $Z$ must verify that $Y_i^1=Y_i^2,i=1,2$.

\subsection{Example: Lagrangian with integrability condition}

Let us discuss an example from \cite{zbMATH00177140}, where a first order (singular, toy) Lagrangian has first order integrability conditions, in order to see how our scheme works in the search of sufficient conditions for the existence of solutions of a field theory. From this work we obtain the Lagrangian definition
\[
L:=u_x\left(w_x+v_y\right)+yw^2
\]
together with the definitions $p:E:=\mR^5\rightarrow\mR^2=:M$ and the diagram
\[
\begin{diagram}
  \node{p_{10}^*\left(\wedge{}^2_2E\right)}\arrow{e,t}{}\arrow[2]{s,l}{\bar\tau}\node{\wedge{}^2_2E}\arrow[2]{s,r}{\bar\tau}\\
  \\
  \node{J^1p}\arrow{e,b}{p_{10}}\node{E}\arrow{e,t}{p}\node{M}\\
  \node[2]{\left(x,y,u,v,w\right)}\arrow{e,b,T}{}\node{\left(x,y\right)}
\end{diagram}
\]
The coordinates on $J^1p$ are the canonical coordinates $\left(x,y,u,v,w,u_x,v_x,w_x,u_y,v_y,w_y\right)$. By using the prescriptions developed above, we construct the submanifold $\widetilde{W}\subset\wedge^\bullet J^1p$ such that
\begin{align*}
  \widetilde{W}&:=L\eta+I_{\text{con}}\cap\left(\wedge^2_2E\right)^V\\
  &=\left\{L\eta+\alpha\wedge\theta^u+\beta\wedge\theta^v+\gamma\wedge\theta^w:\alpha,\beta,\gamma\in\wedge^1M\right\}
\end{align*}
where $\eta:=\dif x\wedge\dif y$, $I_{\text{con}}$ is the subbundle of $\wedge^\bullet J^1p$ whose sections generate $\cI_{\text{con}}$ and
\begin{align*}
  \theta^u&:=\dif u-u_x\dif x-u_y\dif y\\
  \theta^v&:=\dif v-v_x\dif x-v_y\dif y\\
  \theta^w&:=\dif w-w_x\dif x-w_y\dif y;
\end{align*}
additionally, formula \eqref{Eq:VertBundleTrivial} was used in order to symplify the calculations involved in this definition. The canonical $2$-form on $\wedge^2J^1p$ induces a $2$-form $\Theta_L$ on $\widetilde{W}$, and the Hamiltonian-like system yields to the following problem: To find a $2$-multivector decomposable and integrable $Z$ such that
\begin{equation}\label{Eq:HamiltonSaunders}
  Z\lrcorner\dif\Theta_L=0,\qquad Z\lrcorner\eta=1.
\end{equation}
We need to parametrize the spaces $T^*M$ appearing into our description of $\widetilde{W}$; let us use
\begin{align*}
  \alpha:=p\dif x+q\dif y,\qquad\beta:=r\dif x+s\dif y,\qquad\gamma:=m\dif x+n\dif y.
\end{align*}
So if we write down $Z:=Z_x\wedge Z_y$ where
\[
Z_x:=\frac{\partial}{\partial x}+Z^u_x\frac{\partial}{\partial u}+\cdots+Z^n_x\frac{\partial}{\partial n}
\]
and similar for $y$, the Hamilton equations \eqref{Eq:HamiltonSaunders} yields to
\[
Z^A_i=u^A_i
\]
for $A=u,v,w,i=x,y$, and additionally
\begin{align}
  &Z^p_y-Z^q_x=0,\quad Z^r_y-Z^s_x=0,\quad Z^m_y-Z^n_x=2yw\notag\\
  &u_x=-n,\quad w_x=-\left(v_y+q\right),\quad p=s=m=0,\quad r=-n.\label{Eq:W21J21Lagrangian}
\end{align}
The contact structure on $\mathsf{Gr}_2\left(\widetilde{W}\right)$ induces the EDS $\cI_0$ on the submanifold $G_0$ determined by these equations; it results that this EDS contains a bunch of $0$-forms, and its elimination yields to the constraints
\begin{align*}
  &Z^q_y+Z^{v_y}_y+Z^{w_x}_y=0,\quad Z^q_x+Z^{v_y}_x+Z^{w_x}_x=0\\
  &Z^n_y+Z^{u_x}_y=0,\quad Z^n_y+Z^{u_x}_y=0\\
  &Z^p_x=Z^q_x=Z^m_x=Z^s_y=Z^s_x=0\\
  &Z^s_x+Z^n_y=0,\quad Z^n_y+Z^r_x=0\\
  &Z^n_x+2yw=0.
\end{align*}
These constraints gives rise to a new submanifold $G_1$, where the EDS $\cI_1$ is induced; by construction, it has no $0$-forms, but it presents torsion, whose annihilation imposes the additional constraints
\[
Z^{w_y}_x-Z^{w_x}_y=0,\quad Z^{w_x}_x+Z^{v_x}_y=0,\quad Z^{u_y}_x=0,\quad yw_y+w=0.
\]
The last constraint is the first order constraint found in \cite{zbMATH00177140}, the remaining fix some free components of the ``Hamiltonian'' $2$-vector $Z$; thus a submanifold $G_2$ is found, with EDS $\cI_2$. This EDS contains $0$-forms inducing the constraints
\[
yZ^{w_y}_y+2w_y,\qquad yZ^{w_y}_x-v_y-q=0;
\]
under the assumption $y\not=0$, they determine a submanifold $G_3$ with EDS $\cI_3$, which has Cartan character $\left\{4,0\right\}$, and is an involutive EDS. The diagram of the algorithm becomes
\[
\begin{diagram}
  \node{G_3}\arrow{e,t,J}{\left(y\not=0\right)}\arrow{s,l}{}\node{G_2}\arrow{e,t,J}{}\arrow{s,l}{}\node{G_1}\arrow{e,t,J}{}\arrow{s,l}{}\node{G_0}\arrow{e,t,J}{}\arrow{s,l}{}\node{\mathsf{Gr}_2\left(\widetilde{W}\right)}\arrow{s,l}{}\\
  \node{W_3}\arrow{e,t,=}{}\arrow{se,l}{}\node{W_2}\arrow{e,t,J}{A}\arrow{s,l}{}\node{W_1}\arrow{e,t,=}{}\arrow{see,l}{}\node{W_0}\arrow{e,t,J}{B}\arrow{se,l}{}\node{\widetilde{W}}\arrow{s,l}{}\\
  \node[2]{J_2}\arrow[3]{e,t,J}{C}\node[3]{J^1p}
\end{diagram}
\]
The inclusion $A$ is determined by the constraint $yw_y+w=0$ that is projectable and induces the map $C$; the map $B$ arises from \eqref{Eq:W21J21Lagrangian}.

\appendix

\section{Exterior differential systems and involution}
\label{App:EDSInvolution}
There are two operations needing some clarifications: The \emph{involution} issue and the \emph{prolongation procedure}. Without going into details, the idea is to use Cauchy-Kovalevskaia (CK) theorem in order to find sufficient conditions for existence of solutions of the PDE system underlying an EDS. Although this requirement is forcing us to work in the real-analytic realm, it is interesting to note that the EDS describing physical theories are of polynomial nature, and so well covered by these methods. The building blocks from which construct the solutions will be the so called \emph{integral elements}, which are the planes in the tangent space of the manifold where the EDS lives, on which every form in the EDS annihilates. The integral elements forms a set in the \emph{Grassmann bundle} $p_{1,0}:G_m\left(TW,p^*\eta\right)\rightarrow W$, whose fiber on $w\in W$ are made of the $m$-dimensional subspaces $E$ of $T_wW$ such that $\left.p^*\eta\right|_E\not=0$; an integral element is called \emph{ordinary} if the connected components through it is an smooth submanifold of the Grassmann bundle. Addtionally, because we are trying to proceed by dividing the given EDS into a sequence of PDE systems, each fitting the hypothesis of CK theorem, it is necessary to ensures the smoothness of the posible extensions of a given integral element. The ordinary integral elements where it happened are called \emph{regular}, and for them it is true the so called \emph{Cartan-Kähler theorem}; thus we ahve the following definition
\begin{defc}[Involutive EDS]
We say that an EDS with independence condition $\left(\cI,\eta\right)$ on $W$ is \emph{in involution} if and only if there exists a regular element on every $w\in W$.
\end{defc}
Thus as a corollary of the Cartan-Kähler theorem, through every point of the base manifold of an involuvite EDS passes an integral manifold of this EDS. Although the regularity condition is hard to be applied directly, there exists a test devised by Cartan which reduces it to essentially linear algebra manipulations (still hard to be carried out at hands, but suitable to be worked by a computer.) This is achieved by defining the so called \emph{characters of the EDS} associated to an integral element $E$: considering a flag of integral elements 
\[
0=E_0\subset\cdots\subset E_{m-1}\subset E_m=E
\]
whose top element is the given integral element $E$, and looking for the codimensions $c_k$ of their \emph{polar spaces} $H\left(E_k\right),k=0,\cdots,m-1$, namely, the vector space composed by those directions in which $E_k$ can be enlarged to a larger integral element.
\begin{thm}[E. Cartan's involutivity test]
  The set of $m$-integral elements for $\cI$, $V_m\left(\cI\right)\subset\mathsf{Gr}_m\left(W\right)$ is an smooth submanifold in a neighborhood of $E$ if and only if it is the top element of a flag of regular elements such that
  \[
  \mathop{\text{codim}}{V_m\left(\cI\right)}=c_0+\cdots+c_{m-1}.
  \]
\end{thm}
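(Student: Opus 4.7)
The plan is to prove this by local coordinate analysis on the Grassmann bundle, combining the classical polar equation count with a tower argument for smoothness. Because the statement is E.~Cartan's test in its standard formulation, I will follow the approach of Bryant--Chern--Gardner--Goldschmidt--Griffiths, adapted to the notation of the paper.

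First, I would fix local coordinates $(x^1,\dots,x^m,u^1,\dots,u^s)$ on $W$ near the reference point $w=p_{1,0}(E)$, chosen so that $E$ is spanned by $\partial/\partial x^i$, $i=1,\dots,m$, and so that the flag $0=E_0\subset E_1\subset\dots\subset E_m=E$ is $E_k=\langle\partial/\partial x^1,\dots,\partial/\partial x^k\rangle$. A neighborhood of $E$ in $\mathsf{Gr}_m(TW,p^*\eta)$ is then parametrized by matrices $(p^a_i)$ giving the plane spanned by $\partial/\partial x^i+p^a_i\,\partial/\partial u^a$. For each homogeneous generator $\omega\in\cI$ of degree $r$, the condition that the plane be integral is obtained by contracting $\omega$ against every $r$-subset of the spanning vectors; expanding in the $p^a_i$ yields polynomial equations whose zero locus locally describes $V_m(\cI)$.

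Next I would translate the definition of the polar spaces $H(E_k)$ into the same language: a vector $v=v^a\partial/\partial u^a+v^i\partial/\partial x^i$ lies in $H(E_k)$ iff $v\wedge E_k$ is an integral element, which in coordinates gives a linear system on $(v^a)$ of rank exactly $c_k=\operatorname{codim} H(E_k)$. Organizing the defining equations of $V_m(\cI)$ by the filtration of the flag, the equations obtained by contracting generators with $E_{k-1}$ and one additional leg contribute precisely $c_{k-1}$ independent equations on the corresponding column $(p^a_k)_a$ of the matrix parametrizing nearby planes. Summing over $k=1,\dots,m$ yields Cartan's inequality
\[
\operatorname{codim}_{E} V_m(\cI)\;\ge\;c_0+c_1+\cdots+c_{m-1},
\]
valid at any integral element $E$ equipped with such a flag.

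For the equivalence in the theorem I would then invoke the tower of forgetful maps
\[
V_m(\cI)\longrightarrow V_{m-1}(\cI)\longrightarrow\cdots\longrightarrow V_1(\cI)\longrightarrow W,
\]
sending a $k$-element to its $(k{-}1)$-sub-element inside the fixed flag. By definition, $E_k$ being regular means $V_k(\cI)$ is a smooth submanifold of $\mathsf{Gr}_k(TW)$ near $E_k$; the fibre of $V_k(\cI)\to V_{k-1}(\cI)$ over $E_{k-1}$ is an open subset of the projectivization $\mathbb{P}(H(E_{k-1})/E_{k-1})$, hence smooth of codimension $c_{k-1}$. If the full flag is regular, inductively each stage of the tower is a smooth submersion over its image with smooth fibres of the expected codimension, so $V_m(\cI)$ is smooth of codimension exactly $c_0+\cdots+c_{m-1}$ near $E$; this is the ``if'' direction. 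Conversely, if $V_m(\cI)$ is smooth near $E$ with codimension equal to that sum, the inequality above is saturated at each intermediate step, forcing each $V_k(\cI)$ to be smooth of the expected dimension near $E_k$, i.e.\ the flag is regular.

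The delicate point — and the step I expect to be the main obstacle — is the independence claim underlying the inequality: one must verify that the $c_{k-1}$ new equations introduced at level $k$ are linearly independent of those already present and of each other, with no hidden relations produced by the higher-degree generators of $\cI$. This is handled by choosing the flag so that the differentials of the defining functions at $E$ have a block-triangular form compatible with the flag, and by checking that the polar equations at level $k$ depend nontrivially only on the $k$-th column of $(p^a_i)$; once this triangular structure is established, both the codimension count and the smoothness in the tower follow from the implicit function theorem.
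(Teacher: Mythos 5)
The paper does not actually prove this theorem: it is quoted in Appendix \ref{App:EDSInvolution} as a classical result, with the argument deferred to the cited references (Bryant et al.\ and Ivey--Landsberg). Your proposal is therefore being measured against the standard literature proof, and in outline — local coordinates on the Grassmann bundle, the polar-equation count giving $\operatorname{codim}V_m\left(\cI\right)\ge c_0+\cdots+c_{m-1}$, and a flag-by-flag analysis — it is the right strategy.

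There are, however, two genuine gaps. First, the tower $V_m\left(\cI\right)\to V_{m-1}\left(\cI\right)\to\cdots$ is not well defined as you describe it: an integral element $E'$ near $E$ does not contain $E_{m-1}$, so ``its $(m-1)$-sub-element inside the fixed flag'' does not exist. One must pass to the incidence variety of integral flags, or intersect with a fixed hyperplane transverse to $E$; and even then the projection onto $V_m\left(\cI\right)$ is far from injective, so ``smooth base, smooth fibres of the right codimension'' does not by itself give smoothness of $V_m\left(\cI\right)$ — asserting that each stage of the tower is a smooth submersion presupposes the smoothness you are trying to establish. The standard proof closes this circle with a squeeze you never state: the codimension inequality places $V_m\left(\cI\right)$ locally inside a smooth manifold $Z$ of codimension exactly $c_0+\cdots+c_{m-1}$ (the zero set of polar equations with independent differentials at $E$), while the regular flag allows one to construct, by successive polar extensions, a smooth family of integral elements through $E$ of dimension $\dim Z$; since that family lies in $V_m\left(\cI\right)\subseteq Z$, all three coincide near $E$. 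Second, you take ``regular'' to mean ``$V_k\left(\cI\right)$ is smooth near $E_k$,'' but that is the definition of \emph{ordinary}; regularity additionally requires the polar rank $c_k$ to be locally constant (this is precisely the extra condition the paper emphasizes when it introduces regular elements). The omission is harmless for the inequality, but it leaves the converse direction incomplete: saturation of the codimension bound along a generic flag yields ordinarity of the intermediate $E_k$, yet the local constancy of the $c_k$ still has to be extracted, e.g.\ from semicontinuity of the polar rank combined with the dimension count, before you may call the flag regular.
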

These characters asociated to an integral element $E$ are related to Cartan characters we refer into the text by formulas
\begin{align*}
  s_0&=c_0\\
  s_k&=c_k-c_{k-1},\qquad 1\leq k\leq m-1\\
  s_m&=\mathop{\text{codim}}{E}-c_{m-1},
\end{align*}
and so the Cartan's test reads
\[
\mathop{\text{dim}}{V_m\left(\cI\right)}-\mathop{\text{dim}}{W}=s_1+2s_2+\cdots+ms_m.
\]

\section{Extended Hamiltonian systems}\label{App:EXtendedHamSys}

Let us indicate the way in which the tools developed in the present article allows us to construct not only restricted Hamiltonian system, but extended system too.

\subsection{Classical variational problems}

Let us turn our attention to extended Hamiltonian systems in the classical case. The idea is to consider $W_{L\eta}$ as the image set of a section $h:\pi_{10}^*\left(\wedge^m_2E/\wedge^m_1E\right)\rightarrow\pi_{10}^*\left(\wedge^m_2E\right)$, given locally as
\[
h\left(x^k,u^A,u^A_k,p_A^k\dif u^A\wedge\eta_k\right)=\left(x^k,u^A,u^A_k,p_A^k\dif u^A\wedge\eta_k+\left(L-p_B^lu^B_l\right)\eta\right).
\]
By mimicking \cite{EcheverriaEnriquez:2005ht}, let $\alpha\in\Omega^1\left(\pi_{10}^*\left(\wedge^m_2E\right)\right)$ be a $1$-form such that
\begin{itemize}
\item $\displaystyle\dif\alpha=0$,
\item $\displaystyle\left.\alpha\right|W_{L\eta}=0$, and
\item in the local coordinates $\left(x^k,u^A,u^A_k,p_B^l,p\right)$ on $\pi_{10}^*\left(\wedge^m_2E\right)$ it verifies
  \[
  \frac{\partial}{\partial p}\lrcorner\alpha=1.
  \]
\end{itemize}
In our framework a form locally performing this magic reads
\[
\alpha:=\dif\left(p+p_B^lu_l^B-L\right);
\]
then we have the following result, which is nothing but \cite[Thm.\ 5, p.\ 14]{EcheverriaEnriquez:2005ht} translated to the present context.
\begin{theorem}\label{Thm:Thm5RomanRoy}
  Let $\imath_{L\eta}:W_{L\eta}\hookrightarrow\pi_{10}^*\left(\wedge^m_2E\right)$ be the natural immersion. If $Z^m\in\pi_{10}^*\left(\wedge^m_2E\right)$ is a decomposable $m$-vector solution of the extended Hamiltonian system
  \begin{equation}\label{Eq:ExtendedSystem}
    Z^m\lrcorner\dif\Theta_{10}=\alpha,\qquad Z^m\lrcorner\eta=1.
  \end{equation}
  then there exists a decomposable solution of the restricted system
  \[
  X^m\lrcorner\dif\Theta_{L\eta}=0,\qquad X^m\lrcorner\eta=1
  \]
  which is $\imath_{L\eta}$-related to $Z^m$.
\end{theorem}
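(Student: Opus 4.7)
The plan is to prove the statement in three steps: establish that the $m$-vector $Z^m$ is tangent to $W_{L\eta}$ along $W_{L\eta}$, use this tangency to define $X^m$ by restriction, and finally verify the restricted equations by pullback. Write $Z^m=Z_1\wedge\cdots\wedge Z_m$, and recall that $W_{L\eta}$ is cut out by the single equation $\phi=0$, where $\phi:=p+p_B^lu_l^B-L$, so that $\alpha=\dif\phi$.

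The first and decisive step is \emph{tangency}: I would show that $Z_i(\phi)=0$ on $W_{L\eta}$ for each $i=1,\dots,m$. The extended equation, evaluated on $W_{L\eta}$, yields
\[
\left(Z^m\lrcorner\dif\Theta_{10}\right)\big|_{W_{L\eta}}=\alpha|_{W_{L\eta}}=0,
\]
by the second defining property of $\alpha$. To extract componentwise tangency from this one needs to exploit both the explicit form $\dif\Theta_{10}=\dif p\wedge\eta+\dif p_B^k\wedge\dif u^B\wedge\eta_k$ and the normalization $Z^m\lrcorner\eta=1$, which forces $Z_i=\partial_i+(\text{vertical terms})$ in adapted coordinates. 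A direct local computation then shows that the coefficient of the equation along the direction $\partial/\partial p$ precisely encodes $Z_i(\phi)$, and the vanishing above implies $Z_i(\phi)|_{W_{L\eta}}=0$.

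The second step is then immediate: since each $Z_i(w)$ lies in $T_wW_{L\eta}$ for every $w\in W_{L\eta}$, there exist unique $X_i(w)\in T_wW_{L\eta}$ with $\left(\imath_{L\eta}\right)_*X_i(w)=Z_i(w)$, and I set $X^m:=X_1\wedge\cdots\wedge X_m$. By construction $X^m$ is decomposable and $\imath_{L\eta}$-related to $Z^m$. The third step verifies the restricted equation via naturality of pullback:
\[
X^m\lrcorner\dif\Theta_{L\eta}=X^m\lrcorner\imath_{L\eta}^*\dif\Theta_{10}=\imath_{L\eta}^*\left(Z^m\lrcorner\dif\Theta_{10}\right)=\imath_{L\eta}^*\alpha=0,
\]
using $\dif\Theta_{L\eta}=\dif\left(\imath_{L\eta}^*\Theta_{10}\right)=\imath_{L\eta}^*\dif\Theta_{10}$ and the second property of $\alpha$. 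The volume normalization $X^m\lrcorner\eta=1$ follows from $Z^m\lrcorner\eta=1$ and $\imath_{L\eta}$-relatedness.

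The main obstacle is the tangency step. The formal manipulation $\alpha|_{W_{L\eta}}=0\Rightarrow(Z^m\lrcorner\dif\Theta_{10})|_{W_{L\eta}}=0$ does \emph{not} by itself imply that the individual components $Z_i$ are tangent to $W_{L\eta}$; extracting componentwise tangency requires either a local-coordinate argument leveraging the multisymplectic (1-nondegenerate) character of $\dif\Theta_{10}$, or a clever coordinate-free argument exploiting that $\partial/\partial p$ is transverse to $W_{L\eta}$ and satisfies $\partial/\partial p\lrcorner\alpha=1$, so that it isolates the $\partial/\partial p$-component of $Z_i$ and forces the remaining components into $TW_{L\eta}$. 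This is precisely the content of the corresponding step in the cited reference \cite{EcheverriaEnriquez:2005ht}, and can be reproduced here verbatim since the forms $\Theta_{10}$ and $\alpha$ have the same local shape.
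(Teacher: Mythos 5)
Your overall architecture (tangency of the factors $Z_i$, then restriction to define $X^m$, then verification by pullback) matches the paper's, and your second and third steps are fine --- indeed you spell out the pullback verification more explicitly than the paper does. But the decisive first step, which you yourself flag as ``the main obstacle,'' is not actually proved in your proposal: you sketch a local-coordinate computation whose key claim (``the coefficient along $\partial/\partial p$ precisely encodes $Z_i(\phi)$'') is asserted rather than carried out, and you then defer to the cited reference. As written, the proposal has a gap exactly at the point where the theorem's content lies.

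The argument you are missing is a one-line algebraic identity that uses only the decomposability of $Z^m$. Since $Z_i$ is a factor of $Z^m=Z_1\wedge\cdots\wedge Z_m$, contracting the extended equation with $Z_i$ gives
\[
Z_i\lrcorner\alpha \;=\; Z_i\lrcorner\left(Z^m\lrcorner\dif\Theta_{10}\right)\;=\;\pm\left(Z_i\wedge Z_1\wedge\cdots\wedge Z_m\right)\lrcorner\dif\Theta_{10}\;=\;0,
\]
because the $(m+1)$-vector on the right has a repeated factor and $\dif\Theta_{10}$ is alternating. This holds \emph{pointwise everywhere}, not merely after restriction to $W_{L\eta}$, which is why your worry that ``$\alpha|_{W_{L\eta}}=0$ does not by itself imply componentwise tangency'' is beside the point: one never needs to extract componentwise information from the restricted equation. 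Tangency then follows because $T_wW_{L\eta}=\kernel\,\alpha_w$ at every $w\in W_{L\eta}$: the inclusion $T_wW_{L\eta}\subseteq\kernel\,\alpha_w$ is the hypothesis $\alpha|_{W_{L\eta}}=0$, and both subspaces have codimension one since $W_{L\eta}$ is the zero set of $\phi=p+p_B^lu_l^B-L$ and $\partial/\partial p\lrcorner\alpha=1$ guarantees $\alpha_w\neq0$. No coordinates, and no appeal to any nondegeneracy of $\dif\Theta_{10}$, are needed. I recommend replacing your sketched coordinate computation with this identity; the rest of your proof can stand as written.
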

\begin{proof}
  Note that $W_{L\eta}$ is a codimension $1$ integral submanifold for the distribution $\cD_\alpha:=\left\langle \alpha\right\rangle ^0$ in $\pi_{10}^*\left(\wedge^m_2E\right)$; now if $Z^m=Z_1\wedge\cdots\wedge Z_m$ we know that $Z^m$ is tangent to an integral submanifold $S$ of $\cD_\alpha$ if and only if every $Z_i$ does, and if $S$ has codimension $1$, this condition translates into
  \[
  \imath_S^*\left(Z_i\lrcorner\alpha\right)=0\qquad\text{for every }i=1,\cdots,m
  \]
  for $\imath_S:S\hookrightarrow\pi_{10}^*\left(\wedge^m_2E\right)$ the canonical immersion. But
  \[
  Z_i\lrcorner\alpha=Z_i\lrcorner Z^m\lrcorner\dif\Theta_{10}=Z_i\wedge\left(Z_1\wedge\cdots\wedge Z_m\right)\lrcorner\dif\Theta_{10}=0,
  \]
  so $Z^m$ is tangent to $W_{L\eta}$; let $X^m=T\imath_{L\eta}Z^m$. Then the restriction of Equation \eqref{Eq:ExtendedSystem} to $W_{L\eta}$ does the magic.
\end{proof}

\subsection{Extended Hamiltonian systems in examples}

Let us see how a generalization of these considerations can be formulated in order to work with some of the examples discussed above.

\subsubsection{Extended Hamiltonian system for Maxwell equations}
Let us return to the results stated in Proposition \ref{Prop:WinMaxwellEqs}. Translated into the corresponding pullback bundle, it means that in order to find an extended formulation of this Hamiltonian system, it will be necessary to take into account several ``Hamilton forms'' instead of the unique form $\alpha$ defined previously. However, we could figure out how to deal with these collection of forms by requesting that an analogous of Theorem \ref{Thm:Thm5RomanRoy} holds in this new context. Let us suppose that we have performed the passage from $\left(\wedge^m_2J^1\pi\right)^V$ to $P_2^*\left(\wedge^m_2\left(T^*M\right)\right)$ and that the set $W_\lambda$ can be written as follows
  \[
  W_\lambda=M_\lambda\cap\mathop{\text{Im}}{h}
  \]
  for $h:P_2^*\left(\wedge^m_2\left(T^*M\right)/\wedge^m_1\left(T^*M\right)\right)\rightarrow P_2^*\left(\wedge^m_2\left(T^*M\right)\right)$ a section. Thus we observe that:
\begin{itemize}
\item The Hamiltonian function $E$ is simply determined by searching of a local expression for the section $h$; in this case we have that
  \[
  h:=p-\left[F^{ij}-\frac{\left(-1\right)^{m+1}}{2}q^{ij}\right]F_{ij}.
  \]
\item The remaining restrictions define $M_\lambda\subset P_2^*\left(\wedge^m_2\left(T^*M\right)\right)$, so they become in this example $q^{ij}+q^{ji}=0$.
\end{itemize}

\subsubsection{The extended Hamiltonian system in Example of Section \ref{Sec:EDSStrong}}

As before, we could use the embedding $W\subset\wedge{}^3_2J^1p$ in order to define the extended version for this system: It is only necessary to find a section $h$ of the bundle $\mu:\wedge{}^3_2J^1p\rightarrow\wedge{}^3_2J^1p/\wedge{}^3_1J^1p$ such that $W=\text{Im}h$. By using the expression \eqref{Eq:RestrictedPhaseSpace} for $W$ we will have that any element $u$ in $W$ could be written as
\begin{multline*}
  u=\left(-a r+b q-c p\right)\eta+\lambda_1 y \dif\,p\wedge\dif\,y\wedge\dif\,z +A\dif\phi\wedge\dif\,x\wedge\dif\,y+B\dif\phi\wedge\dif\,x\wedge\dif\,z +\\
  +C\dif\phi\wedge\dif\,y\wedge\dif\,z +\lambda_2\dif\,q\wedge\dif\,x\wedge\dif\,z+\lambda_1\dif\,r\wedge\dif\,x\wedge\dif\,y
\end{multline*}
for $\eta:=\dif x\wedge\dif y\wedge\dif z$; the coordinates $v=\left(x,y,z,p,q,r,A,B,C,\pi\right)$ if and only if
\begin{multline*}
  v=\pi\eta+\lambda_1 y \dif\,p\wedge\dif\,y\wedge\dif\,z +A\dif\phi\wedge\dif\,x\wedge\dif\,y+B\dif\phi\wedge\dif\,x\wedge\dif\,z +\\
  +C\dif\phi\wedge\dif\,y\wedge\dif\,z +\lambda_2\dif\,q\wedge\dif\,x\wedge\dif\,z+\lambda_1\dif\,r\wedge\dif\,x\wedge\dif\,y
\end{multline*}
means that $\pi=-a r+b q-c p$ defines $h$.
\begin{defc}[Extended Hamiltonian system for $\left(F\rightarrow M,0,\cI_{\text{IC}}\right)$]
  Let $\sigma\in\Omega^1\left(\wedge{}^3_2J^1p\right)$ be
  \[
  \sigma:=\dif\left(-a r+b q-c p-\pi\right).
  \]
  The triple $\left(\wedge{}^3_2J^1p,\dif\Theta,\dif\sigma\right)$ is an \emph{extended Hamiltonian system} in the sense of \cite{EcheverriaEnriquez:2005ht}.
\end{defc}

\bibliographystyle{plainnat}

\end{document}
